\newcommand{\floor}[1]{\left\lfloor #1 \right\rfloor}
\DeclareMathOperator*{\argmax}{arg\,max}
\theoremstyle{plain}
\newtheorem{observation}[theorem]{Observation}
\title{Independent Sets of Dynamic Rectangles: Algorithms and Experiments}
\titlerunning{Independent Set on Dynamic Rectangles: Algorithms and Experiments}%
\author{Sujoy Bhore}{Indian Institute of Science Education and Research, Bhopal, India.}{sujoy.bhore@gmail.com}{0000-0003-0104-1659}{}
\author{Guangping Li}{Algorithms and Complexity Group, TU Wien, Vienna, Austria}{guangping@ac.tuwien.ac.at}{https://orcid.org/0000-0002-7966-076X}{}
\author{Martin N\"ollenburg}{Algorithms and Complexity Group, TU Wien, Vienna, Austria}{noellenburg@ac.tuwien.ac.at}{https://orcid.org/0000-0003-0454-3937}{}
\authorrunning{S. Bhore, G. Li, and M. Nöllenburg}%
\newcommand{\grid}{\emph{grid}\xspace}
\newcommand{\gridk}[1]{\emph{grid-$#1$}\xspace}
\newcommand{\linea}{\emph{line}\xspace}
\newcommand{\misr}{\emph{MIS-ORS}\xspace}
\newcommand{\misg}{\emph{MIS-graph}\xspace}
\newcommand{\ggrid}{\emph{g-grid}\xspace}
\newcommand{\ggridk}[1]{\emph{g-grid-$#1$}\xspace}
\newcommand{\glinea}{\emph{g-line}\xspace}
\newcommand\maxset{\textsc{Max-IS}\xspace}
\newcommand{\indset}{\textsc{MIS}\xspace}
\newcommand{\MAXSAT}{\textsc{MAXSAT}\xspace}
\keywords{Independent Sets, Dynamic Algorithms, Rectangle Intersection Graphs, Approximation Algorithms, Experimental Evaluation}
\begin{document}
	
	\maketitle
	\begin{abstract}
{Map labeling} is a classical problem in cartography and  geographic information systems (GIS) that asks to place labels for area, line, and point features, with the goal to select and place the maximum number of independent, i.e., overlap-free, labels.
A practically interesting case is point labeling with axis-parallel rectangular labels of common size. In a fully dynamic setting, at each time step, either a new label appears or an existing label disappears. 
Then, the challenge is to maintain a maximum cardinality subset of pairwise independent labels with sub-linear update time. 
Motivated by this, we study the maximal independent set (\indset) and maximum independent set (\maxset) problems on fully dynamic (insertion/deletion model) sets of axis-parallel rectangles of two types---(i) uniform height and width and (ii) uniform height and arbitrary width; both settings can be modeled as rectangle intersection graphs.

We present the first deterministic algorithm for maintaining a \indset (and thus a $4$-approximate \maxset) of a dynamic set of uniform rectangles with polylogarithmic update time. This breaks the natural barrier of $\Omega(\Delta)$
update time (where $\Delta$ is the maximum degree in the graph) for \emph{vertex updates} presented by Assadi et al.~(STOC 2018). We continue by investigating \maxset and provide a series of deterministic dynamic approximation schemes. For uniform rectangles, we first give an algorithm that maintains a $4$-approximate \maxset with $O(1)$ update time. In a subsequent algorithm, we establish the trade-off between approximation quality $2(1+\frac{1}{k})$ and update time 
$O(k^2\log n)$, for $k\in \mathbb{N}$. We conclude with an algorithm that maintains a $2$-approximate \maxset for dynamic sets of unit-height and arbitrary-width rectangles with $O(\log^2 n + \omega\log n)$ update time, where $\omega$ is the maximum size of an independent set of rectangles stabbed by any horizontal line. We have implemented our algorithms and report the results of an experimental comparison exploring the trade-off between solution quality and update time for synthetic and real-world map labeling instances. 
We made several major observations in our empirical study: 1. The original approximations are well above their respective worst-case ratios. 2. In comparison with the static approaches, the dynamic approaches show a significant speed-up in practice.
3. The approximation algorithms show their predicted relative behavior. The better the solution quality, the worse the update times.
4. A simple greedy augmentation to the approximate solutions of the algorithms boost the solution sizes significantly in practice.
\end{abstract}

\begin{CCSXML}
	<ccs2012>
	<concept>
	<concept_id>10003752.10010061.10010063</concept_id>
	<concept_desc>Theory of computation~Computational geometry</concept_desc>
	<concept_significance>500</concept_significance>
	</concept>
	<concept>
	<concept_id>10003752.10003809.10010031</concept_id>
	<concept_desc>Theory of computation~Data structures design and analysis</concept_desc>
	<concept_significance>500</concept_significance>
	</concept>
	<concept>
	<concept_id>10003752.10003809.10003635.10010038</concept_id>
	<concept_desc>Theory of computation~Dynamic graph algorithms</concept_desc>
	<concept_significance>300</concept_significance>
	</concept>
	</ccs2012>
\end{CCSXML}

\ccsdesc[500]{Theory of computation~Computational geometry}
\ccsdesc[100]{Theory of computation~Data structures design and analysis}
\ccsdesc[300]{Theory of computation~Dynamic graph algorithms}

\keywords{Independent Sets, Dynamic Algorithms, Rectangle Intersection Graphs, Approximation Algorithms, Experimental Evaluation}

\maketitle
\section{Introduction}
\textsc{Map Labeling} is a classical problem in cartography and geographic information systems (GIS), that has received significant attention in the past few decades and is concerned with selecting and positioning labels on a map for area, line, and point features. The focus in the computational geometry community has been on labeling point features~\cite{agarwal1998label,fw-ppwalm-91, ww-pla-95, DBLP:conf/compgeom/KreveldSW98}. The labels are typically modeled as the bounding boxes of short names, which correspond precisely to unit height, but arbitrary width rectangles; alternatively, labels can be standardized icons or symbols, which correspond to rectangles of uniform size. In map labeling, a key task is in fact to select an independent (i.e., overlap-free) set of labels from a given set of candidate labels. Commonly the optimization goal is related to maximizing the number of labels. Given a set $\mathcal{R}$ of rectangular labels, \textsc{Map Labeling} is essentially equivalent to the problem of finding a maximum independent set in the intersection graph %
induced by $\mathcal{R}$.  

The independent set problem  is a fundamental graph problem with a wide range of applications. Given a graph $G=(V,E)$, a set of vertices $M\subset V$ is \emph{independent} if no two vertices in $M$ are adjacent in $G$. A \emph{maximal independent set} (MIS) is an independent set that is not a proper subset of any other independent set. A \emph{maximum independent set} (\maxset) is a maximum cardinality independent set. 
While \maxset is one of~Karp's 21 classical \NP-complete problems~\cite{k-racp-72}, computing an \indset can easily be done by a simple greedy algorithm in $O(|E|)$ time. The \indset problem has been studied in the context of several other prominent problems, e.g., graph coloring~\cite{linial1987distributive}, maximum matching~\cite{hopcroft1973n}, and vertex cover~\cite{nguyen2008constant}. On the other hand, \maxset serves as a natural model for many real-life optimization problems that arise in the fields of cartography, scheduling, computer graphics, information retrieval, etc.; see~\cite{agarwal1998label, vanBevern2015, DBLP:journals/tog/SanderNCH08, pardalos1994maximum}. 

Stronger results for independent set problems in geometric intersection graphs are known in comparison to general graphs. 
For instance, it is known that \maxset on general graphs cannot be approximated better than $|V|^{1-\epsilon}$ in polynomial time for any $\epsilon>0$ unless $\NP= P$~\cite{DBLP:journals/toc/Zuckerman07}.
In contrast, a randomized polynomial-time algorithm exists that computes for rectangle intersection graphs an $O(\log \log n)$-approximate solution to \maxset with high probability~\cite{chalermsook2009maximum}, as well as QPTASs~\cite{aw-asmwir-13,ce-amir-16}. 
Very recently, the constant factor approximation schemes have been developed for the \maxset on rectangle intersection graphs; see~\cite{DBLP:journals/corr/abs-2101-00326, DBLP:journals/corr/abs-2106-00623}.
The \maxset problem is already \NP-hard on unit square intersection graphs~\cite{DBLP:journals/ipl/FowlerPT81}, however, it admits a polynomial-time approximation scheme (\textsf{PTAS}) for unit square intersection graphs~\cite{erlebach2005polynomial} and more generally for pseudo disks~\cite{chan2012approximation}. 
Moreover, for rectangles with either uniform size or at least uniform height and bounded aspect ratio, the size of an \indset is not arbitrarily worse than the size of a \maxset. For instance, any \indset of a set of uniform rectangles is a $4$-approximate solution to the \maxset problem, since each rectangle can have at most four independent neighbors.

Past research has mostly considered static label sets in static  maps~\cite{agarwal1998label,fw-ppwalm-91,ww-pla-95,DBLP:conf/compgeom/KreveldSW98} and in dynamic maps allowing zooming~\cite{bnpw-oarcd-10} or rotations~\cite{gnr-clrm-16}, but not fully dynamic label sets with insertions and deletions of labels. 
Recently, Klute et al.~\cite{kllns-esl-19} proposed a framework for semi-automatic label placement, where domain experts can interactively insert and delete labels. %
In their setting an initially computed large independent set of labels can be interactively modified by a cartographer, who can easily take context information and soft criteria such as interactions with the background map or surrounding labels into account.
Standard map labeling algorithms typically do not handle such aspects well~\cite{DBLP:conf/compgeom/FormannW91, DBLP:journals/cartographica/RylovR14}.
Based on these modifications (such as deletion, forced selection, translation, or resizing), the solution is updated by a dynamic algorithm while adhering to the new constraints.
Another scenario for dynamic labels are maps, in which features and labels (dis-)appear over time, e.g., based on a stream of geotagged, uniform-size photos posted on social media or, more generally, maps with labels of dynamic spatio-temporal point sets~\cite{Gabriel2015}. 
For instance, a geo-located event that happens at time $t$ triggers the availability of a new label for a certain period of time, after which it vanishes again. Examples beyond social media are reports of earthquakes, forest fires, or disease incidences.
While traditional geographic map labeling deals with small and relatively static label sets, labeling of social network data, especially the ones used in anomaly detection and visual analytics usually deal with vast and dynamic label set; see~\cite{DBLP:conf/apvis/ThomBKWE12, DBLP:conf/ieeevast/MacEachrenJRPSMZB11}. 
Furthermore, note that these applications often run on devices with limited computational resources, e.g., mobile devices. Therefore, it is desirable to 
design dynamic algorithms that can handle the changes in an efficient and robust manner. 
Motivated by this, we study the independent set problem for dynamic sets of axis-parallel rectangles of two types:
\begin{itemize}
    \item rectangles of uniform height and width
    \item rectangles of uniform height and arbitrary width
\end{itemize}
We consider fully dynamic algorithms for maintaining independent sets under insertions and deletions of rectangles, i.e., vertex insertions and deletions in the corresponding dynamic rectangle intersection graph.

\emph{Dynamic} graphs %
are subject to discrete changes over time, i.e., insertions or deletions of vertices or edges~\cite{EppGalIta-ATCH-99}. A dynamic graph algorithm solves a computational problem, such as the independent set problem, on a dynamic graph by updating efficiently the previous solution as the graph changes over time, rather than recomputing it from scratch. 
A dynamic graph algorithm is called \emph{fully dynamic} if it allows both insertions and deletions, and \emph{partially dynamic} if only insertions or only deletions are allowed. While general dynamic independent set algorithms can obviously 
also 
be applied to %
rectangle intersection graphs, our goal is to exploit their geometric properties %
to obtain more efficient algorithms. %

\subparagraph*{\textbf{Related Work.}}
There has been a lot of work on dynamic graph algorithms in the last decade and dynamic algorithms still receive considerable attention in theoretical computer science. We point out some of these works, e.g., on spanners~\cite{bernstein2019deamortization}, vertex cover~\cite{bhattacharya2017deterministic}, set cover~\cite{abboud2019dynamic}, graph coloring~\cite{bhattacharya2018dynamic}, and maximal matching~\cite{gamlath2019online}.
In particular, the maximal independent set problem on dynamic graphs with edge updates has attracted significant attention in the last two years~\cite{DBLP:conf/stoc/AssadiOSS18, assadi2019fully, behnezhad2019fully, chechik2019fully, DBLP:conf/icalp/CormodeDK19}.
For vertex insertion/deletion, an {MIS} can be maintained dynamically in $O(\Delta)$ update time by using the recent algorithm of Assadi et al.~\cite{DBLP:conf/stoc/AssadiOSS18}, where $\Delta$ is the maximum degree of the intersection graph.

Recently, Henzinger et al. \cite{henzinger_et_al:LIPIcs:2020:12209} studied the \maxset problem for intervals, hypercubes and hyperrectangles in $d$ dimensions, with special assumptions. They assumed that the objects are axis-parallel and contained in the space $[0, N]^d$; the value of $N$ is given in advance, and each edge of an input object has length at least $1$ and at most $N$. They have presented dynamic approximation schemes with the update time $polylog(n,N)$, where $n$ is the instance size. We note that in general, $N$ might
be exponential in $n$ or even unbounded, thus those bounds are not sublinear in $n$ in the
general case. Subsequently, Bhore et al.~\cite{DBLP:journals/corr/abs-2007-08643} 
designed a dynamic approximation scheme for dynamic intervals that maintains a $(1+\epsilon)$-approximate maximum independent set in $O_{\epsilon}(\log n)$ update time, where
$\epsilon>0$ is any positive constant and the notation $O_\epsilon$ hides terms depending only on $\epsilon$. Gavruskin et al.~\cite{gavruskin2015dynamic} studied the \maxset problem for dynamic proper intervals (intervals cannot contain one another), and showed how to maintain a \maxset with polylogarithmic update time.

There is a long history of the empirical study of map-labeling problems. This chain of research started with the work of Christensen et al.~\cite{DBLP:journals/tog/ChristensenMS95}. They proposed two methods: One based on a discrete form of gradient descent and the other on simulated annealing. An alternative approach was presented by Wagner and Wolff~\cite{DBLP:journals/comgeo/WelzlWW97} for the labeling problem, who used the sample data in the experimental evaluation that consists of three different classes of random problems and a selection of problems arising in the production of groundwater quality maps by the authorities of the City of Munich.
Nascimento and Eades~\cite{DBLP:journals/vlc/NascimentoE08} proposed a practically motivated framework, called \emph{user hints}, and proposed an interactive map-labeling system based on this along with its evaluation. This type of user-interactive approach was empirically 
studied by Klute et al.~\cite{kllns-esl-19}. Moreover, other aspects of dynamic map labeling, e.g., rotation, zooming, have been studied over the years; see~\cite{DBLP:journals/jea/GemsaNR16, DBLP:journals/tvcg/BeenDY06, DBLP:journals/comgeo/BeenNPW10}. De Berg and Gerrits~\cite{DBLP:conf/esa/BergG13} developed and experimentally evaluated a heuristic for labeling moving points on static maps.

\subparagraph*{\textbf{Results and Organization.}}
We study {MIS} and \maxset problems for dynamic sets of $O(n)$ axis-parallel rectangles of two types: (i) congruent rectangles of uniform height and width %
and (ii) rectangles of uniform height and arbitrary width. 

In this paper we design and implement algorithms for dynamic \indset %
and \maxset that demonstrate the trade-off between update time and approximation factor, both from a theoretical perspective and in an experimental evaluation. In contrast to the recent dynamic \indset algorithms, which are randomized~\cite{DBLP:conf/stoc/AssadiOSS18, assadi2019fully, behnezhad2019fully, chechik2019fully}, our algorithms are deterministic. %

In Section~\ref{mis-log} we
present an algorithm that maintains an \indset of a dynamic set of unit squares in $O(\log n \log\log n)$ update time or, alternatively, with sub-logarithmic amortized update time, improving the best-known update time $\Omega(\Delta)$ by Assadi et al.~\cite{DBLP:conf/stoc/AssadiOSS18}, where $\Delta$ is the maximum degree of the intersection graph. A major, but generally unavoidable bottleneck of that algorithm is that the entire graph is stored explicitly, and thus insertions/deletions of vertices take 
$\Omega(\Delta)$ time. 
We use structural geometric properties of the unit squares along with a dynamic orthogonal range searching data structure to bypass the explicit intersection graph and overcome this bottleneck. 

In Section~\ref{MIS-const}, we study the \maxset problem. %
For dynamic unit squares, we give an algorithm that %
maintains a $4$-approximate \maxset with $O(1)$ update time.
We  generalize this algorithm and improve the approximation factor to $2(1+\frac{1}{k})$, which increases the update time to $O(k^2\log n)$. %
We conclude with an algorithm that %
maintains a $2$-approximate \maxset for a dynamic set of unit-height and arbitrary-width rectangles (in fact, for a dynamic interval graph, which is of independent interest) with  $O(\log^2 n + \omega\log n)$ update time, where $\omega$ is 
the maximum size of an independent set of rectangles stabbed by any horizontal line.

Finally, Section~\ref{exp} provides an experimental evaluation of the proposed \maxset approximation algorithms on synthetic and real-world map labeling data sets. The experiments explore the trade-off between solution size and update time, as well as the speed-up of the dynamic algorithms over their static counterparts. See the supplemental material\footnote{Source code and the benchmark data are available on \url{https://dyna-mis.github.io/dynaMIS/}.} for source code and benchmark data. %

\section{Model and Notation}\label{sec:model}
For every $N \in \mathbb N$, $[N]$ denotes the set $\{1, 2, \ldots, N\}$.
Let $R$ be a dynamic set of axis-parallel, unit-height rectangles in the plane, which is dynamically updated by a sequence of $N \in \mathbb N$ insertions and deletions.
Let $R_i$ denote the set of rectangles at step $i \in [N]$ and let $n=\max \{|R_i| \mid i \in [N] \}$ be the maximum number of rectangles over all steps.
The rectangle intersection graph defined by $R_i$ at time step $i$ is denoted as $G_i=(R_i,E_i)$, where two rectangles $r,r' \in R_i$ are connected by an edge $\{r,r'\} \in E_i$ if and only if $r \cap r' \ne \emptyset$.
We use $M_i$ to denote a maximal independent set in $G_i$, and $OPT_i$ to denote a maximum independent set in $G_i$.
For a graph $G=(V,E)$ and a vertex $v\in V$, let $N(v)$ denote the set of neighbors of $v$ in $G$. %
This notation also extends to any subset $U \subseteq V$ by defining  $N(U) = \bigcup_{v\in U} N(v)$. %
We use $\deg(v)$ to denote the degree of a vertex $v \in V$. %
For any vertex $v\in V$, let $N^{r}(v)$ %
be the $r$-neighborhood %
of $v$, i.e., the set of vertices that are within distance at most $r$ from $v$ (excluding $v$).

We study the independent set problem for dynamic sets of axis-parallel rectangles of two types---(i) unit rectangles and (ii) rectangles of unit height and arbitrary width.
In this work, we may assume that the unit rectangles are unit squares.
If the rectangles of $R$ are of uniform height and width, we can use an affine transformation to map $R$ to a set of unit squares $S$ and map $R_i$ to unit square set $S_i$ for $i \in [N]$.  
We further define the set $C_i$ be the corresponding centers of squares of $S_i$.

\section{Algorithms for Dynamic Maximal Independent Set}
\label{mis-log}
In this section, we study the \indset problem for dynamic uniform rectangles. 
As stated before we can assume w.l.o.g.\ that the rectangles are unit squares. 
We design an algorithm that maintains a {MIS} for a dynamic set of $O(n)$ unit squares in polylogarithmic update time. 
Assadi et al.~\cite{DBLP:conf/stoc/AssadiOSS18} presented an algorithm for maintaining a {MIS} on general dynamic graphs with $O(\Delta)$ update time, where $\Delta$ is the maximum degree in the graph. 
In the worst case, however, that algorithm takes $O(n)$ update time. 
In fact, it seems unavoidable for an algorithm that explicitly maintains the (intersection) graph to perform an \indset update in less than $\Omega(\deg(v))$ time for an insertion/deletion of a vertex $v$.
In contrast, our proposed algorithm in this section does not explicitly maintain the intersection graph $G_i=(\mathcal{S}_i,E_i)$ (for any $i\in [N]$), but rather only the set of squares $\mathcal S_i$ in a suitable dynamic geometric data structure. 
For the ease of explanation, however, we do use graph terms at times. 

Let $i \in [N]$ be any time point in the sequence of updates. 
For each square $s_v\in \mathcal{S}_i$, let $s^a_v$ be a square of side length $a$ concentric with $s_v$. 
Further, let $M_i$ denote the \indset that we compute for $G_i=(\mathcal{S}_i, E_i)$, and let $\mathcal{C}(M_i)\subseteq \mathcal{C}_i$ be their corresponding square centers. 
We maintain two fully dynamic orthogonal range searching data structures, which maintain a set of points dynamically and support efficient deletions and insertions of points, throughout: 
(i) a dynamic range tree $T(\mathcal{C}_i)$ for the entire point set $\mathcal{C}_i$ and (ii) a dynamic range tree $T(\mathcal{C}(M_i))$ for the point set $\mathcal{C}(M_i)$ corresponding to the centers of $M_i$.
They can be implemented with dynamic fractional cascading~\cite{mn-dfc-90}, which yields $O(\log n \log\log n)$ update time and $O(k + \log n \log \log n)$ query time for reporting $k$ points.

We compute the initial \indset $M_1$ for $G_1=(\mathcal{S}_1, E_1)$ by using a simple linear-time greedy algorithm. 
First we initialize the range tree $T(\mathcal{C}_1)$.
Then we iterate through the set $\mathcal S_1$ as long as it is not empty, select a square $s_v$ for $M_1$ and insert its center into $T(\mathcal{C}(M_1))$, find its neighbors $N(s_v)$ by a range query in $T(\mathcal{C}_1)$ with the concentric square $s^2_v$, and delete $N(s_v)$ from $\mathcal S_1$.
It is clear that once this process terminates, $M_1$ is an MIS.%

When we move in the next step from $G_i=(\mathcal{S}_i,E_i)$ to $G_{i+1}=(\mathcal{S}_{i+1},E_{i+1})$, 
either a square is inserted into $\mathcal{S}_i$ or deleted from $\mathcal{S}_i$. 
Let $s_x$ be the square that is inserted or deleted. 
\begin{figure}[tb]
\centering
\includegraphics[page=3]{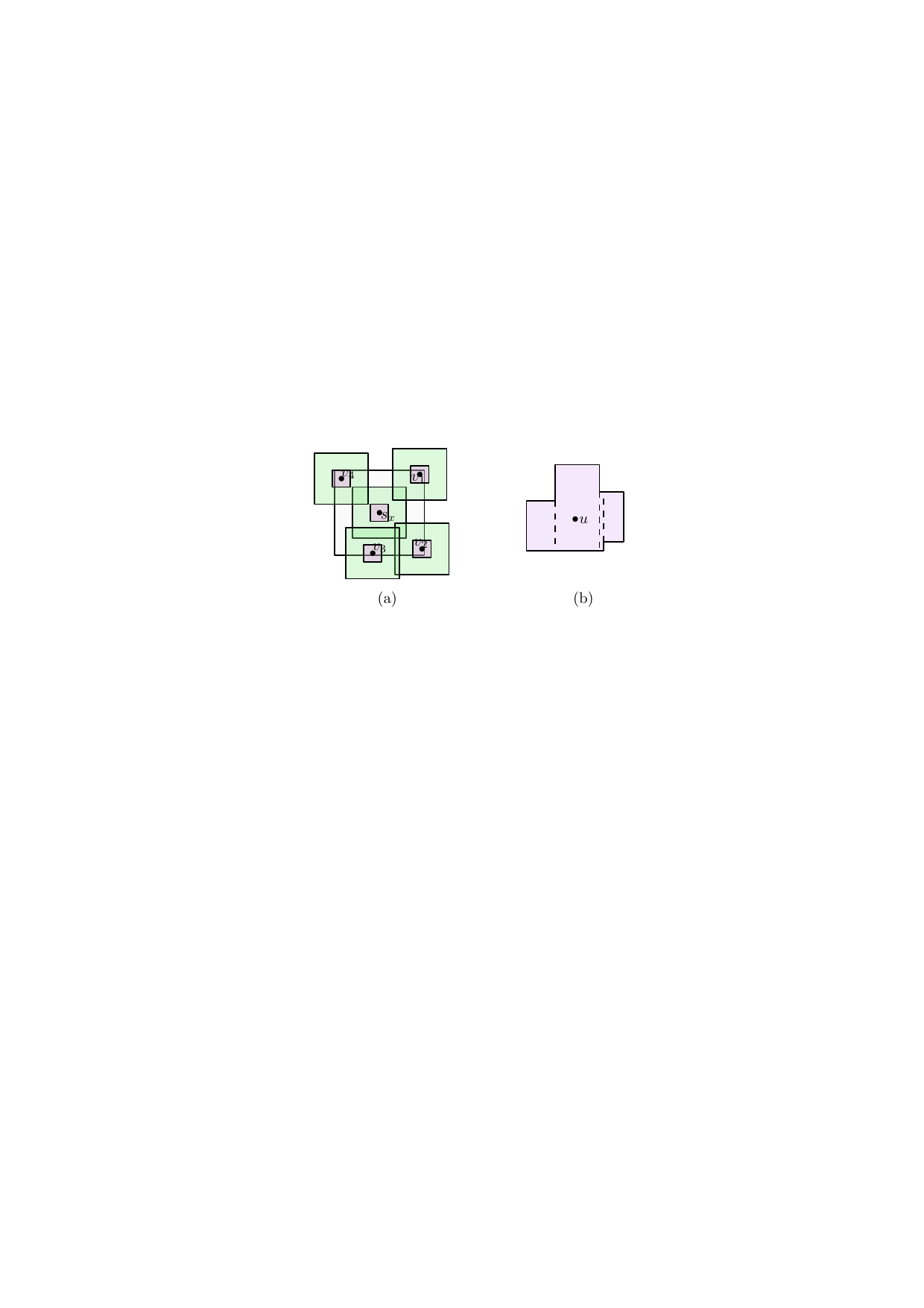}
\caption{Example for the deletion of a square $s_x$. (a) Square~$s_x$, its neighborhood with centers in $s_x^2$, its 2-neighborhood with centers in $s_x^4$, and the polygon $\mathcal P_x$. (b) Vertical slab partition of $\mathcal P_x$.}
\label{fig:rectilinear}
\end{figure}
\begin{lemma} \label{lem:4neighbors}
Given an arbitrary set of pairwise overlap-free unit squares $S$ and an arbitrary square $r$ of side length $2$, $r$ contains at most four centers of unit squares of $S$.
\end{lemma}

\begin{proof}
We split $r$ equally into four parts where each quarter corresponds to a unit square. 
Consider two squares $s_1,s_2\in S$.
If their centers lie in the same quarter, then they must overlap. 
Hence, by a simple packing argument the claim holds. 
\end{proof}

\medskip
\noindent\underline{\textsc{Insertion:}} 
When we insert a square $s_x$ into $\mathcal{S}_i$ to obtain $\mathcal{S}_{i+1}$, we do the following operations. 
First, we obtain $T(\mathcal{C}_{i+1})$ by inserting the center of $s_x$ into $T(\mathcal{C}_i)$. Next, we have to detect whether $s_x$ can be included in $M_{i+1}$.
If there exists a square $s_u$ from $M_i$ intersecting $s_x$, we should not include $s_x$; otherwise we will add it to the MIS.
To check this, we search with the range $s^2_x$ in $T(\mathcal{C}(M_i))$. 
By Lemma~\ref{lem:4neighbors}, we know that no more than four points (the centers of four independent squares) of $\mathcal{C}(M_i)$ can be in the range $s^2_x$.
If the query returns such a point, then $s_x$ would intersect with another square in $M_i$ and we set $M_{i+1} = M_i$.
Otherwise, we add $s_x$ to the current solution $M_i$ and to the tree $T(\mathcal{C}(M_i))$ to obtain 
$M_{i+1}$ and $T(\mathcal{C}(M_{i+1}))$.
%
%

\medskip
\noindent\underline{\textsc{Deletion:}} 
When we delete a square $s_x$ from $\mathcal{S}_i$, %
it is possible that $s_x \in M_i$.
In this case we may have to add squares from $N(s_x)$ into $M_{i+1}$ to keep it maximal. 
Since any square can have at most four independent neighbors, we can add in this step up to four squares to~$M_{i+1}$. %

First, we check if $s_x \in M_i$. 
If not, then we simply delete $s_x$ from $T(\mathcal{C}_i)$ to get $T(\mathcal{C}_{i+1})$ and set $M_{i+1}=M_i$.
Otherwise, we delete again $s_x$ from $T(\mathcal{C}_i)$ and also from $T(\mathcal{C}(M_i))$.
In order to detect which neighbors of $s_x$ can be added to $M_i$, we use suitable queries in the data structures $T(\mathcal{C}(M_i))$ and $T(\mathcal{C}_i)$.
Figure~\ref{fig:rectilinear}a illustrates the next observations.
The centers of all neighbors in $N(s_x)$ must be contained in the square $s^2_x$.
But some of these neighbors may intersect other squares in $M_i$.
In fact, these squares would by definition belong to the 2-neighborhood, i.e., be in the set $Q_x = N^2(s_x) \cap M_i$.
We can obtain $Q_x$ by querying $T(\mathcal{C}(M_i))$ with the range $s_x^4$.
Since $s_x \in M_i$, we know that no center point of squares in $M_i$ lie in $s^2_x$. Hence, the center points of the squares in $Q_x$ lie in the annulus $s^4_x - s^2_x$.
A simple packing argument (similar to the proof of Lemma~\ref{lem:4neighbors}) implies that $|Q_x| \le 12$ and therefore querying $T(\mathcal{C}(M_i))$ will return at most $12$ points.

Next we define the rectilinear polygon $\mathcal{P}_x = s^2_x - \bigcup_{s_y \in Q_x} s_y^2$, which contains all possible center points of squares that are neighbors of $s_x$ but do not intersect any square $s_y \in M_i \setminus \{s_x\}$.  %

\begin{observation}
  \label{rect-poly}
The polygon $\mathcal{P}_x$ has at most $28$ corners.
\end{observation}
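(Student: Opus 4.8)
The plan is to place $s_x$ at the origin so that $s^2_x=[-1,1]^2$ and $s^4_x=[-2,2]^2$, and to analyze $\mathcal{P}_x$ through the rectangles $R_y := s_y^2\cap s^2_x$ for $s_y\in Q_x$, noting that $\mathcal{P}_x = s^2_x\setminus\bigcup_{s_y\in Q_x} R_y$. The decisive structural fact I would establish first is that, because every $s_y^2$ has the \emph{same} side length $2$ as $s^2_x$ while its center $c_y=(c_{y,1},c_{y,2})$ lies in the annulus $s^4_x-s^2_x$ (so $\max(|c_{y,1}|,|c_{y,2}|)>1$), each $R_y$ must reach at least one of the top/bottom edges of $s^2_x$ and at least one of the left/right edges. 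Concretely, if $|c_{y,2}|>1$ then the $y$-span of $R_y$ hits $y=1$ or $y=-1$; and if $|c_{y,2}|\le 1$ then $|c_{y,1}|>1$ forces the same conclusion by a one-line case distinction on the sign of $c_{y,2}$ (and symmetrically in $x$). Thus no $R_y$ is a ``floating'' notch in the middle of an edge or interior to $s^2_x$: each is anchored to the boundary, either cutting a corner or spanning a full slab.

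Next I would use this to pin down the global shape of $\mathcal{P}_x$, which is exactly the vertical-slab picture of Figure~\ref{fig:rectilinear}b. Cutting $[-1,1]^2$ by the vertical lines $x=c_{y,1}\pm1$ that fall inside $(-1,1)$, observe that within any resulting slab each contributing $R_y$ covers a $y$-interval that contains $y=1$ or $y=-1$; hence the removed set restricted to a slab is (one interval hanging from the top) $\cup$ (one interval rising from the bottom), so what survives is a single $y$-interval. Therefore $\mathcal{P}_x$ is $y$-monotone: it is the region $\{(x,y): b(x)\le y\le t(x)\}$ between a piecewise-constant ``top'' function $t$ and ``bottom'' function $b$.

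With this monotone description the corner count becomes an edge count: for a rectilinear polygon the number of corners equals the number of edges, and the numbers of vertical and horizontal edges coincide, so it suffices to bound the vertical edges. The vertical edges of $\mathcal{P}_x$ are the two end edges on $x=\pm1$ together with the jumps of $t$ and of $b$. The key charging step is that each $s_y\in Q_x$ has at most one vertical side interior to $s^2_x$ (because $R_y$ already reaches the left or the right edge) and is anchored to only one of top/bottom, so it can induce at most one jump, of $t$ or of $b$; distinct jump locations are charged to disjoint subsets of $Q_x$. Hence the total number of jumps is at most $|Q_x|\le 12$, giving at most $12+2=14$ vertical edges and therefore at most $2\cdot 14=28$ corners.

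The main obstacle is the union $\bigcup_{s_y\in Q_x} R_y$: a priori, overlapping removed rectangles could create extra ``crossing'' corners and push the count past $28$. This is precisely what the two reaching-the-boundary claims rule out --- they force the monotone structure that turns every overlap into a harmless merge rather than a plus-shaped crossing --- and it is where the independence of $Q_x\subseteq M_i$ enters (it is what guarantees $|Q_x|\le 12$ in the first place). A secondary point to treat carefully is that a full-height slab may truncate the $x$-domain or, in degenerate configurations, split $\mathcal{P}_x$ into several monotone pieces; I would verify that in each such case the edge/charging bound only improves, so that $28$ remains a valid upper bound.
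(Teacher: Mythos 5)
Your proof is correct, and it takes a genuinely different route from the paper's. The paper argues incrementally: since all squares have equal side length, subtracting each $s_y^2$ from $s_x^2$ creates at most two new corners, giving $4 + 2\cdot 12 = 28$. You instead prove a global structural lemma --- every intersection $R_y = s_y^2 \cap s_x^2$ is anchored to a horizontal and a vertical side of $s_x^2$ (note this follows from the equal side lengths alone; the annulus condition on the centers is only needed for $|Q_x|\le 12$, as you correctly observe) --- and deduce that $\mathcal{P}_x$ is $y$-monotone with piecewise-constant top and bottom, then bound corners by charging each interior vertical edge injectively to a square of $Q_x$, giving $2(12+2)=28$. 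The trade-off: the paper's count is a one-liner, but it silently assumes that subtracting $s_y^2$ from the \emph{already carved} polygon (not merely from $s_x^2$) still adds only two corners; when removed rectangles overlap, this is exactly where a plus-shaped notch could in principle arise, and your anchoring-plus-monotonicity lemma is what rules it out explicitly. Your argument is therefore longer but closes that gap rigorously, handles the degenerate truncation/splitting cases (where, as you indicate and as is routine to verify, interior end edges are still charged to the unique interior vertical side of some $R_y$, so the bound only improves), and as a by-product justifies the vertical slab partition of $\mathcal{P}_x$ used for the range queries in Figure~\ref{fig:rectilinear}b, which the paper's proof of Observation~\ref{rect-poly} does not itself establish.
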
%
\begin{proof}
	We know that $Q_x$ contains at most 12 squares $s_y$, for each of which we subtract $s_y^2$ from~$s_x^2$. 
	Since all squares have the same side length, at most two new corners can be created in $\mathcal{P}_x$ when subtracting a square $s_y^2$. Initially $\mathcal{P}_x$ had four corners, which yields the claimed bound of at most $28$ corners.
\end{proof}

Next we want to query $T(\mathcal{C}_i)$ with the range $\mathcal{P}_x$, which we do by vertically partitioning $\mathcal{P}_x$ into rectangular slabs $R_1, \dots, R_c$ for some $c \le 28$ (see Figure~\ref{fig:rectilinear}b).
For each slab $R_j$, where $1 \le j \le c$, we perform a range query in $T(\mathcal{C}_i)$. If a center $p$ is returned, we can add the corresponding square $s_p$ into $M_{i+1}$, and $p$ into $T(\mathcal{C}(M_i))$ to obtain $T(\mathcal{C}(M_{i+1}))$.
Moreover, we have to update $\mathcal{P}_x \gets \mathcal{P}_x - s_p$, refine the slab partition and continue querying $T(\mathcal{C}_i)$ with the slabs of $\mathcal{P}_x$.
Observe that after cutting the square $s_p$ from the rectilinear polygon $\mathcal{P}_x$, the number of sides of the remaining region of $\mathcal{P}_x$ can be increased by at most $4$.
We know that the deleted square $s_x$ can have at most four independent neighbors. 
So after adding at most four new squares to $M_{i+1}$ we know that there is no center point in $C_i$ in the range $\mathcal{P}_x $ and we can stop searching.

\begin{lemma}\label{lem:rangei}
	The  set $M_i$ is a maximal independent set of $G_i=(\mathcal{S}_i,E_i)$ for each step $i \in [N]$. 
\end{lemma}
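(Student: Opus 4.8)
The plan is to prove the statement by induction on the time step $i \in [N]$. The base case is immediate: the initial set $M_1$ is computed by the greedy routine described above, which repeatedly selects a square $s_v$, marks all of its neighbors (found by the range query with $s_v^2$ in $T(\mathcal{C}_1)$) as excluded, and continues until $\mathcal{S}_1$ is exhausted. By the standard argument for greedy MIS, no selected squares are adjacent and every unselected square was excluded because it is adjacent to a selected one, so $M_1$ is a maximal independent set of $G_1$.

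For the inductive step, I would assume that $M_i$ is a maximal independent set of $G_i$ and verify that each of the two update operations restores both invariants---independence and maximality---for $M_{i+1}$ in $G_{i+1}$. The insertion case is the easier one. Since inserting the vertex $s_x$ never adds an edge between two previously existing squares, $M_i$ remains independent in $G_{i+1}$; the only question is whether $s_x$ itself violates maximality. The range query with $s_x^2$ in $T(\mathcal{C}(M_i))$ detects precisely whether some square of $M_i$ intersects $s_x$. If it does, $s_x$ is dominated and setting $M_{i+1} = M_i$ keeps the set maximal; if not, $s_x$ is independent from every square of $M_i$, so $M_{i+1} = M_i \cup \{s_x\}$ is independent, and it stays maximal because every other vertex retains its dominating neighbor from $M_i$.

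The deletion case requires more care, and I expect the maximality argument when $s_x \in M_i$ to be the main obstacle. If $s_x \notin M_i$, then removing it deletes no vertex of the independent set and destroys no edge needed for domination, so $M_{i+1} = M_i$ is still maximal. When $s_x \in M_i$, removing it may leave undominated exactly those neighbors of $s_x$ whose only selected neighbor was $s_x$, i.e., the squares whose centers lie in the rectilinear region $\mathcal{P}_x = s_x^2 - \bigcup_{s_y \in Q_x} s_y^2$. Here I would invoke the structural facts already established: $Q_x = N^2(s_x) \cap M_i$ is recovered correctly by the range query with $s_x^4$, so $\mathcal{P}_x$ indeed captures all center positions of neighbors of $s_x$ that are independent from $M_i \setminus \{s_x\}$, and by Observation~\ref{rect-poly} the polygon $\mathcal{P}_x$ has at most $28$ corners, so its vertical slab decomposition consists of boundedly many rectangles.

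It then remains to argue that the iterative slab-querying procedure restores maximality. Each square added to $M_{i+1}$ has its center in the current $\mathcal{P}_x$, hence is independent from $M_i \setminus \{s_x\}$ by construction, and after each addition we subtract the newly chosen square from $\mathcal{P}_x$, so subsequently selected squares are independent from it as well; thus the augmented set stays independent throughout. For maximality, I would observe that the loop terminates only when $\mathcal{P}_x = \emptyset$, which the packing bound guarantees after at most four insertions since $s_x$ has at most four independent neighbors. At termination, every center that lay in the original $\mathcal{P}_x$ is either the center of a newly selected square or has been removed because it intersects one of them; consequently every neighbor of $s_x$ that lost its dominator is now itself in $M_{i+1}$ or adjacent to a freshly added square of $M_{i+1}$, while all remaining vertices keep their old dominators. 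Hence $M_{i+1}$ is again a maximal independent set, which closes the induction.
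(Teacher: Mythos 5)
Your proof is correct and follows essentially the same route as the paper's: induction over the update sequence, the immediate insertion and non-MIS deletion cases, and, for a deleted MIS square, the observation that any square still addable to the solution must have its center in $\mathcal{P}_x$ and would therefore be found by the slab queries (the paper phrases this last step as a contradiction, you as a direct invariant --- the same argument contrapositively). One pedantic point you inherit from the paper's notation: to guarantee that successively added squares are mutually independent, the update must subtract the concentric side-$2$ square $s_p^2$ (not $s_p$ itself) from $\mathcal{P}_x$, since a center in $s_p^2 \setminus s_p$ still yields a square intersecting $s_p$.
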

\begin{proof}
	The correctness proof is inductive. 
	By construction the initial set $M_1$ is an \indset for $G_1$.
	Let us consider some step $i>1$ and assume by induction that $M_{i-1}$ is an \indset for $G_{i-1}$.
	If a new square $s_x$ is inserted in step $i$, we add it to $M_i$ if it does not intersect any other square in $M_{i-1}$; otherwise we keep $M_{i-1}$. In either case $M_i$ is an \indset of $G_i$.
	If a square $s_x$ is deleted in step $i$ and $s_x \not\in M_{i-1}$, then $M_i = M_{i-1}$ is an \indset of $G_i$.
	Finally, let $s_x \in M_{i-1}$. Assume for contradiction that $M_i$ is not an MIS, i.e., some square $s_q$ could be added to $M_i$. Since $M_{i-1}$ was an MIS, $s_q \in N(s_x)$ and thus its center must lie in the region $\mathcal{P}_x$. But then we would have found $s_q$ in our range queries with the slabs of $\mathcal{P}_x$. Hence $M_i$ is indeed an \indset of $G_i$.
\end{proof}

\begin{theorem} %
\label{thm:mis-rt}
We can maintain a maximal independent set of a dynamic set of unit squares, deterministically, in $O(\log n \log\log n)$ update time and $O(n)$ space. 
\end{theorem}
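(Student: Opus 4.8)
The plan is to observe that correctness is already settled and then to reduce the whole running-time claim to a counting argument: I would show that each update, whether \textsc{Insertion} or \textsc{Deletion}, touches only a constant number of operations on the two dynamic range trees $T(\mathcal{C}_i)$ and $T(\mathcal{C}(M_i))$. By Lemma~\ref{lem:rangei}, the set $M_i$ is a maximal independent set of $G_i$ after every step, so nothing further about correctness is needed; it remains purely to bound the amortized update time.

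First I would account for the \textsc{Insertion} case, which is the easy one. It performs exactly one point insertion into $T(\mathcal{C}_i)$, one range-reporting query with the concentric square $s_x^2$ in $T(\mathcal{C}(M_i))$, and at most one further point insertion into $T(\mathcal{C}(M_i))$. The packing argument already noted guarantees that this query returns at most four centers, so it is a query whose output size is $O(1)$. Hence an insertion triggers only $O(1)$ range-tree operations.

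Next I would handle the more delicate \textsc{Deletion} case. It performs at most two point deletions (from $T(\mathcal{C}_i)$ and, if $s_x\in M_i$, from $T(\mathcal{C}(M_i))$), one query with $s_x^4$ in $T(\mathcal{C}(M_i))$ returning the set $Q_x$ with $|Q_x|\le 12$, and then a sequence of range queries in $T(\mathcal{C}_i)$ over the vertical slab partition of $\mathcal{P}_x$. By Observation~\ref{rect-poly} the polygon $\mathcal{P}_x$ has at most $28$ corners, so its slab partition consists of $O(1)$ rectangles and yields $O(1)$ queries per partition. Since the deleted square $s_x$ has at most four independent neighbors, we add at most four squares to $M_{i+1}$; each addition forces one point insertion into $T(\mathcal{C}(M_i))$ together with one update $\mathcal{P}_x \gets \mathcal{P}_x - s_p$ and a refinement of the slab partition. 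Each refinement increases the corner count only by a constant, so the partition stays of size $O(1)$ throughout, and the entire deletion issues $O(1)$ range queries and $O(1)$ point insertions.

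The step I expect to be the main obstacle is making sure the slab queries over $\mathcal{P}_x$ do not blow up: the region $\mathcal{P}_x$ may contain arbitrarily many centers of $\mathcal{C}_i$, so a naive reporting query would cost $\Theta(|\mathcal{P}_x\cap\mathcal{C}_i|)$ and destroy the bound. I would resolve this exactly as the surrounding text indicates, by noting that a single witness in any slab already suffices to extend the independent set; thus each slab query is an emptiness-plus-one-witness query of output size $k\le 1$ rather than a full enumeration. Consequently every query we issue, in both operations, has output size $O(1)$. Tallying everything, each update performs $O(1)$ insertions/deletions and $O(1)$ constant-output queries in the two range trees. Plugging in the Chan--Tsakalidis data structure~\cite{ct-dorsr-17}, each insertion or deletion costs $O(\log^{2/3 + o(1)} n)$ amortized, and since there are only $O(1)$ of them and the $O(1)$-output queries are dominated by this update cost, the amortized time per update is $O(\log^{2/3 + o(1)} n)$, as claimed.
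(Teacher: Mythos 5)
Your proposal matches the paper's own proof essentially step for step: correctness delegated to Lemma~\ref{lem:rangei}, a tally of $O(1)$ range-tree insertions/deletions per update, constant-output queries (at most four points for insertion, at most twelve for $Q_x$), the constant-complexity slab partition of $\mathcal{P}_x$ via Observation~\ref{rect-poly}, the single-witness trick to avoid enumerating all centers in $\mathcal{P}_x$, and finally plugging in the Chan--Tsakalidis structure~\cite{ct-dorsr-17} for the amortized $O(\log^{2/3+o(1)} n)$ bound. Your only addition is making explicit that each refinement $\mathcal{P}_x \gets \mathcal{P}_x - s_p$ adds only constantly many corners, a point the paper leaves implicit, so the argument is correct and the same as the paper's.
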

\begin{proof}
The correctness follows from Lemma~\ref{lem:rangei}. It remains to show the running   time for the fully dynamic updates. 
 At each step $i$ we perform either an \textsc{Insertion} or a \textsc{Deletion} operation. 
Let us first discuss the update time for the insertion of a square. 
As described above, an insertion performs one or two insertions of the center of the square into the range trees and one range query in $T(\mathcal{C}(M_{i-1}))$, which will return at most four points.
Since we use the data structure of Mehlhorn and Näher~\cite{mn-dfc-90}, the update time for inserting a square is $(\log n \log\log n)$, which corresponds to the time requires for inserting a new point into their range searching data structure and one range query.
The deletion of a square triggers either just a single deletion from the range tree $T(\mathcal C_{i-1})$ or, if it was contained in the \indset $M_{i-1}$, two deletions, up to four insertions, and a sequence of range queries: one query in $T(\mathcal{C}(M_{i-1}))$, which can return at most $12$ points and a constant number of queries in $T(\mathcal C_{i-1})$ with the constant-complexity slab partition of $\mathcal{P}_x$. 
Note that while the number of points in $\mathcal{P}_x$ can be large, for our purpose it is sufficient to return a single point in each query range if it is not empty.
Therefore, the update time for a deletion is again $O(\log n \log \log n)$ with dynamic fractional cascading~\cite{mn-dfc-90}.

In this approach, we maintain two dynamic range trees for the center points of  rectangles. We use the dynamic range tree structure by  Mehlhorn and N\"{a}her~\cite{mn-dfc-90}, whose space requirement is linear in the number of elements stored. Thus, the space requirement of this approach is $O(n)$. 
\end{proof}

For unit square intersection graphs, recall that any square in an \indset can have at most four mutually independent neighbors. Therefore, maintaining a dynamic \indset immediately implies maintaining a dynamic 4-approximate \maxset.

Note that the update time of the dynamic data structure for orthogonal range queries dominates the update time of this algorithm.
Its update time can be improved by using a state-of-the-art dynamic range query structure.
The best-known dynamic data structure for orthogonal range reporting requires $O(\log^{2/3 + \epsilon} n)$ amortized update time, where $\epsilon$ denotes an arbitrarily small positive constant, and $O(k + \frac{\log n}{\log\log n})$ amortized query time, where $k$ is the number of reported points~\cite{ct-dorsr-17}.
From this, we conclude the following corollary.

\begin{corollary}\label{cor:mis-4apx}
We can maintain a 4-approximate maximum independent set of a dynamic set of unit squares, in amortized $O(\frac{\log n}{\log\log n})$ update time. 
\end{corollary}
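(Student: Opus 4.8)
The plan is to derive this directly from Theorem~\ref{thm:mis-rt}, so the only real work is to turn the qualitative remark preceding the statement (``at most four mutually independent neighbors'') into a quantitative approximation guarantee via a charging argument. First I would invoke Theorem~\ref{thm:mis-rt} to obtain, at every step $i \in [N]$, a maximal independent set $M_i$ of $G_i=(\mathcal{S}_i,E_i)$ maintained deterministically in amortized $O(\log^{2/3+o(1)} n)$ update time. Since the set we return is exactly $M_i$, the claimed running time is immediate, and I only need to argue that $|M_i| \ge |OPT_i|/4$, i.e., that $M_i$ is a $4$-approximate \textsc{Max-IS} of $G_i$.

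For the approximation bound I would set up a charging map from $OPT_i$ into $M_i$. By maximality of $M_i$, every square $s \in OPT_i$ intersects at least one square of $M_i$: if $s$ intersected none, then $M_i \cup \{s\}$ would be independent, contradicting maximality (this also covers the case $s \in M_i$, where $s$ trivially intersects itself). Assign each $s \in OPT_i$ to one such square $\sigma(s) \in M_i$ that it intersects. It then suffices to show that every $t \in M_i$ is the image of at most four squares of $OPT_i$, since then $|OPT_i| = \sum_{t \in M_i} |\sigma^{-1}(t)| \le 4\,|M_i|$.

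The heart of the argument is the packing bound on $|\sigma^{-1}(t)|$. The squares in $\sigma^{-1}(t)$ form a pairwise independent subset of $OPT_i$, and each of them intersects the fixed unit square $t$. Since two axis-parallel unit squares intersect exactly when their centers are within $L_\infty$-distance at most $1$, all their centers lie in the concentric square $t^2$ of side length $2$, while being pairwise at $L_\infty$-distance strictly greater than $1$. Partitioning $t^2$ into its four unit (half-open) quadrants and applying the pigeonhole principle shows that at most one center can fall into each quadrant, hence $|\sigma^{-1}(t)| \le 4$. This is precisely the ``four independent neighbors'' packing fact already exploited in Section~\ref{mis-log}. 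Combining the two pieces yields $|M_i| \ge |OPT_i|/4$ at every step, which together with the update time of Theorem~\ref{thm:mis-rt} completes the proof.

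I do not expect a genuine obstacle here: the statement is a corollary precisely because the algorithm is unchanged and the $4$-approximation is a static structural property applied step by step. The only point needing mild care is the boundary behaviour of the intersection relation for closed squares (whether equality in $L_\infty$-distance should count as an intersection); choosing half-open quadrants in the partition makes this harmless and keeps the bound of four intact, which is also tight, as four independent squares can simultaneously touch $t$ at its corners.
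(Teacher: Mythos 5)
Your proposal is correct and follows exactly the paper's route: the paper derives the corollary from Theorem~\ref{thm:mis-rt} together with the one-line observation that any unit square has at most four mutually independent neighbors, which is precisely the packing fact you prove. Your charging map $\sigma$ from $OPT_i$ to $M_i$ and the quadrant pigeonhole argument merely spell out the details the paper leaves implicit, including the correct handling of boundary intersections.
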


\section{Approximation Algorithms for Dynamic Maximum Independent Set}\label{MIS-const}
In this section, we study the \maxset problem for dynamic unit squares as well as for unit-height and arbitrary-width rectangles. 
In a series of dynamic schemes proposed in this section, we establish the trade-off between the update time and the solution size, i.e., the approximation factors. 
First, we design a $4$-approximation algorithm with $O(1)$ update time for \maxset of dynamic unit squares (Section~\ref{4-approx}).
We generalize this to an algorithm that maintains a $2(1+\frac{1}{k})$-approximate \maxset with $O(k^2 \log n)$ update time, for any integer $k>1$ (Section~\ref{sub:2kapx}) . 
Finally, we conclude with an algorithm that deterministically maintains a $2$-approximate \maxset with $O(\log^2 n + \omega\log n)$ update time, where $\omega$ is the maximum size of an independent set of the unit-height rectangles stabbed by any horizontal line (Section~\ref{sub:2apx}).

Let $\mathcal{B}$ be a bounding square of the dynamic set of $1 \times 1$-unit squares $\bigcup_{i \in [N]} \mathcal S_i$ of side length $\sigma\times \sigma$.%
Let $H=\{h_1,\ldots,h_{\sigma}\}$ and $L=\{l_1,\ldots,l_{\sigma}\}$ be a set of top-to-bottom and left-to-right ordered equidistant horizontal and vertical lines partitioning $\mathcal B$ into a square grid of side-length-$1$ cells, see Figure~\ref{fig:4-approx}.
Let $E_H = \{h_i \in H \mid i = 0 \pmod{2}\}$ and $O_H = \{h_i \in H \mid i = 1 \pmod{2}\}$ be the set of even and odd horizontal lines, respectively. 

\subsection{4-Approximation Algorithm with Constant Update Time}\label{4-approx}
We design a $4$-approximation algorithm for the \maxset problem on dynamic unit square intersection graphs with constant update time. 
Our algorithm is based on a grid partitioning approach. 
Consider the square grid on $\mathcal B$ induced by the sets $H$ and $L$ of horizontal and vertical lines. 
We denote the grid points as $g_{p,q}$ for $p,q \in [\sigma]$, where $g_{p,q}$ is the intersection point of lines $h_p$ and $l_q$.
We assign each unit square in any set $\mathcal S_i$ to a grid point (denoted by its associated grid point) in the following deterministic way. 
Due to the unit grid construction, each unit square intersects at least one grid point.
If an unit square $s \in S_i$ contains exactly one grid point $g_{p,q}$, we associate $s$ with $g_{p,q}$. 
Otherwise, if $s$ contains multiple grid points, we assign $s$ to the top-leftmost grid point among others. 
For the ease of description, we may assume that the dynamic squares are in general positions, i.e., each square contains exactly one grid point. Moreover, the above assignment does not affect the algorithm description or the analysis.
For each $g_{p,q}$, we store a Boolean \emph{activity value} $1$ or $0$ based on its intersection 
with $\mathcal S_i$ (for any step $i\in [N]$). 
If $g_{p,q}$ intersects at least one square of $S_i$, we say that it is \emph{active} and set the value to $1$; otherwise, we set the value to $0$. 
Observe that for each grid point $g_{p,q}$ and each time step $i$ at most one square of $\mathcal{S}_i$ intersecting $g_{p,q}$ can be chosen in any \maxset. 
This holds because all squares that intersect the same grid point form a clique in $G_i$, and at most one square from a clique can be chosen in any independent set. 
\begin{figure}[t]
\centering
\includegraphics{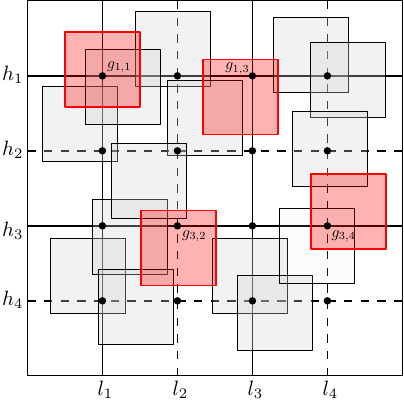}
\caption{Example instance with bounding square $\mathcal{B}$ partitioned into a $5 \times 5$ grid. Red squares represent the computed $4$-approximate solution, which here is $M(O(H))$.}
\label{fig:4-approx}
\end{figure}

For each grid point $g_{p,q}$, for some $p,q \in [\sigma]$, we store the squares in $S$ that intersect $g_{p,q}$ into a list $L_{pq}$. 
Moreover, a counter for the size of $L_{pq}$ is maintained dynamically for each grid point $g_{p,q}$ such that we could detect if there exists at least one square intersecting $g_{p,q}$ efficiently. 
Note that the set $L_{pq}$ should allow constant time insertion and deletion, e.g., be stored in sequence containers like lists.
We first initialize an independent set $M_1$ for %
$G_1=(\mathcal{S}_1,E_1)$ with $|M_1|\ge |OPT_1|/4$.
For each horizontal line $h_j\in H$, we compute two independent sets $M^1_{h_j}$ and $M^2_{h_j}$, where $M^1_{h_j}$ (resp.\ $M^2_{h_j}$) contains an arbitrary square intersecting each odd (resp.\ even) grid point on $h_j$.
Since every other grid point is omitted in these sets, any two selected squares are independent.
Let $M(h_j)=\argmax\{|M^1_{h_j}|, |M^2_{h_j}|\}$ be the larger of the two independent sets. 
We define $p(h_j)=|M^1_{h_j}|$ and $q(h_j)=|M^2_{h_j}|$, as well as $c(h_j) = |M(h_j)| =  \max \{p(h_j), q(h_j)\}$. %

We construct the independent sets $M(E_H) = \bigcup_{j=1}^{\floor{ %
\sigma/2
}} (M(h_{2j}))$ for $E_H$ and %
$M(O_H) = \bigcup_{j=1}^{\floor{ %
\sigma/2}} (M(h_{2j-1})$ for $O_H$. %
We return $M_1=\argmax\{|M(E_H)|, |M(O_H)|\}$ as the independent set for $G_1$. %
See Figure~\ref{fig:4-approx} %
for an illustration. 
The initialization of all $O(\sigma^2)$ variables and the computation of the first set $M_1$ take $O(\sigma^2)$ time. (Alternatively, a hash table would be more space efficient, but could not provide the $O(1)$-update time guarantee.)%
\begin{lemma}
\label{valid-g1}
The set $M_1$ is an independent set of $G_1=(\mathcal{S}_1,E_1)$ with $|M_1|\ge|OPT_1|/4$ and can be computed in $O(\sigma^2)$ time.
\end{lemma}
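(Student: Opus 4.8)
The plan is to verify the three claims of the lemma separately --- independence of $M_1$, the approximation guarantee $|M_1|\ge|OPT_1|/4$, and the $O(\kappa^2)$ running time --- with the approximation guarantee being the crux. Throughout, the single structural fact I would lean on is the general-position property that every unit square contains \emph{exactly one} grid point.

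\emph{Independence.} First I would show that $M_1$ is independent using only that exactly-one-grid-point property. Two squares selected on the same line $h_j$ sit on grid points that are at least two grid columns apart (both $M^1_{h_j}$ and $M^2_{h_j}$ skip every other column), and the intermediate column's grid point is contained in neither square; since a unit-width square containing one grid point cannot also contain the adjacent grid point, the left square's right edge lies strictly left of that intermediate column while the right square's left edge lies strictly right of it, so the two squares are disjoint in $x$. An analogous argument in the vertical direction handles two squares taken from distinct lines of $E_H$ (resp.\ $O_H$): these lines are at least two rows apart and are separated by an intervening row whose grid points neither square can reach without containing a second grid point. Hence both $M(E_H)$ and $M(O_H)$ are independent, and so is the larger one $M_1$.

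\emph{Approximation.} The heart of the proof is a charging argument against a fixed maximum independent set $OPT_1$. For each line $h_j$ let $o(h_j)$ be the number of squares of $OPT_1$ whose unique grid point lies on $h_j$, so that $|OPT_1|=\sum_j o(h_j)$. Because independent squares occupy distinct grid points, these $o(h_j)$ squares sit on $o(h_j)$ distinct grid points of $h_j$, partitioned into those on odd columns and those on even columns. Every such grid point is active, so $p(h_j)=|M^1_{h_j}|$ is at least the number on odd columns and $q(h_j)=|M^2_{h_j}|$ is at least the number on even columns; adding these gives $p(h_j)+q(h_j)\ge o(h_j)$, hence $c(h_j)=\max\{p(h_j),q(h_j)\}\ge o(h_j)/2$. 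Summing over all lines yields $|M(E_H)|+|M(O_H)|=\sum_j c(h_j)\ge \frac{1}{2}\sum_j o(h_j)=|OPT_1|/2$, and keeping the larger of the two sets gives $|M_1|\ge|OPT_1|/4$.

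\emph{Running time.} Finally I would bound the computation. Each of the at most $n$ squares contains one grid point locatable in $O(1)$ time by rounding coordinates, which lets us set activity values and store one witness square per active grid point. Aggregating the odd/even active counts per line, forming each $M(h_j)$, assembling $M(E_H)$ and $M(O_H)$, and selecting the larger set all amount to a single scan of the $\kappa\times\kappa$ grid, so with the explicit grid initialization the total is $O(\kappa^2)$. The main obstacle is the approximation bound, specifically isolating the two independent sources of the factor-$4$ loss: one factor of $2$ from restricting each line to either its odd or its even grid points (captured by $c(h_j)\ge o(h_j)/2$), and a second factor of $2$ from retaining only even or only odd horizontal lines. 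Getting the per-line inequality right hinges entirely on the exactly-one-grid-point property, which simultaneously forbids two $OPT_1$ squares from sharing a grid point and guarantees that every $OPT_1$-occupied grid point is active and therefore contributes to $M^1_{h_j}$ or $M^2_{h_j}$.
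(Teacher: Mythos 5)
Your proof is correct and takes essentially the same approach as the paper's: independence via the alternating-grid-point and alternating-line separation, and the factor $4$ decomposed into two pigeonhole losses of $2$ (odd vs.\ even columns within each line $h_j$, then even vs.\ odd lines). Your charging formulation $c(h_j)\ge o(h_j)/2$ summed over all lines, with $\max\{|M(E_H)|,|M(O_H)|\}$ at least half the sum, is merely a slightly more explicit rendering of the paper's two-stage argument, and your independence and $O(\kappa^2)$ running-time accounts match the paper's (indeed spelling out details the paper leaves implicit).
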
	
\begin{proof}
Partition the squares $S_1$ into $2$ sets $S_{E}$, $S_O$, where $S_E$ (resp. $S_O$) consists of all squares intersecting an even (resp. odd) horizontal line. 
Let  $M_E$ and $M_O$ be the \maxset of the $S_E$ and $O_E$ respectively.
Clearly, the larger one of these two sets contains at least half of as many elements as a \maxset of $S_1$.
We may assume, w.l.o.g., that the $M_E\ge |OPT_1|/2$.
For each even horizontal line $h_j$, $M^1_{h_j}$ (resp. $M^2_{h_j}$) is a \maxset of all rectangles stabbed on $h_j$ and an odd (resp. even) vertical line. 
The larger one of these two sets contains at least half as many elements as a \maxset of the squares stabbed by $h_j$. Overall, This implies that $|M_1| \ge |OPT_1|/4$.
\end{proof}

In the following step, when we move from $G_i$ to $G_{i+1}$, for any $i\in [N]$, a square $s_x$ is inserted into $\mathcal S_i$ or deleted from $\mathcal S_i$.
Let $g_{pq}$ be the grid point contained in $s_x$. We update the list $L_{pq}$ by either inserting the square $s_x$ into $L_{pq}$ or deleting $s_x$ from $L_{pq}$. 
Moreover, we update the counter recording the size of $L_{pq}$ and the activity value of the grid point accordingly.
Intuitively, we check the activity value of the grid point that $s_x$ intersects. If the update has no effect on its activity value, we keep $M_{i+1} = M_i$. Otherwise, we update the activity value, the corresponding cardinality counters, and report the solution accordingly. All of these operations can be performed in $O(1)$-time.

A more detailed description of the \textsc{Insertion} and \textsc{Deletion} operations is given in the following.
When we move in the next step from $G_i$ to $G_{i+1}$ (for some $1 \le i < N$), we either insert a new square into $\mathcal S_i$ or delete one square from $\mathcal S_i$.
Let $s_x$ be the square that is inserted or deleted and let $g_{u,v}$ (for some $u,v\in [\sigma]$) be the grid point that intersects $s_x$. 
We next describe how to maintain a 4-approximate \maxset with constant update time.
We distinguish between the two operations \textsc{Insertion} and \textsc{Deletion}. 

\medskip

\noindent\underline{\textsc{Insertion:}} If $g_{u,v}$ is active for $\mathcal S_i$, %
there is at least one square intersecting $g_{u,v}$ that was considered while computing $M_i$. 
Hence, even if we would include $s_x$ in a modified independent set $M_{i+1}$, it would not make any impact on its cardinality. %
Hence, we simply set  $M_{i+1} \gets M_i$. 
Otherwise, we perform a series of update operations: 
(1) Change the activity value of $g_{u,v}$ from $0$ to $1$. 
(2) Include $s_x$ in $M^1_{h_u}$ (resp. $M^2_{h_u}$) if $v$ is odd (resp. even), and increase the value of $p(h_u)$ (resp. $q(h_u)$) by $1$. 
This lets us reevaluate the cardinality $c(h_u)$ of $M(h_u)$ in constant time. 
(3) Reevaluate $M(E_H)$ and $M(O_H)$ and their cardinalities based on the updated value of $c(h_u)$. Note that none of these operations takes more than $O(1)$ time.

\medskip
\noindent\underline{\textsc{Deletion:}} If there is a square $s_l$ other than $s_x$ intersecting $g_{u,v}$, then $g_{u,v}$ stays \emph{active}. 
We replace $s_x$ by $s_l$ in the maintained independent sets $M^1_{h_u}$, $M^2_{h_u}$,$M(E_H)$ and $M(E_H)$.
Note that this makes no impact on the cardinality of the sets $M(E_H)$ and $M(E_H)$.
If there is no other square intersecting $g_{u,v}$, we reset the activity value of $g_{u,v}$ to \emph{false}.
Moreover, we delete $s_x$ from maintained independent sets of line $h_u$ and reevaluate $M_E$ and $M_O$.

The update procedure described above ensures that the respective cardinality maximization for the affected stabbing line $h_j$ and finally $M_i$ is reevaluated and updated.
In this approach, we maintain an $O(\sigma^2)$ grid. Each of $n$ rectangles is stored in one of the grid points, thus the storage of rectangles is $O(n)$. 
Thereby, we conclude the following Lemma~\ref{valid-gi}.

\begin{lemma}
\label{valid-gi}
		The set $M_i$ is an independent set of $G_{i}=(\mathcal{S}_{i},E_{i})$ for each $i \in [N]$ and $|M_{i}|\ge |OPT_{i}|/4$ and $O(\sigma^2 + n)$ space.
\end{lemma}

\subparagraph*{\textbf{Running Time.}} We perform either an insertion or a deletion operation at every step $i \in [N]$. 
Both of theses operations perform only local operations: 
(i) compute the grid point intersecting the updates square and check its activity value; (ii) reevaluate the values $p(h_j)$ and $q(h_j)$ of the horizontal line $h_j$ intersecting the square---this may or may not flip the independent set $M(h_j)$ and its cardinality from $p(h_j)$ to $q(h_j)$, or vice versa; (iii) finally, if the cardinality of $M(h_j)$ changes, we reevaluate the sets $M(E_H)$ and $M(O_H)$. 
All these operations possibly change one activity value, increase or decrease at most three variables by $1$ and perform at most two comparison operation.
Therefore, the overall update process takes $O(1)$ time in each step.
Recall that the process to initialize the data structures for the set $\mathcal S_1$ and to compute $M_1$ for $G_1$ takes $O(\sigma^2)$ time.

Lemmas~\ref{valid-g1} and~\ref{valid-gi} and the above discussion of the $O(1)$ update time yield: %

\begin{theorem}
We can maintain a $4$-approximate maximum independent set in a dynamic unit square intersection graph, deterministically, in $O(1)$ update time.  
\end{theorem}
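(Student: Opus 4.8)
The plan is to prove the theorem by combining the two ingredients that the preceding lemmas have already isolated: correctness (the maintained set is always a feasible $4$-approximation) and the $O(1)$ update time. Since Lemmas~\ref{valid-g1} and~\ref{valid-gi} together establish that $M_i$ is an independent set of $G_i$ with $|M_i| \ge |OPT_i|/4$ for every $i \in [N]$, the entire approximation-quality side of the theorem is already in hand. What remains for the proof is essentially a careful bookkeeping argument that each \textsc{Insertion} and \textsc{Deletion} operation, as described, touches only a constant number of stored quantities and therefore runs in $O(1)$ time.

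First I would observe that the data structures maintained are all arrays indexed by grid coordinates or by horizontal lines: the Boolean activity values of the $O(\kappa^2)$ grid points, the counters $p(h_j)$ and $q(h_j)$ for each line, the derived values $c(h_j)=\max\{p(h_j),q(h_j)\}$, and the two global cardinalities $|M(E_H)|$ and $|M(O_H)|$. Since each square contains exactly one grid point under the general-position assumption, an update touches exactly one grid point $g_{u,v}$, reachable in $O(1)$ time. I would then walk through each branch of the update procedure and verify that it performs only $O(1)$ work: an \textsc{Insertion} at most flips one activity bit, increments one of $p(h_u),q(h_u)$, recomputes $c(h_u)$ from the two counters, and re-derives the affected global cardinality; a \textsc{Deletion} in each of its sub-cases either does nothing to the counters (the replacement case and the $s_x \notin M_i$ case), or performs the symmetric constant-time decrement-and-reevaluate. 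Crucially, because $M(E_H)$ and $M(O_H)$ are \emph{sums} $\sum_j c(h_{2j})$ and $\sum_j c(h_{2j-1})$, a change in a single $c(h_u)$ updates the relevant global sum by just adding the signed difference, without re-scanning all lines.

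I would then note the one initialization caveat: computing $M_1$ and setting up the $O(\kappa^2)$ variables takes $O(\kappa^2)$ time by Lemma~\ref{valid-g1}, but this is a one-time preprocessing cost, not part of the per-update time, so it does not affect the claimed $O(1)$ update bound. With correctness supplied by Lemmas~\ref{valid-g1} and~\ref{valid-gi} and the per-operation constant running time verified branch by branch, the theorem follows.

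The step I expect to require the most care is not any single calculation but the verification that maintaining the two global maxima $|M(E_H)|,|M(E_O)|$ truly stays $O(1)$ across all branches: one must confirm that no branch ever forces a recomputation that ranges over all $\kappa$ lines, and in particular that the replacement case in \textsc{Deletion}~(i) genuinely leaves every counter unchanged (so the global sums need no update at all), while the counter-changing cases alter exactly one term of one sum. This is the subtle point where a naive implementation could slip into $O(\kappa)$ time, so the proof should make explicit that each global cardinality is stored and incrementally maintained rather than recomputed.
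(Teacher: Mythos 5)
Your proposal is correct and follows essentially the same route as the paper, which likewise derives the theorem directly from Lemmas~\ref{valid-g1} and~\ref{valid-gi} together with a branch-by-branch accounting showing each update changes one activity value, at most three counters, and performs a constant number of comparisons. Your added emphasis that $|M(E_H)|$ and $|M(O_H)|$ must be maintained incrementally by signed differences rather than recomputed is a point the paper leaves implicit in its ``reevaluate'' phrasing, but it is the same argument.
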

\subsection[]{$2(1+\frac{1}{k})$-Approximation Algorithm with $O(k^2\log n)$ Update Time}\label{sub:2kapx}

Next, we improve the approximation factor from $4$ to $2(1+\frac{1}{k})$, for any integer $k>1$, by combining the shifting technique~\cite{hm-avlsi-85} with the insights gained from Section~\ref{4-approx}. 
This comes at the cost of an increase of the update time to $O(k^2\log n)$, which illustrates the trade-off between solution quality and update time.
We reuse the grid partition and some notations from Section~\ref{4-approx}. 
We first describe how to obtain a solution $M_1$ for the initial graph $G_1$  that is of size at least $|OPT_1| / 2(1+\frac{1}{k})$ and then discuss how to maintain this under dynamic updates. 
Let $h_j \in H$ be a horizontal stabbing line and let $\mathcal S(h_j)\subseteq \mathcal S$ be the set of squares stabbed by~$h_j$. 
Since they are all stabbed by $h_j$, the intersection graph of $\mathcal S(h_j)$ is  equivalent to the unit interval intersection graph obtained by projecting each unit square $s_x \in \mathcal S(h_j)$ to a unit interval $i_x$ on the line $h_j$; we denote this set of unit intervals as $I(h_j)$.
First, we sort the intervals in $I(h_j)$ from left to right. 
Next we define $k+1$ \emph{groups} with respect to $h_j$ that are formed by deleting those squares and their corresponding intervals from $S(h_j)$ and $I(h_j)$, respectively, that intersect every 
$k+1$-th grid point on $h_j$, starting from some $g_{j,\alpha}$ with $\alpha \in [k+1]$. %
Now consider the $k$ consecutive grid points on $h_j$ between two deleted grid points in one such group, say, $\{g_{j,\ell},\ldots,g_{j,\ell+k-1}\}$ for some $\ell \in [\sigma]$. 
Let $I^k_{\ell}(h_j) \subseteq I(h_j)$ be the set of unit intervals intersecting the $k$ grid points $g_{j,\ell}$ to $g_{j,\ell+k-1}$. 
We refer to them as \emph{subgroups}.
See Figure~\ref{fig:k-approx-partition} for an illustration. Observe that the maximum size of an independent set of each subgroup is at most $k$, since the width of each subgroup is strictly less than $k+1$ and each interval has unit length.

\begin{figure}[t]
\centering
\includegraphics{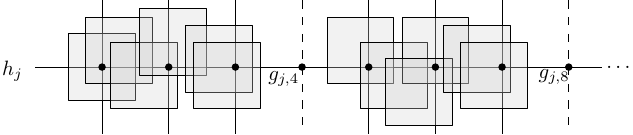}
\caption{Illustration of a group on line $h_j$ for $k=3$ with the two  subgroups  $I^3_{1}(h_j)$ and  $I^3_{5}(h_j)$.}
\label{fig:k-approx-partition}
\end{figure}

We compute $M_1$ for $G_1$ as follows. %
For each stabbing line $h_j\in H$, we form the $k+1$ different {groups} of $I(h_j)$. 
For each {group}, a \maxset is computed optimally and separately inside each {subgroup}. 
Since any two subgroups are horizontally separated and thus independent, we can then take the union of the independent sets of the subgroups to get an independent set for the entire {group}. 
This is done with the linear-time greedy algorithm to compute maximum independent sets for interval graphs \cite{DBLP:journals/networks/GuptaLL82}. %
Let $\{M^1_{h_j},\ldots,M^{k+1}_{h_j}\}$ be $k+1$ maximum independent sets for the $k+1$ different {groups} and let $M(h_j)=\argmax\{|M^1_{h_j}|, |M^2_{h_j}|, \dots, |M^{k+1}_{h_j}|\}$ be one with maximum size. We store its cardinality as $c(h_j)=\max\{|M^i_{h_{j}}| \mid i \in [k+1]\}$. 
Next, we compute an independent set for $E_H$, denoted by $M(E_H)$, by composing it from the best solutions $M(h_j)$ from the even stabbing lines, i.e., $M(E_H) = \bigcup_{j=1}^{\floor{ %
\sigma/2}} M(h_{2j})$ and its cardinality $|M(E_H)| = \sum_{j=1}^{\floor{ %
\sigma/2}} c(h_{2j})$. 
Similarly, we compute an independent set for $O_H$ %
as  $M(O_H) = \bigcup_{j=1}^{\floor{ %
\sigma/2}} M(h_{2j-1})$ and its cardinality $|M(O_H)| = \sum_{j=1}^{\floor{ %
\sigma/2}} c(h_{2j-1})$. 
Finally, we return $M_1=\argmax\{|M(E_H)|, |M(O_H)|\}$ as the solution for $G_1$.

\begin{lemma}	
\label{valid-k-g1}
The independent set $M_1$ of \hspace{0.05cm}$G_1=(\mathcal{S}_1,E_1)$ can be computed in $O(n \log n + kn)$ time and \hspace{0.03cm}$|M_1|\ge {|OPT_1|}/{2 (1+\frac{1}{k})}$.
\end{lemma}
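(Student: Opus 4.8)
The plan is to establish the two claims of the lemma separately: the running time bound and the approximation guarantee.

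\textbf{Running time.} First I would account for the per-line work. For each stabbing line $h_j$, the dominant cost is sorting the intervals in $I(h_j)$, which takes $O(|I(h_j)| \log |I(h_j)|)$ time. Then, for each of the $k+1$ groups, we run the linear-time greedy interval \textsc{Max-IS} algorithm~\cite{DBLP:journals/networks/GuptaLL82} on the subgroups; since the intervals are already sorted, forming the groups and computing the $k+1$ independent sets costs $O(k \cdot |I(h_j)|)$ time overall (each interval participates in exactly $k$ of the $k+1$ groups, being deleted in one). Summing over all lines, and using that $\sum_j |I(h_j)| = |\mathcal{S}_1| = O(n)$ because each square is stabbed by exactly one horizontal grid line, the sorting contributes $O(n \log n)$ and the greedy computations contribute $O(kn)$, giving the claimed $O(n \log n + kn)$. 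I would also note that selecting $M(h_j)$ via $\argmax$, and aggregating $M(E_H)$, $M(O_H)$, and $M_1$, adds only $O(n)$ overhead and is absorbed.

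\textbf{Feasibility.} Next I would argue $M_1$ is a valid independent set, following the structure of Lemma~\ref{valid-g1}. Within a single line $h_j$, the chosen group's solution is a union of optimal independent sets of subgroups that are pairwise horizontally separated (they are divided by the deleted grid points), so it is independent. Across lines, since $M_1$ takes squares only from even lines ($M(E_H)$) or only from odd lines ($M(O_H)$), any two selected squares stabbed by distinct chosen lines are separated by at least $2$ vertical units and cannot intersect.

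\textbf{Approximation ratio.} This is the crux and the main obstacle. I would decompose the factor $2(1+\frac{1}{k})$ into a factor $2$ from the even/odd vertical shifting and a factor $(1+\frac{1}{k})$ from the horizontal grouping on each line. For the horizontal factor, fix a line $h_j$ and compare $c(h_j)$ to the \textsc{Max-IS} of all of $\mathcal{S}(h_j)$. The standard shifting argument~\cite{hm-avlsi-85} applies: over the $k+1$ choices of which grid points to delete (indexed by $\alpha \in [k+1]$), each square is deleted in exactly one group, so summing the losses shows that the best group discards at most a $\frac{1}{k+1}$ fraction relative to optimal, yielding $c(h_j) \ge \frac{k}{k+1}|OPT(\mathcal{S}(h_j))|$, i.e.\ optimal within each group is computed exactly and the loss is only from the deleted separators. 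For the vertical factor $2$, I would argue exactly as in Lemma~\ref{valid-g1}: since $\mathcal{S}_1$ splits into squares stabbed by even lines and those stabbed by odd lines, and we take the larger of $M(E_H)$ and $M(O_H)$, we have $\max\{|M(E_H)|, |M(O_H)|\} \ge \frac{1}{2}\sum_j c(h_j)$. Combining the per-line bound, $\sum_j c(h_j) \ge \frac{k}{k+1}|OPT_1|$, with the factor-$2$ loss gives $|M_1| \ge \frac{1}{2}\cdot\frac{k}{k+1}|OPT_1| = \frac{|OPT_1|}{2(1+\frac{1}{k})}$. The delicate point I expect to verify carefully is that the shifting loss argument correctly bounds the \emph{best} group against the global optimum restricted to $h_j$, namely that deleting the separator squares in the best of the $k+1$ shifts costs at most a $\frac{1}{k+1}$ fraction of $|OPT(\mathcal{S}(h_j))|$.
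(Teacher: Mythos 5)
Your proof is correct and follows essentially the same route as the paper: the identical runtime accounting (sorting plus linear greedy, with each interval in $k$ of the $k+1$ groups), the same shifting/pigeonhole argument showing the best group loses at most a $\frac{1}{k+1}$ fraction of $OPT(h_j)$, and the same even/odd factor-$2$ step. Your phrasing of the last step via $\max\{|M(E_H)|,|M(O_H)|\} \ge \frac{1}{2}\sum_j c(h_j)$ is just the averaging form of the paper's pigeonhole argument, so there is no substantive difference.
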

\begin{proof}
	Let us begin with the analysis for one horizontal line, say, $h_j$. 
	The objective is to show that for $h_j$, the size of our solution is least the optimum solution size for $h_j$ divided by $(1+\frac{1}{k})$. 
	Recall that a {group} of $\mathcal S(h_j)$ and $I(h_j)$ is formed by deleting the squares and their corresponding intervals from $S(h_j)$ and $I(h_j)$, respectively, which intersect every $k+1$-th grid point on $h_j$, starting at some index $\alpha \in [k+1]$. 
	Now consider a hypothetical \maxset $OPT(h_j)$ on $h_j$. 
	By the pigeonhole principle, for at least one of the $k+1$ groups of $\mathcal S(h_j)$ we deleted at most $|OPT(h_j)| / (k+1)$ squares from $OPT(h_j)$. 
	Assume that this group corresponds to the independent set $M_{h_j}^\ell$ for some $\ell \in [k+1]$, which is maximum within each subgroup. 
	Then we know that $|M(h_j)| \ge |M_{h_j}^\ell| \ge |OPT(h_j)| - |OPT(h_j)| / (k+1) = |OPT(h_j)| / (1 + \frac{1}{k})$.
	Since this is true for each individual stabbing line $h_j$ and since any two lines in $E_H$ (or $O_H$) are independent, this implies that $|M(E_H)| \ge |OPT(E_H)| / (1 + \frac{1}{k})$ and $|M(O_H)| \ge |OPT(O_H)| / (1 + \frac{1}{k})$.
	Again by pigeonhole principle, if we choose $M_1$ as the larger of the two independent sets $M(E_H)$ and $M(O_H)$, then we lose by at most another factor of $2$, i.e., $|M_1| \ge |OPT_1| / 2 (1 + \frac{1}{k})$.

	The algorithm requires $O(n \log n)$ time to sort all intervals and then computes \maxset for the different subgroups with the linear time greedy algorithm. 
	Since each square belongs to at most $k$ different subgroups, this takes $O(kn)$ time in total. %
\end{proof}

Next, we describe a pre-processing step, which is required for the  dynamic updates.  

\medskip

\noindent\textsc{\underline{Pre-Processing:}} For each horizontal line $h_j\in H$, consider a {group}. 
For each {subgroup} $I^k_{\ell}(h_j)$ (for some $\ell\in [k+1]$), we construct a balanced binary tree $T(I^k_{\ell}(h_j))$ storing the intervals of $I^k_{\ell}(h_j)$ in left-to-right order (indexed by their left endpoints) in the leaves. 
This process is done for each group of every horizontal line $h_j\in H$. 
This preprocessing step takes $O(kn \log n)$ time.

\medskip

When we perform the update step from $G_i=(\mathcal{S}_i,E_i)$ to 
$G_{i+1}=(\mathcal{S}_{i+1},E_{i+1})$, either a square is inserted into $\mathcal{S}_i$ or deleted from $\mathcal{S}_i$. 
Let $s_x$ and $i_x$ be this square and its corresponding interval. %
Let $g_{u,v}$ (for some $u,v\in [\sigma]$) be the grid point that intersects $s_x$. 

\medskip

\noindent\textsc{\underline{Insertion/Deletion}:} 
The insertion or deletion of $i_x$ affects all but one of the {groups} on line $h_u$.
We describe the procedure for one such {group} on $h_u$; it is then repeated for the other groups. 
In each {group}, $i_x$ appears in exactly one {subgroup} and the other subgroups remain unaffected.
This subgroup, say $I^k_{\ell}(h_u)$, is determined by the index $v$ of the grid point $g_{u,v}$ intersecting $i_x$. 
For each affected subgroups, we do the following update.
First, we update the search tree $T(I^k_{\ell}(h_h))$ of $I^k_{\ell}(h_u)$ by inserting or deleting $i_x$, which can be done in $O(\log n)$ time.
Then, we recompute a \maxset of the subgroup with the greedy algorithm.
Since the intervals of $I^k_{\ell}(h_u)$ are sorted, we could locate the left-most interval which is to the right of all the chosen intervals in $O(\log n)$ time. 
Since a maximum independent set in each subgroup contains at most $k$ intervals, the re-computation takes $O(k \log n)$ time in each affected subgroup.

For all groups affected by the insertion or the deletion of $i_x$ we update the corresponding independent sets $M^p_{h_u}$ for $p \in [k+1]$, whenever some updates of selected intervals were necessary.
Then we select the largest independent set of all $k+1$ groups as $M(h_j)$ and update its new cardinality in %
$c(h_j)$.
Finally, we update the independent sets $M(E_H)$ and $M(O_H)$ and their cardinalities and return 
$M_{i+1}=\argmax\{|M(E_H)|, |M(O_H)|\}$ as the solution for~$G_{i+1}$. %

\medskip

Since the intervals computed by the left-to-right greedy algorithm are precisely those intervals that our update procedure selects, we get the following Lemma~\ref{valid-k-gi}.

\begin{lemma}		\label{valid-k-gi}
The set $M_i$ is an independent set of $G_i=(\mathcal{S}_i,E_i)$
for each  $i\in [N]$ and $|M_i|\ge |OPT_i|/ 2(1+\frac{1}{k})$.
\end{lemma}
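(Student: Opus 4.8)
The plan is to prove Lemma~\ref{valid-k-gi} by induction on the step index $i$, exactly mirroring the structure of the proof of Lemma~\ref{valid-gi} for the $4$-approximation algorithm, but now invoking the stronger per-line approximation guarantee established in Lemma~\ref{valid-k-g1}. The base case $i=1$ is precisely Lemma~\ref{valid-k-g1}, which already gives that $M_1$ is independent and $|M_1| \ge |OPT_1|/2(1+\frac{1}{k})$. So I would assume inductively that $M_{i-1}$ is independent and satisfies the approximation bound, and then argue that the \textsc{Insertion}/\textsc{Deletion} update preserves both properties.

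For \emph{independence}, the key structural invariant is that $M_i$ is always assembled as either $M(E_H)$ or $M(O_H)$, i.e., from a set of horizontal stabbing lines that are pairwise at vertical distance at least $2$. Within each chosen line $h_j$, the solution $M(h_j)$ is the union of optimal independent sets of the subgroups of one group, and subgroups are separated by a deleted grid point, hence horizontally disjoint by at least one unit. I would therefore argue that (a) squares from different chosen lines cannot intersect vertically because the lines are $\ge 2$ apart, (b) squares from different subgroups on the same line cannot intersect horizontally, and (c) within a subgroup the greedy/tree-maintained selection keeps the intervals pairwise non-overlapping. The crucial point is that the update operations maintain invariant (c): I would verify that after an \textsc{Insertion} or \textsc{Deletion}, the chain of selection replacements described in the algorithm terminates with a set of selected intervals that is still a valid (indeed maximum) independent set of the affected subgroup, so that the maintained $M^p_{h_u}$ remain genuine independent sets.

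For the \emph{approximation ratio}, I would observe that the update procedure is designed so that each maintained quantity remains exactly what the static computation of Lemma~\ref{valid-k-g1} would produce for the current square set $\mathcal S_i$: the selected intervals in each subgroup form a \textsc{Max-IS} of that subgroup, the cardinalities $c(h_j)$ correctly record $\max_p |M^p_{h_j}|$, and $M_i = \argmax\{|M(E_H)|,|M(O_H)|\}$. Since these are precisely the invariants under which Lemma~\ref{valid-k-g1} proves the bound $|M_1| \ge |OPT_1|/2(1+\frac{1}{k})$, the same pigeonhole argument applies verbatim to $G_i$, yielding $|M_i| \ge |OPT_i|/2(1+\frac{1}{k})$.

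The main obstacle, and the part deserving the most care, is establishing that the incremental selection-update procedure actually recomputes a \emph{maximum} independent set of each affected subgroup, i.e., that the local chain of replacements reproduces the output of the static greedy algorithm. For intervals this rests on the exchange property of the leftmost-endpoint greedy rule: replacing a selected interval by one with a smaller right endpoint, and then cascading to the next interval that starts past that endpoint, yields exactly the greedy solution. I would make this precise by showing that after the update the maintained selection equals the greedy solution on the new interval set, so that its cardinality $|M^p_{h_u}|$ equals the true \textsc{Max-IS} size of the subgroup; the rest of the approximation argument then follows directly from Lemma~\ref{valid-k-g1}. I would close by noting that the chain terminates after at most $k$ steps since each subgroup admits an independent set of size at most $k$, which keeps the argument finite and consistent with the claimed update time.
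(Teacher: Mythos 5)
Your proposal matches the paper's proof in both structure and substance: induction with Lemma~\ref{valid-k-g1} as base case, independence from the three levels of separation (even/odd lines, subgroup separation, disjoint selected intervals), and the key step that the cascading selection updates reproduce exactly the output of the static left-to-right greedy algorithm on each affected subgroup, after which the pigeonhole argument of Lemma~\ref{valid-k-g1} applies verbatim. The paper argues this greedy equivalence just as you plan to (via sortedness and unit length of the intervals, with at most $k$ replacements per subgroup), so no further comparison is needed.
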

\begin{proof}
	The fact that $M_i$ is an independent set follows directly from the construction.
    Let $s_x$ be the square added or deleted  and let $h_{u}$ be the grid horizontal line stabbed by $s_x$. 
	In fact, our update algorithm constructs the same set of %
	independent intervals as the one obtained by running from scratch the greedy \maxset algorithm on the set $I^k_{\ell}(h_u)$.
	The remaining arguments for the claimed approximation ratio of $M_{i+1}$ are exactly the same as in the proof of Lemma~\ref{valid-k-g1}.
\end{proof}

\subparagraph*{\textbf{Running Time.}} At every step, we perform either an \textsc{insertion} or a \textsc{deletion} operation.
Recall from the description of these two operations that an update affects a single stabbing line, say $h_u$, for which we have defined $k+1$ groups. 
Of those groups, $k$ are affected by the update, but only inside a single subgroup. 
Updating a subgroup can trigger up to $k$ %
selection updates, 
each taking $O(\log n)$ time. 
In total this yields an update time of $O(k^2 \log n)$.

This approach requires $O(\sigma^2)$ space to maintain the grid. Note that each rectangle can be in the stored solutions of at most $k$ subgroups, thus at most $O(kn)$ storage is used.
With Lemma~\ref{valid-k-gi} and the above update time discussion we obtain:%

\begin{theorem}
We can maintain a $2(1+\frac{1}{k})$-approximate maximum independent set in a dynamic unit square intersection graph, deterministically, in $O(k^2\log n)$ update time and $O(\sigma^2 + kn)$ storage. 
\end{theorem}

\subsection[]{2-Approximation Algorithm with $O(\log^2 n + \omega\log n)$ Update Time}\label{sub:2apx}

We finally design a $2$-approximation algorithm for the \maxset problem on dynamic axis-aligned unit height, but arbitrary width rectangles. Note that the coordinates of the input rectangles might not be integers. %
Let $\mathcal B$ be the bounding box of the dynamic set of rectangles $\widetilde{\mathcal R} = \bigcup_{i \in [N]} \mathcal R_i$.
We begin by dividing $\mathcal B$ into horizontal strips of height $1$ defined by the set $H=\{h_1,\ldots,h_{\sigma}\}$ of $\sigma = O(n)$ horizontal lines. 
We assume, w.l.o.g., that every rectangle in $\widetilde{\mathcal{R}}$ is stabbed by exactly one line in $H$. 
For a set of rectangles $\mathcal R$, we denote the subset stabbed by a line $h_j$ as $\mathcal R(h_j) \subseteq \mathcal R$.

We first describe how to obtain an independent set $M_1$ for the initial graph $G_1=(\mathcal{R}_1,E_1)$ such that $|M_1|\ge |OPT_1|/2$ by using the following algorithm of Agarwal et al.~\cite{agarwal1998label}. 
For each horizontal line $h_j\in H$, we compute a maximum independent set for $\mathcal{R}_1 (h_j)$.
The set $\mathcal{R}_i(h_j)$ (for any $i\in [N]$ and $j \in [\sigma]$) can again be seen as an interval graph. 
For a set of $n$ intervals, a \maxset can be computed by a left-to-right greedy algorithm visiting the intervals in the order of their right endpoints in $O(n \log n)$ time. 
So for each horizontal line $h_j\in H$, let $M(h_j)$ be a \maxset of $\mathcal R_1(h_j)$, and let $c(h_j)=|M(h_j)|$. 
Then we construct the independent set $M(E_H) = \bigcup_{j=1}^{\floor{ %
\sigma/2}} (M(h_{2j}))$ for $E_H$. %
Similarly, we construct the independent set 
$M(O_H) = \bigcup_{j=1}^{\floor{ %
\sigma/2}} (M(h_{2j-1})$ for $O_H$. %
We return $M_1=\argmax\{|M(E_H)|, |M(O_H)|\}$ as the independent set for $G_1=(\mathcal{R}_1,E_1)$.
See Figure~\ref{fig:2-approx} %
for an illustration. 

\begin{lemma}[Theorem 2,\cite{agarwal1998label}]
\label{agarwallabel}
The set $M_1$ is an independent set of $G_1=(\mathcal{R}_1,E_1)$ with $|M_1|\ge |OPT_1|/2$ and can be computed in $O(n \log n)$ time.
\end{lemma}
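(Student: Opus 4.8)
The plan is to prove the three claims—independence, the $2$-approximation guarantee, and the $O(n\log n)$ running time—by reducing to the interval case on each stabbing line and then arguing that the even-line group and the odd-line group are each internally independent.

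First I would establish that $M_1$ is independent. Each $M(h_j)$ is independent because it is a \textsc{Max-IS} of the interval graph $\mathcal R_1(h_j)$, obtained by the left-to-right greedy algorithm. The key geometric observation is that any two rectangles stabbed by two \emph{distinct} even lines cannot intersect: since every rectangle has unit height and, by assumption, is stabbed by exactly one line, a rectangle stabbed by $h_{2j}$ is confined to the open horizontal strip strictly between $h_{2j-1}$ and $h_{2j+1}$. Hence rectangles stabbed by distinct even lines occupy disjoint vertical ranges and are therefore independent, so $M(E_H)=\bigcup_{j} M(h_{2j})$ is an independent set; the same argument applies to $M(O_H)$. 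Since $M_1$ equals one of these two sets, it is independent.

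Next I would bound the approximation ratio. Because each rectangle is stabbed by exactly one line, $OPT_1$ is partitioned into the set $OPT_1^{E}$ of its rectangles stabbed by even lines and the set $OPT_1^{O}$ of those stabbed by odd lines, so at least one of the two has size at least $|OPT_1|/2$. For every line $h_j$, the set $M(h_j)$ is a maximum independent set of $\mathcal R_1(h_j)$, so $|M(h_j)|\ge |OPT_1\cap \mathcal R_1(h_j)|$. Summing over the even lines and using that the contributions are pairwise disjoint both in $M(E_H)$ and in $OPT_1$ gives $|M(E_H)|\ge |OPT_1^{E}|$, and symmetrically $|M(O_H)|\ge |OPT_1^{O}|$. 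Taking the larger of the two sets therefore yields $|M_1|=\max\{|M(E_H)|,|M(O_H)|\}\ge |OPT_1|/2$. For the running time, computing a \textsc{Max-IS} of the interval graph $\mathcal R_1(h_j)$ costs $O(|\mathcal R_1(h_j)|\log |\mathcal R_1(h_j)|)$ with the greedy algorithm; since the sets $\mathcal R_1(h_j)$ partition $\mathcal R_1$ and $\sum_j |\mathcal R_1(h_j)|=|\mathcal R_1|\le n$, the total is $O(n\log n)$.

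I expect the main subtlety to be the independence step, namely justifying rigorously that rectangles on distinct even lines are vertically separated. This is exactly where both the unit-height assumption and the exactly-one-stabbing-line assumption are essential: even lines are two units apart, while each rectangle is only one unit tall, so the occupied vertical ranges cannot overlap. The remainder is a standard partition-and-pigeonhole argument combined with the correctness of greedy maximum independent set on interval graphs, and so requires no new ideas beyond those in Agarwal et al.~\cite{agarwal1998label}.
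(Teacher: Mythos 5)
Your proof is correct and takes essentially the approach the paper relies on: the lemma is stated with a citation to Theorem~2 of Agarwal et al.~\cite{agarwal1998label} rather than reproved, and your argument is precisely that standard one, mirroring the paper's own proof of the analogous Lemma~\ref{valid-g1} --- per-line interval \textsc{Max-IS} via the left-to-right greedy, independence across lines from vertical separation, and the even/odd pigeonhole yielding the factor~$2$. The subtlety you flag is handled correctly: since consecutive stabbing lines are one unit apart and each rectangle has unit height and is stabbed by exactly one line, a rectangle stabbed by $h_{2j}$ lies strictly between $h_{2j-1}$ and $h_{2j+1}$, so rectangles on distinct even (or distinct odd) lines are disjoint, and the disjointness of the sets $\mathcal{R}_1(h_j)$ makes both the summation $|M(E_H)|\ge|OPT_1^{E}|$ and the $O(n\log n)$ total running time go through.
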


We describe the following pre-processing step to initialize in $O(n \log n)$ time the data structures that are  required for the subsequent dynamic updates.

\medskip

\noindent\textsc{\underline{Pre-Processing:}} 
Consider a stabbing line $h_j$ and the set of rectangles $\mathcal R_i(h_j)$ stabbed on $h_j$ for some $i \in [N]$.
We denote the corresponding set of intervals as $I(h_j)$. 

We build a balanced binary search tree $T_l(I(h_j))$, storing the intervals in $I(h_j)$ in left-to-right order based on their left endpoints.

This is called the \emph{left tree} of $I(h_j)$.
For each internal node in the left tree, 
we associate it with an augmented balanced binary search tree storing the intervals in its subtree based on their right endpoints.
Thus, we could get the interval with leftmost right endpoint in each subtree in constant time.

Such range-tree like data structure can be constructed in $O(n \log n)$ and can be dynamically maintained in $O(\log^2 n)$ time~\cite{DBLP:journals/jacm/WillardL85}.
Additionally, we compute a \maxset of $I(h_j)$ and store it in left-to-right order in a balanced binary search tree $T_s(I(h_j))$, denoted by the \emph{solution tree} of $h_j$.
Let $\omega_j$ be the cardinality of a maximum independent set of $I(h_j)$ for $j \in [\sigma]$, and let $\omega = \max_j \omega_j$ for $j \in [\sigma]$ be the maximum of these cardinalities over all stabbing lines.

\medskip

When we move from $G_i$ to $G_{i+1}$ (for some $1\le i < N$), either we insert a new rectangle into $\mathcal{R}_i$ or delete one rectangle from $\mathcal{R}_i$. 
Let $r_x$ be the rectangle that is inserted or deleted, let $i_x$ be its corresponding interval,  and let $h_j$ (for some $j\in [\sigma]$) be the horizontal line that intersects $r_x$.
By maintaining the maximum independent set for $I(h_j)$, and then reevaluating the solution set, a $2$-approximate \maxset can be maintained. 
Note that this recomputation costs $O(\omega \log n + \log^2 n)$, including updating the left tree of $h_j$ and recomputing the maximum independent set for $h_j$ by the greedy approach described in the pre-processing phase. 
In what follows, we describe how to maintain the maximum independent set for $h_j$ dynamically in $O(\omega \log n + \log^2 n)$ time.

Note that our update operation is faster in practice while having the same asymptotic worst-case update time $O(n\log n)$ as recomputing the maximum independent set of $h_j$.

\begin{figure}[t]
\centering
\includegraphics{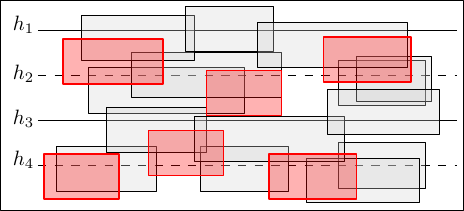}
\caption{Example instance with four horizontal lines. Red rectangles represent the computed $2$-approximate solution, which here is $M(E(H))$.}
\label{fig:2-approx}
\end{figure}

\medskip

Given an interval $i$, we denote the left endpoint and right endpoint of $i$ as $l(i)$ and $r(i)$, respectively.

\noindent\textsc{\underline{Insertion/Deletion}:} 
Because the greedy algorithm for constructing the \maxset visits the intervals in left-to-right order based on the right endpoints, it would make the same decisions for all the intervals with their right endpoint before the right endpoint of $i_x$.
Let $i_y$ be the right-most interval in the current solution such that the right endpoint of $i_y$ is before the right endpoint of $i_x$,i.e., $r(i_y) < r(i_x)$.
We may assume such interval $i_y$ always exist by adding a dummy interval $(0,\epsilon)$ in each solution tree $T_s(I(h_j))$ for an arbitrary small value $\epsilon$.
We could find $i_y$ for $i_x$ by querying the %
solution tree $T_s(I(h_j))$ in $O(\log \omega_j)$ time.
Now we need to identify the next %
selected interval right of $i_y$ that would have been found by the greedy algorithm.
We use the left tree $T_l(I(h_j))$ to search in $O(\log n)$ time for the interval $i'_z$ with leftmost right endpoint, whose left endpoint is right of the right endpoint $r(i_y)$ of $i_y$.
More precisely, we search for $r(i_y)$ in $T_l(I(h_j))$ and whenever the search path branches into the left subtree, we compare whether the leftmost right endpoint stored in the root of the right subtree is left of the right endpoint of the current candidate interval.
If so, we use this interval as the new candidate interval.
Once a leaf is reached, the leftmost found candidate interval is the desired interval $i'_z$.
This interval $i'_z$ is is precisely the first interval considered by the greedy algorithm after $I_y$ and thus must be the next selected interval.
We repeat the update process for $i'_z$ as if it would have been the newly inserted interval until either $i'_z$ is also selected in the previous solution or we reach the end of $I(h_j)$.
We now reevaluate the new \maxset $M(h_j)$ and its cardinality, which possibly affects $M(E_H)$ or $M(O_H)$. 
We obtain the new independent set $M_{i+1}=\argmax\{|M(E_H)|, |M(O_H)|\}$ for $G_{i+1}=(\mathcal{R}_{i+1}, E_{i+1})$.

\subparagraph*{\textbf{Running Time.}} 
An update in the left tree (interval insertion/deletion) costs $O(\log^2 n)$ time. 

To update the solution of $h_j$, we perform at most $\omega_j$ searches in $T_l(I(h_j))$, each of which takes $O(\log n)$ time. %
Finally, we need to delete $O(\omega_j)$ old %
selected intervals from and insert $O(\omega_j)$ new %
selected intervals into the %
solution tree $T_s(I(h_j))$, each of which takes $O(\log \omega_j)$ time.
We now re-evaluate the new \maxset $M(h_j)$ and its cardinality $c(h_j)$, which possibly affects $M(E_H)$ or $M(O_H)$. 
We obtain the new independent set $M_{i+1}=\argmax\{|M(E_H)|, |M(O_H)|\}$ for $G_{i+1}=(\mathcal{R}_{i+1}, E_{i+1})$.
Overall, the total update time is $O(\omega_j \log n + \log^2 n)$.
\medskip%

\begin{lemma}
	\label{lem:2apx}
	The set $M_i$ is an independent set of $G_i=(\mathcal{R}_i,E_i)$ for each $i\in [N]$ and 
	$|M_i|\ge |OPT_i|/2$.
	\end{lemma}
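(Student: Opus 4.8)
The plan is to prove Lemma~\ref{lem:2apx} by induction on the time step $i$, closely mirroring the structure of the proof of Lemma~\ref{valid-k-gi} from the previous subsection. The base case $i=1$ is exactly Lemma~\ref{agarwallabel}, which gives both that $M_1$ is independent and that $|M_1|\ge|OPT_1|/2$. For the inductive step I would first establish feasibility and then the approximation ratio separately, since these rely on different arguments.

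For \textbf{feasibility}, I would argue that $M_i$ is independent directly from its construction, independent of the inductive hypothesis. The set $M_i$ equals either $M(E_H)$ or $M(O_H)$. Within a single stabbing line $h_j$, the selected intervals stored in the solution tree $T_s(I(h_j))$ are pairwise disjoint by the greedy invariant we maintain. Across lines, since $M_i$ takes either only even-indexed lines or only odd-indexed lines, any two chosen rectangles on distinct lines are stabbed by horizontal lines at vertical distance at least $2$, so their unit-height rectangles cannot overlap vertically. This gives independence.

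For the \textbf{approximation ratio}, the key claim — and the main obstacle — is that after an \textsc{Insertion} or \textsc{Deletion} on line $h_j$, the set of selected intervals we maintain is still a genuine \textsc{Max-IS} of the updated interval set $I(h_j)$, i.e.\ it coincides with what the right-endpoint greedy algorithm of Lemma~\ref{agarwallabel} would produce from scratch. I would argue this by showing our update operations preserve the greedy invariant: that the selected intervals are exactly the intervals obtained by repeatedly choosing the interval with leftmost right endpoint among those independent of all previously chosen intervals. The intervals lying entirely to the left of $i_x$ are untouched and hence remain greedily optimal by the inductive hypothesis; from the insertion/deletion point onward, the update procedure re-selects precisely the leftmost-right-endpoint interval independent of the last selected one (found via the augmented left tree $T_l(I(h_j))$), which is exactly the greedy choice. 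Since this matches the canonical greedy output, $M(h_j)$ has cardinality $c(h_j)=\omega_j$ and is a true \textsc{Max-IS} of $\mathcal R_i(h_j)$.

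Once this per-line optimality is secured, the global ratio follows exactly as in Lemma~\ref{agarwallabel}: because each $M(h_j)$ is a \textsc{Max-IS} on its line, summing over even (resp.\ odd) lines gives $|M(E_H)|\ge|OPT_i(E_H)|$ and $|M(O_H)|\ge|OPT_i(O_H)|$, and taking the larger of the two loses at most a factor of $2$, so $|M_i|\ge|OPT_i|/2$. The delicate part to get right is verifying that the cascade of selection updates terminates correctly and that when our procedure reaches an unchanged selected interval (or the end of $I(h_j)$) the remaining selection agrees with the greedy solution; this is where I would spend the most care, since unlike the unit-interval case of Section~\ref{sub:2kapx}, arbitrary widths mean a single inserted or deleted interval can shift the entire suffix of selected intervals.
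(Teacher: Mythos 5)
Your proposal is correct and follows essentially the same route as the paper's proof: induction with Lemma~\ref{agarwallabel} as the base case, the key claim that the maintained selected intervals on the affected line $h_j$ coincide with the output of the right-endpoint greedy algorithm (so $M(h_j)$ remains a genuine \textsc{Max-IS}, while all other lines are untouched by the induction hypothesis), and the pigeonhole step over $M(E_H)$ and $M(O_H)$ for the factor~$2$. Your explicit feasibility argument and your attention to the termination of the selection cascade are slightly more detailed than the paper's proof, which states the greedy-equivalence claim without elaboration, but the substance is identical.
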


\begin{proof}
	We prove the lemma by induction.
	From Lemma~\ref{agarwallabel} we know that $M_1$ satisfies the claim, and in particular each set $M(h)$ for $h \in H$ is a \maxset of the interval set $I(h)$.
	So let us consider the set $M_i$ for $i \ge 2$ and assume that $M_{i-1}$ satisfies the claim  by the induction hypothesis. 
	Let $r_x$ and $i_x$ be the updated rectangle and its interval, and assume that it belongs to the stabbing line $h_j$.
	Then we know that for each $h_k \in H$ with $k \ne j$ the set $M(h_k)$ is not affected by the update to $r_x$ and thus is a \maxset by the induction hypothesis. 
	It remains to show that the update operations described above restore a \maxset $M(h_j)$ for the set $I(h_j)$.
	But in fact the updates are designed in such a way that the resulting set of %
	selected intervals is identical to the set of %
	intervals that would be found by the greedy \maxset algorithm for $I(h_j)$.
	Therefore $M(h_j)$ is a \maxset for $I(h_j)$ and by the pigeonhole principle $|M_i| \ge |OPT_i|/2$.
\end{proof}

\subparagraph*{\textbf{Running Time.}} 
Each update of a rectangle $r_x$ (and its interval $i_x$) triggers either an \textsc{Insertion} or a \textsc{Deletion} operation on the unique stabbing line of $r_x$.
As we have argued in the description of these two update operations, the insertion or deletion of $i_x$ requires one $O(\log^2 n)$-time update in  the left tree data structure.
If $i_x$ is a %
selected independent interval, the update further triggers a sequence of at most $\omega_j$ selection updates, each of which requires $O(\log n)$ time.
Hence the update time is bounded by $O(\log^2 n + \omega_j \log n) = O(\log^2 n + \omega \log n)$. Recall that $\omega_j$ and $\omega$ are output-sensitive parameters describing the maximum size of an independent set of $I(h)$ for a specific stabbing line $h=h_j$ or any stabbing line $h$.

In this approach, we have to maintain $O(\sigma)$ stabbing lines. 
For each stabbing line $h_l$, let $n_l$ be the number of rectangles stabbed by $h_l$.  
For each stabbing $h_l$, we maintain a dynamic left tree for the  $n_l$ corresponding intervals, which requires $O(n_l \log n_l)$ space \cite{DBLP:journals/jacm/WillardL85}. 
Overall, the total space required is $O(n\log\, n + \sigma)$.

\begin{theorem}
We can maintain a $2$-approximate maximum independent set in a dynamic unit-height arbitrary-width rectangle intersection graph, deterministically, in $O(\log^2 n + \omega\log n)$ time and in $O(n\log\, n + \sigma)$ space, where $\omega$ is the maximum size of an independent set of the unit-height rectangles stabbed by any horizontal line.
\end{theorem}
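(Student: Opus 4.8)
The plan is to establish the theorem's two components separately: the $O(\omega \log n)$ update-time bound and the $2$-approximation guarantee. Since the approximation guarantee is already fully proved in Lemma~\ref{lem:2apx}, the main content of the proof is a careful accounting of the running time, combined with a correctness argument that the dynamic data structures faithfully reproduce the greedy \textsc{Max-IS} on the single affected stabbing line.

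First I would invoke Lemma~\ref{lem:2apx} directly for correctness and the approximation factor, noting that the induction there already reduces the whole claim to the single invariant that, after every update, the solution tree $T_s(I(h_j))$ stores exactly the set of intervals that the right-endpoint greedy algorithm would select on $I(h_j)$. So the running-time analysis is what remains to be filled in. I would argue that every update to a rectangle $r_x$ touches exactly one stabbing line $h_j$ (by the general-position assumption that each rectangle is stabbed by a unique line), so all work is confined to the data structures $T_l(I(h_j))$ and $T_s(I(h_j))$ for that line. The insertion or deletion of the interval $i_x$ itself costs $O(\log n)$ in the left tree, including the bottom-up propagation of the leftmost-right-endpoint pointers along the root path. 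The nontrivial cost comes only when $i_x$ is (or was) a selected interval, which triggers a cascade of selection updates.

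The key step is to bound this cascade. Each selection update consists of a single query in $T_l(I(h_j))$ for the next independent interval with leftmost right endpoint, costing $O(\log n)$, followed by $O(\log \omega_j)$-time insertions/deletions in the solution tree. I would bound the number of such cascade steps by $\omega_j$: since the greedy solution on $I(h_j)$ contains at most $\omega_j$ intervals (its \textsc{Max-IS} cardinality), and each cascade step either newly selects or deselects one interval in left-to-right order without revisiting, the total number of changed selected intervals is $O(\omega_j)$. Multiplying, the cascade costs $O(\omega_j \log n)$, which dominates the $O(\log n)$ cost of the plain tree update. Combining with $\omega_j \le \omega$ gives the stated $O(\omega \log n)$ bound; I would also note that $\log \omega_j \le \log n$, so the solution-tree operations are absorbed into the same bound.

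The main obstacle I anticipate is arguing rigorously that the cascade terminates after at most $O(\omega_j)$ selection changes rather than revisiting intervals repeatedly. The subtlety is that, for arbitrary-width intervals, a newly selected interval $i'_z$ may differ from the previously selected successor $i_z$, and one must verify that the process advances monotonically to the right and that the number of selected intervals never exceeds $\omega_j$ at any intermediate stage. I would address this by appealing to the structural fact that the right-endpoint greedy algorithm selects a strictly left-to-right increasing sequence of right endpoints, so each cascade step strictly advances the search point; hence each interval is selected or deselected at most a constant number of times during one update, and the length of the selected chain stays bounded by $\omega_j$ throughout. With this, the running-time bound follows and, together with Lemma~\ref{lem:2apx}, completes the proof.
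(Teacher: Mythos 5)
Your proof is correct and takes essentially the same route as the paper: you delegate correctness and the $2$-approximation to Lemma~\ref{lem:2apx} exactly as the paper does, and your running-time accounting mirrors the paper's analysis---one $O(\log n)$ update in the left tree $T_l(I(h_j))$ on the unique affected stabbing line, followed by a monotone left-to-right cascade of at most $\omega_j$ selection updates, each costing $O(\log n)$ for the left-tree query with the $O(\log \omega_j)$ solution-tree operations absorbed, giving $O(\omega_j \log n) = O(\omega \log n)$. Your explicit justification that the cascade advances strictly rightward without revisiting intervals is the same structural fact the paper relies on (implicitly) to bound the number of selection updates by $\omega_j$.
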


\begin{remark}
We note that 
Gavruskin et al. \cite{gavruskin2015dynamic} gave a dynamic algorithm for maintaining a \maxset on \textit{proper} interval graphs.
Their algorithm runs in amortized time 
$O(\log^2 n)$ for insertion and deletion, and $O(\log n)$ for element-wise decision queries. The complexity to report a \maxset $J$ is $\Theta(|J|)$. Whether the same result holds for general interval graphs was posed as an open problem~\cite{gavruskin2015dynamic}. 
Our algorithm in fact solves the \maxset problem on arbitrary dynamic interval graphs, which is of independent interest.
Moreover, it explicitly maintains an exact \maxset at every step. Recently, Bhore et al.~\cite{DBLP:journals/corr/abs-2007-08643} showed that for intervals a $(1+\epsilon)$-approximate maximum independent set can be
maintained with logarithmic worst-case update time, where
$\epsilon>0$ is any positive constant.
\end{remark}

\section{Experiments}\label{exp}
We implemented all our \maxset approximation algorithms presented in Sections~\ref{mis-log} and~\ref{MIS-const} in order to empirically evaluate their trade-offs in terms of \emph{solution quality}, i.e., the cardinality of the computed independent sets, and \emph{update time} measured on a set of suitable synthetic and real-world map-labeling benchmark instances of two types of dynamic rectangle sets: (1) unit squares, (2) rectangles of uniform height and bounded width-height integer aspect ratio; see Figure~\ref{fig:unit-height rectangle}. 
We believe these two models are representative models in map labeling applications.
The goal is to identify those algorithms that best balance the two performance criteria.

\begin{figure}[t]
\centering
\includegraphics{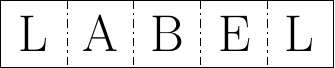}
\caption{Example instance of unit height rectangle with width-height aspect ratio $5$.}
\label{fig:unit-height rectangle}
\end{figure}

Moreover, for smaller benchmark instances with up to 2\,000 squares, we compute exact \maxset solutions using a \MAXSAT model by Klute et al.~\cite{kllns-esl-19} that we solve with %
MaxHS~3.0 (see \url{www.maxhs.org}). 
These exact solutions allow us to evaluate the %
optimality gaps of the different %
algorithms %
in light of their worst-case approximation guarantees.
Finally, we investigate the speed-ups gained by using our dynamic update algorithms compared to the baseline of recomputing new solutions from scratch with their respective static algorithm after each update.

\subsection{Experimental Setup}

\subparagraph*{\textbf{Implemented Algorithms.}}
We have implemented %
the following six algorithms (and their greedy augmentation variants) in C++.
We implemented two \indset algorithm for unit squares, \misg\ and \misr\, where a maximal independent set is maintained dynamically.
In both of these approaches, we use a dynamic orthogonal range searching data structure to check the intersections of squares. The main difference is that in \misg\, we maintain the geometric intersection graph explicitly and thus the update time is affected by the degree of the 
corresponding vertex. 
Note that in our implementation, the \indset algorithms \misg\ and \misr\ compute the same maximal independent set at each round and provide a $4$-approximation.
Moreover, we extended the approach \misg\ for unit-height rectangles.

\begin{description}

\item[\misg for unit squares] A naive graph-based dynamic \indset algorithm for unit squares, explicitly maintaining the square intersection graph and a MIS~\cite[Sec.~3]{DBLP:conf/stoc/AssadiOSS18}. In order to evaluate and compare the performance of our algorithm \misr\ (Section~\ref{mis-log})
for the  \indset problem, we have implemented this alternative %
dynamic algorithm as the baseline approach. 
This algorithm maintains the current instance
in a dynamic geometric data structure and maintains the square intersection graph explicitly.
We use standard adjacency lists to represent the intersection graph, implemented as unordered sets in C++.

In the initialization step, we store the center points of all unit squares in the dynamic point range query structure\footnote{2D Range and Neighbor Search in CGAL
 \url{https://doc.cgal.org/latest/Point_set_2/index.html}} implemented in CGAL (version 5.2.1).
 By performing neighbor searches in this range search structure, we build the initial geometric intersection graph. 
 Precisely, for each square $s_x$, we query the orthogonal range tree with the range $s^2_x$, where $s^2_x$ is the square of side length $2$ concentric with $s_x$. This range contains the center points of all squares that intersect $s_x$.
Now, to obtain a \indset at the first step, we add the first (unmarked) vertex $v$ to the solution and mark $N(v)$ in the corresponding intersection graph.
This process is repeated iteratively until there is no unmarked vertex left in the intersection graph. %
Clearly, by following this greedy method, we obtain a \indset. 

Moreover, for each vertex $v$, we maintain an augmenting \emph{counter} that stores the number of vertices from its neighborhood $N(v)$ that are contained in the current \indset. 
Note that in our implementation, the  approach greedily checks the vertices in their ordering as given in the input file.

This approach handles the updates in a straightforward manner. 
In order to add a new square, we first insert its center point into the dynamic orthogonal range query data structure and add a new vertex in the intersection graph.
When a new vertex is inserted, its corresponding square may introduce new intersections. 
Therefore, when adding a vertex, we also determine the edges that are required to be added to the intersection graph. %
Notice that unlike the canonical vertex update operation defined in the literature, where the adjacencies of the new vertex are part of the dynamic update, here, we actually need to figure out the neighborhood of a vertex.
Let $s_x$ be the square to add and let $v_x$ be its corresponding vertex, which is added to the intersection graph. 
In order to find  all squares that overlap $s_x$, we query the orthogonal range tree with  $s^2_x$. This output-sensitive operation takes $O(\log\,n +\deg(v))$ time, where $\deg(v)$ is the size of neighborhood of this newly added vertex $v$. 
If the newly inserted square %
has no intersection with any square from the current solution, then we simply add its vertex to the solution; otherwise, we ignore it. Finally, we update the counters.
If a vertex is deleted, we update the orthogonal range query structure and the intersection graph by deleting its corresponding center point and vertex, respectively. %
If the deleted vertex was in the  solution, then we decrease the counters of its neighbors by $1$. Once the counter of a vertex is updated to $0$, we add this vertex into the solution. 
Both the insertion (after computing $N(v)$) and deletion operation for a vertex $v$ take $O(\deg(v))$ %
time each to update the intersection graph and the \indset solution.
By maintaining the conflict graph in an  adjacency list, the space requirement of this implemented approach is $O(n^2)$.

\item[MIS-graph for unit-height rectangles] 
We extend the approach \misg\ for axis-parallel rectangles of unit height and with bounded integer width-height aspect ratio.
Let $r$ be an axis-parallel rectangle with unit height and aspect ratio $w$ for an integer $w$.
The rectangle $r$ can be partitioned into $w$ unit squares. 
Each corner of these partitioning unit squares is denoted as a \emph{witness point} of $r$ in the following.
In this  extended \misg\ approach, instead of maintaining the center points of squares in the orthogonal range query data structure as in \misg\, all of the witness points of all rectangles are stored in  the range query data structure. 
Furthermore, we use a hash table, which assigns each witness point to its corresponding rectangle.
We assume a general position property of the dynamic set of rectangles in the input: each pair of rectangles intersects at most twice at their boundaries. This property is denoted as corner intersection property in the following\footnote{In computational geometry, a set of geometric objects with such property is denoted as a family of pseudo-disks}. 
Thus, given a rectangle $r$, we could find all rectangles that overlap with $r$ by making a range query of $r$ in the dynamic orthogonal range tree.
Note that in our input instances, the width height aspect ratio is from $\{1,\dots b\}$ for a positive constant integer $b$. 
Thus, the initialization takes $(bn\log\,bn)$ time.
Let $r_x$ be the rectangle to add or delete, and $v_x$ be its corresponding vertex in the conflict graph.  
Both the insertion and deletion (after updating the range tree and conflict graph) take $O(\deg(v_x))$ where $\deg(v_x)$ is the size of the neighborhood of the vertex $v_x$ in the conflict graph.

\item[MIS-ORS] The dynamic \indset algorithm based on orthogonal range searching (Section~\ref{mis-log}); this algorithm provides a 4-approximation. In the implementation we used the dynamic orthogonal range searching data structure implemented in CGAL (version 5.2.1), which is based on a dynamic Delaunay triangulation~\cite[Chapter 10.6]{mn-lpcgc-99}. 
More precisely, given a  rectilinear area $R$, a range query with $R$ is implemented in CGAL by making a circular range query with its circumscribed circle and then check if the reported points are in $R$.
That means, finding and reporting one point of a rectilinear area take in the worst case $O(n)$.
Hence, this implementation does not provide the polylogarithmic worst-case update time of Theorem~\ref{thm:mis-rt}.
However, we did not observe such behaviour in most of our experiments.
Note that the initial solution is computed greedily in the ordering of the instance file and in each update round the maximal independent set is maintained dynamically.
Overall, the solution computed in each round by this approach is identical as the solution computed by \misg.

We implement \indset algorithms \grid (Section~\ref{4-approx}), \gridk{k} (Section~\ref{sub:2kapx}), and \linea (Section~\ref{sub:2apx}) and their greedy augmentation variants in C++. 
Since all these algorithms are based on partitioning the set of squares and considering only sufficiently segregated subsets, they produce a lot of white space in practice.

For instance, they ignore the squares stabbed by either all the even or all the odd stabbing lines completely in order to create isolated subinstances. 
In practice, it is therefore interesting to augment the computed approximate \maxset by greedily adding independent, but initially discarded squares.
We have also implemented the greedy variants of these algorithms, which are denoted as \ggrid, \ggridk{k}, and \glinea.

\item[\grid] 
Recall that in the grid-based $4$-approximation algorithm for unit squares (Section~\ref{4-approx}), we either omit all squares intersecting even horizontal lines or all squares intersecting odd horizontal lines. 
Then, for each horizontal line $l_i$, we maintain a Boolean value based on whether we omit all rectangles intersecting the odd vertical lines or all rectangles intersecting the even vertical lines.

In our implementation, to obtain the initial solution, we iterate over all grid points that are not omitted, and collect the first square stored in their square lists. Let this be our initial solution. 

\item[g-grid]The grid-based approach with greedy augmentation. 
We describe the greedy augmenting procedure. 
Recall that for each horizontal line $h_j$, we maintain two candidate independent sets $M^1_{h_j}$ and $M^2_{h_j}$ of rectangles of odd grid points and rectangles of even grid points of $h_j$, respectively.
Given a horizontal grid line $h_j$, we check all rectangles intersecting even (resp. odd) grid points on $h_j$ and compute an augmentation for $M^1_{h_j}$ (resp. $M^2_{h_j}$), which is denoted as a \textit{line augmentation set}.
For three consecutive horizontal grid lines $h_j$, $h_{j+1}$ and $h_{j+2}$ in the grid partitioning, we consider the four possible combinations of one candidate set of $h_j$ and one candidate set of $h_{j+2}$ with their corresponding line augmentation sets. 
For each combination, we compute a greedy augmentation of rectangles stabbed by line $h_{j+1}$, denoted as a \textit{combination augmentation set}. 
To achieve a constant-time update, each of these computed augmentation sets is stored in a vector of size $\sigma$ such that the rectangle intersecting the $k$-th vertical grid line is stored as the $k$-th element of the vector. Overall, for each horizontal grid line $h_j$, we maintain six greedy augmentation sets: four combination augmentation sets for the four possible combinations of $h_{j-1}$ and $h_{j+1}$ as well as two line augmentation set for $M^1_{h_j}$ and $M^2_{h_j}$, respectively.
Thus, the initial solution obtained by this approach contains three parts: the initial solution obtained by the \grid\ approach,  the line augmentation sets for the chosen candidate sets of the odd/even (based on the choice by \grid\ approach) horizontal lines, and the combination augmentation sets  from the omitted horizontal lines based on the choices of its two neighboring lines; see Figure~\ref{fig:greedy-augmentation}.

\begin{figure}[t]
\centering
\includegraphics{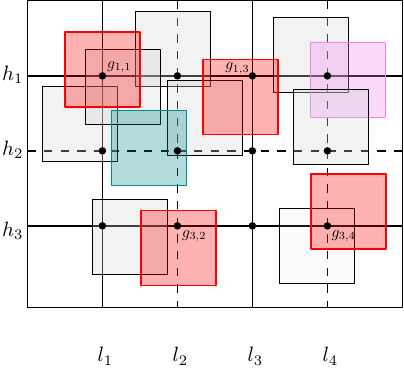}
\caption{Example instance within a $4 \times 4$ grid. 
The solution obtained by the g-\grid\ approach is the union of the solution $M^1(h_1) \cup M^2(h_3)$ (red) by the \grid\ approach, the line augmentation set of $M^1(h_1)$ (violet), the line augmentation set of $M^3(h_3)$ (= $\emptyset$) and the combination augmentation set of  $M^1(h_1) \cup M^2(h_3)$ (green).}
\label{fig:greedy-augmentation}
\end{figure}

Whenever a square $s$ is inserted on the horizontal grid line $h_r$, the two candidate sets $M^1_{h_r}$ and  $M^2_{h_r}$ of $h_r$ might be updated by adding $s$.
Then we update all greedy augmentation sets involving $h_r$
by removing the rectangles intersecting $s$. 
Note that we only need to check the rectangles on a neighboring grid point of the grid point of $s$. 
Thus, to update one greedy augmentation set takes constant time.
When removing a square $s$ from $h_r$, we remove $s$ from all greedy augmentation sets of $h_r$. Note that this greedy augmentation procedure does not affect the approximation 
bound (i.e., $4$-approximation) since the solution of the \grid\ approach is unaffected. 
Moreover, this procedure takes constant update time and $O(\sigma^2)$ space, which is the same as for the \grid\ approach.

\item[\gridk{k}] The shifting-based $2(1 + \frac{1}{k})$-approximation algorithm (Section~\ref{sub:2kapx}). In the experiments we use $k=2$ (i.e., a $3$-approximation) and $k=4$ (i.e., a $2.5$-approximation).

\item[\ggridk{k}]
We describe the greedy augmented version of the \gridk{k}\ approach and
generalize the idea of the \ggrid\ approach.
Recall that for each horizontal grid line $h_j$, 
there is one candidate independent set for each of $k+1$ group.
In the initialization phase, for each candidate set of $h_j$, we compute its augmented set consisting of rectangles of $h_j$, which is denoted as the  \textit{line augmentation set} of the candidate set.
For every three consecutive horizontal grid lines $h_j, h_{j+1}, h_{j+2}$, we consider $k^2$ unions of one candidate set of $h_j$ and one candidate set of $h_{j+1}$ with their corresponding line augmentation sets and compute for each union an augmented set of rectangles on $h_{j+1}$ greedily.
For each computed augmented set, we store it in a vector of size $\sigma$ such that the index of each stored rectangle is identical to the index of its vertical grid point.
Thus, the initial solution obtained by this approach contains three parts: the initial solution obtained by the \gridk{k}\ approach,  greedy augmented rectangles for the chosen candidate sets of the odd/even (based on the choice by \gridk{k}\ approach) horizontal lines, and the greedy augmented rectangles from the omitted horizontal lines based on the choices of its two neighboring lines.
Whenever a square is inserted or deleted in an update phase, the involved $O(k^2)$ greedy augmentation sets need to be updated accordingly.
Overall, this greedy augmented version of the approach \gridk{k} retains the same approximation ratio $2(1+\frac{1}{k})$, the same update time $O(k^2\log n)$ and the same space requirement $O(\sigma^2+kn)$ as the \gridk{k} approach.

\item[\linea] The stabbing-line based $2$-approximation algorithm (Section~\ref{sub:2apx}).

\item[\glinea]
We describe the greedy augmentation procedure of the \linea\ approach. Recall that \linea\ maintains a candidate set, which is a maximum independent set, for each stabbing line and we omit either all rectangles stabbed by even lines or by odd lines.
In the initialization phase of this greedy-augmented approach, we first compute the candidate set for each horizontal line as in the \linea\ approach. 
For every three consecutive horizontal grid lines $h_j, h_{j+1}, h_{j+2}$, we consider the union of the candidate set of $h_j$ and the candidate set of $h_{j+1}$ and compute for this union an augmentation set of rectangles on $h_{j+1}$ greedily.
Each greedy augmentation set is sorted from left to right and stored in an ordered set. 
The solution obtained by this approach consists of the solution obtained by \linea\  and the greedy augmented sets of the omitted lines.
When a square is inserted into or removed from a stabbing line $h_j$, we first update the candidate set of $h_j$ and then update the greedy augmentation sets of the stabbing line above and the stabbing line below $h_j$, if needed.
More precisely, when the candidate set of $h_j$ is updated, we mark the leftmost and the rightmost rectangles $r_s, r_t$ which are the newly added squares in this candidate set. 
Then, when we update a greedy augmentation set, we first find the leftmost rectangle in this greedy augmentation set which is to the left of the left-endpoint of $r_s$ and recompute the greedy augmentation set from this rectangle. 
This recomputing procedure terminates when the chosen rectangle is in the greedy augmentation set from previous round and is right to the right end-point of $r_t$.
That means, to update the greedy augmentation set of one stabbing line $h_j$, it takes in the worst-case $O(n_j)$ time where $n_j$ is the number of rectangles stabbed by $h_j$.
Thus, this greedy augmentation takes worst case $O(n)$ time for an update.
However, we note that such worst-case behaviour was not observed in the experiments.
\end{description}

\subparagraph*{\textbf{System Specifications.}} 
The experiments were run on a server equipped with two Intel Xeon E5-2640 v4 processors (2.4 GHz 10-core) and 160GB RAM. The machine ran the 64-bit version of Ubuntu Bionic (18.04.2 LTS). The code was compiled using g++ 7.5.0 with optimization level O3.  %

\subparagraph*{\textbf{Benchmark Data of Unit Squares.}}
We created three types of benchmark instances. The two synthetic data sets consist of $n$ $30\times30$-pixel squares %
placed inside a bounding rectangle $\mathcal B$ of size $1\,080 \times 720$ pixels, which also creates different densities. The real-world instances use the same square size, but geographic feature distributions. %
For the  updates we consider three models: \emph{insertion-only}, \emph{deletion-only}, and \emph{mixed}, where the latter selects insertion or deletion uniformly at random.
The new squares to insert are generated uniformly.

\begin{description}
\item[Gaussian]  In the Gaussian model, we generate $n$ squares randomly in $\mathcal B$ according to an overlay of three Gaussian distributions, where 70\% of the squares are from the first distribution, 20\% from the second one, and 10\% from the third one. 
Firstly, the three means of the distributions are sampled uniformly at random in $\mathcal B$; for each Gaussian the standard deviation is set to $100$ in both dimensions. 
Next, in each distribution, we sample a point and check whether the unit square centered at this point is inside the bounding box $\mathcal B$. 
If so, we add this point to the input, otherwise we discard it.
This process is repeated until the required number of rectangles of each distribution is collected.

\item[Uniform] In the uniform model, we generate $n$ squares in $\mathcal B$ uniformly at random. %
\item[Real-world] We created six real-world data sets by extracting point features from OpenStreetMap (OSM), see Table~\ref{tab:osm} for their detailed properties.
Note that in our experiment, the input set is from a real-world instance and the new squares to insert are generated by an uniform sampling.
In Figure~\ref{fig:real-distribution}, we illustrate the  distribution of point features in our small real-world instances.
\end{description}

\begin{figure}[t]
	\centering
	\begin{subfigure}[t]{0.5\textwidth}
		\centering
		\includegraphics[width=\textwidth]
		{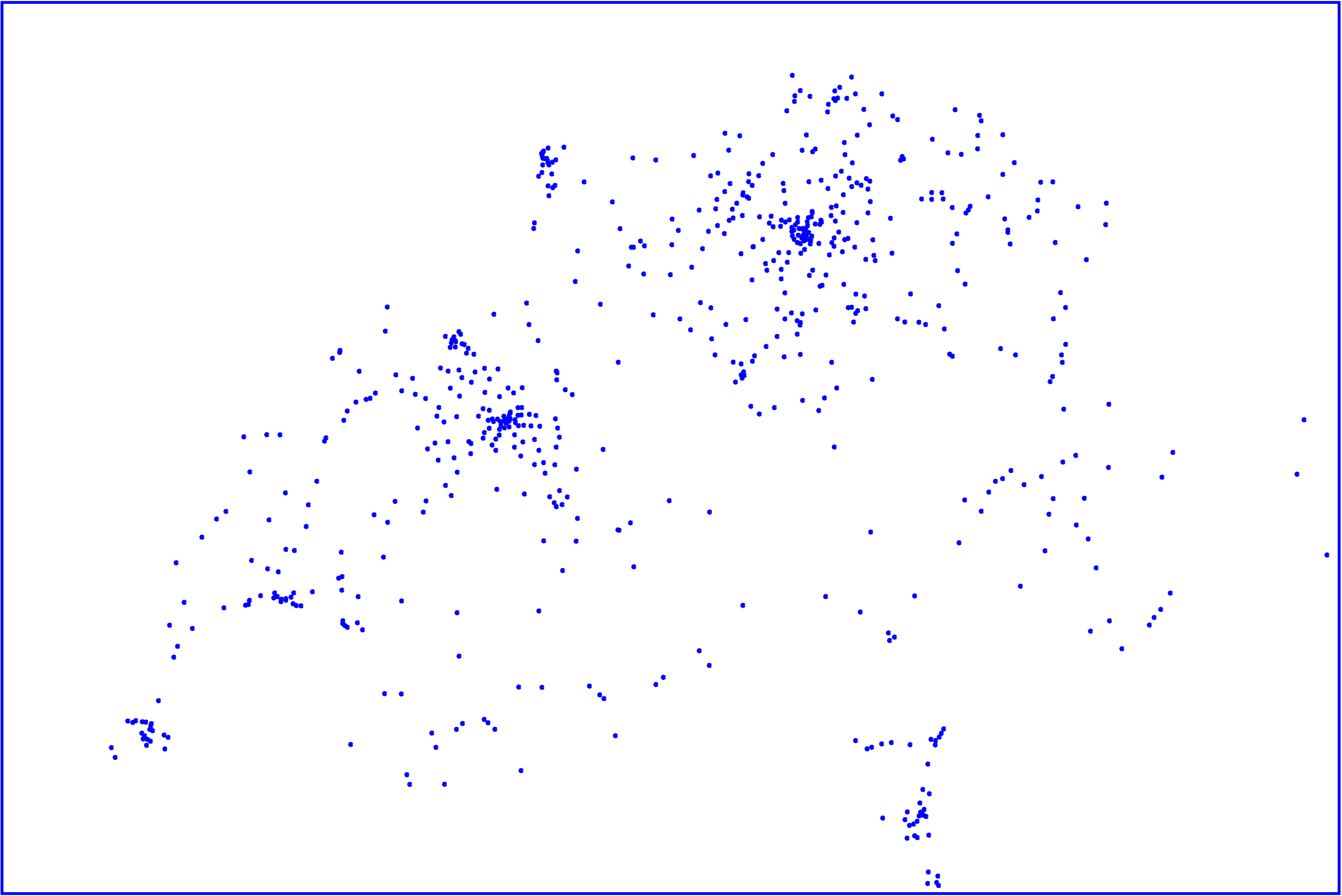}
		\caption{Post-CH}\label{fig:dis1}
	\end{subfigure}%
	\hfill
	\begin{subfigure}[t]{0.5\textwidth}
		\centering
		\includegraphics[width=\textwidth]
		{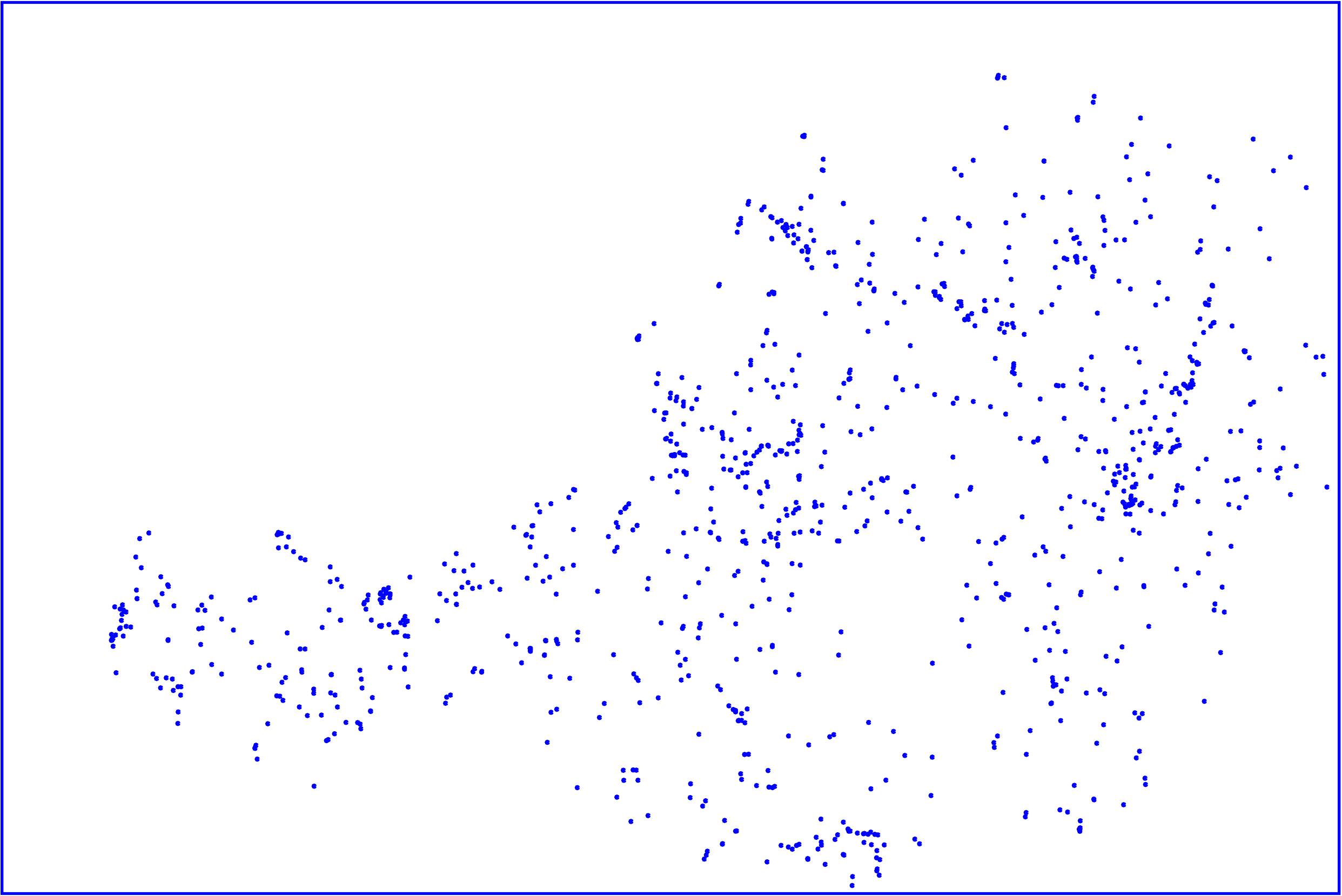}
		\caption{viewpoint-AT}\label{fig:dis2}
	\end{subfigure}
\caption{Example distributions of small real-world instances}
\label{fig:real-distribution}
\end{figure}

\begin{table}[htb]
	\begin{center}
		\begin{tabular}{l*{6}{|r}}
			&  post-CH & viewpoint-AT & hotels-CH & hotels-AT &  peaks-CH  & hamlets-CH\\
			\hline
			features ($n$)  & 646 & 652 & 1\,788 & 2\,209 & 4\,320 & 4\,326\\
			overlaps ($m$) & 5\,376 & 5\,418 & 28\,124 & 68\,985 & 107\,372 & 159\,270\\
			density ($m/n$) & 8.32 & 8.31 & 15.73 & 31.23 & 24.85 & 36.92\\
		\end{tabular}
	\end{center}
	\caption{Specification of the six OSM unit square instances. }
	\label{tab:osm}
\end{table}

\subparagraph*{\textbf{Benchmark Data of Unit-Height Rectangles.}}
Similarly to the unit squares data, we created three types of benchmark instances with unit-height rectangles.
Since the approach MIS-graph requires the corner intersection property of the input set, we created instances such that 
the boundary of every pair of rectangles intersects at most twice.
The synthetic data sets consist of unit-height rectangles with height of $10$ pixels %
placed inside a bounding rectangle $\mathcal B$ of size $1\,080 \times 720$ pixels, which also creates different densities. 
The real-world instances use the geographic feature distributions and the original label texts extracted from the OSM files. Each label of length $l$ is represented as a $10l\times 10$-pixel rectangle placed inside the bounding rectangle $\mathcal B$. For synthetic data sets, we sample a label length for each rectangle between $2$ and $21$ based on the distribution of word lengths in English.\footnote{taken from \url{ http://www.ravi.io/language-word-lengths}}

\begin{description}
\item[Gaussian]  In the Gaussian model, we generate center points of $n$ unit-height rectangles randomly in $\mathcal B$ according to an overlay of three Gaussian distributions, where 70\% of the center points are from the first distribution, 20\% from the second one, and 10\% from the third one. The means that they are sampled uniformly at random in $\mathcal B$ and the standard deviation is $100$ in both dimensions.
Once each center point is sampled, we sample a label length for it based on the distribution of word lengths in English. 
Whenever, a new rectangle candidate is generated, we check if it is inside the bounding box $\mathcal B$.
Furthermore, we verify if the rectangles set keeps the corner intersection property after adding this newly generated rectangle. 
This process is repeated for each distribution until the number of collected rectangles reaches the corresponding required number for this distribution. 
\item[Uniform] In the uniform model, we generate $n$ unit-height rectangles in $\mathcal B$ uniformly at random. 
To guarantee the corner intersection property, we check each newly generated rectangle.
\item[Real-world] We created six real-world data sets by extracting point features from OpenStreetMap (OSM), see Table~\ref{tab:osm} for their detailed properties.
In order to guarantee the corner intersection property of the instances, we filter the the instances in a post-processing step; see Table~\ref{tab:osm2} for their detailed properties.  
\end{description}

\begin{table}[htb]
	\begin{center}
		\begin{tabular}{p{3.6cm}|p{1.1cm}|p{1.25cm}
		|p{1.2cm}|p{1.25cm}|p{1.25cm}|p{1.25cm}}
			&post-CH&viewpoint-AT &hotels-CH&hotels-AT&peaks-CH&hamlets-CH\\
			\hline
			features ($n$)&612 & 918& 1\,476 & 1\,755 & 2\,899&2\,930\\
			overlaps ($m$) & 4\,688  &6\,710& 19\,886 & 46\,621 & 54\,912 & 64\,098\\
			density ($m/n$) & 7.66 & 7.3 & 13.47 & 18.56 & 18.94& 21.87\\
			label lengths(range/aver.)& 3-50/18& 3-63/16  & 2-46/15  &  2-57/17& 3-63/12  & 3-29/10 \\
		\end{tabular}
	\end{center}
	\caption{Specification of the six OSM unit-height rectangle instances. }
	\label{tab:osm2}
\end{table}

The source code of benchmark instance generator is available on \url{https://dyna-mis.github.io/dynaMIS/}.

\FloatBarrier
\subsection{Experimental Results for Unit Squares}\label{sec:results}
\subparagraph*{\textbf{Time-Quality Trade-offs.}}
For our first set of experiments we compare the five implemented algorithms, including their greedy variants, in terms of update time and size of the computed independent sets.
Figure~\ref{fig:plot-unifrom} shows scatter plots of runtime vs.\ solution size on uniform and Gaussian benchmarks,
where algorithms with dots in the top-left corner perform well in both measures.

We first consider the results for the uniform instances with $n=10\,000$ squares in the top row of Figure~\ref{fig:plot-unifrom}. 
Each algorithm performed $N=400$ updates, either insertions (Figure~\ref{fig:uni-ins}) or deletions (Figure~\ref{fig:uni-del}) and each update is shown as one point in the respective color.
\begin{figure}[ht]
	\begin{subfigure}[t]{\textwidth}
		\centering
		\includegraphics[width=\textwidth]{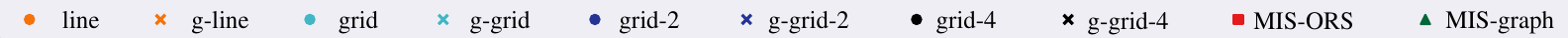}
	\end{subfigure}

	\centering
	\begin{subfigure}[t]{0.5\textwidth}
		\centering
		\includegraphics[width=\textwidth]
		{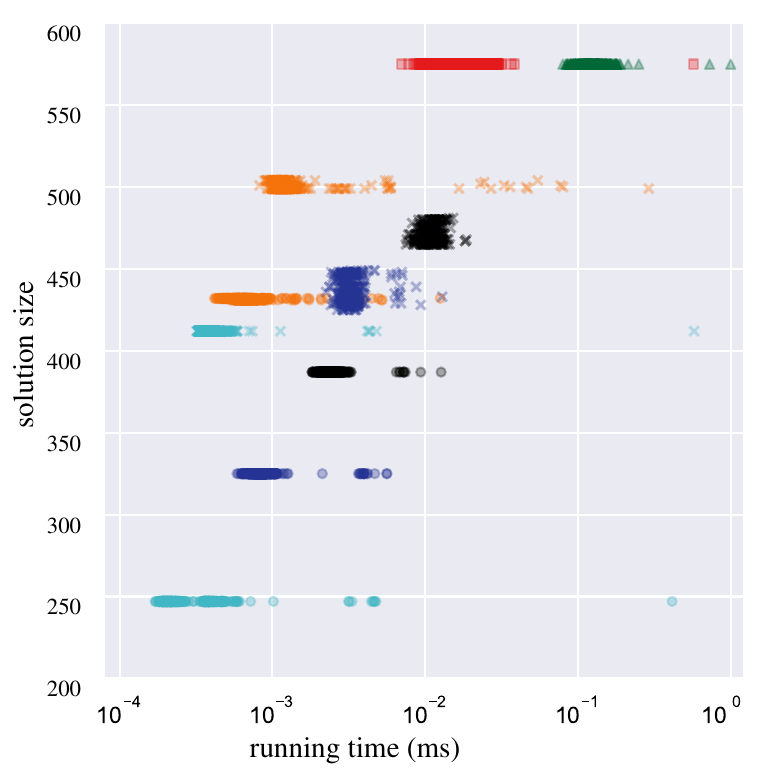}
		\caption{Uniform, $n=10\,000$, $400$ insertions}\label{fig:uni-ins}
	\end{subfigure}%
	\hfill
	\begin{subfigure}[t]{0.5\textwidth}
		\centering
		\includegraphics[width=\textwidth]
		{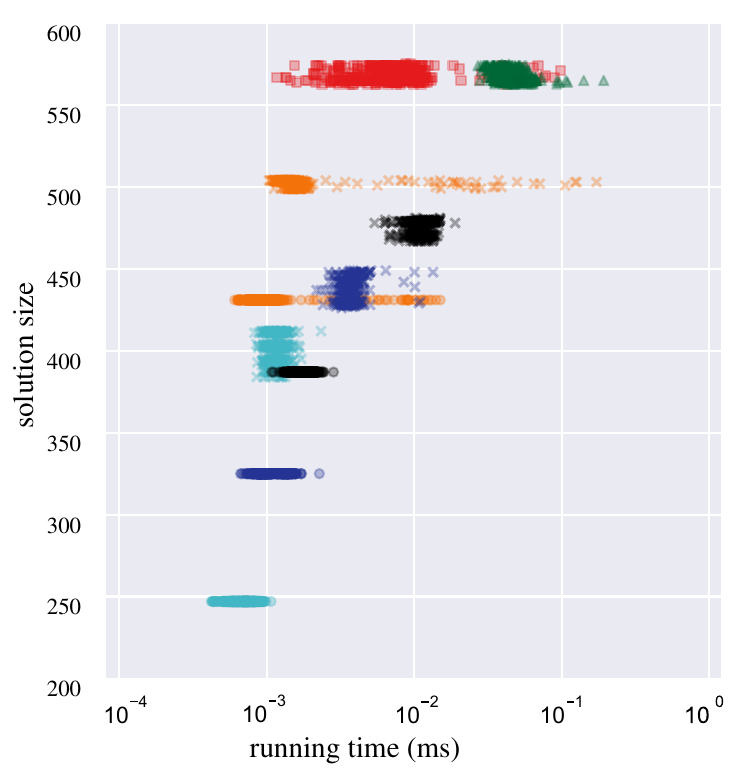}
		\caption{Uniform, $n=10\,000$, $400$ deletions}\label{fig:uni-del}
	\end{subfigure}
	
		\begin{subfigure}[t]{0.5\textwidth}
		\centering
		\includegraphics[width=\textwidth]
		{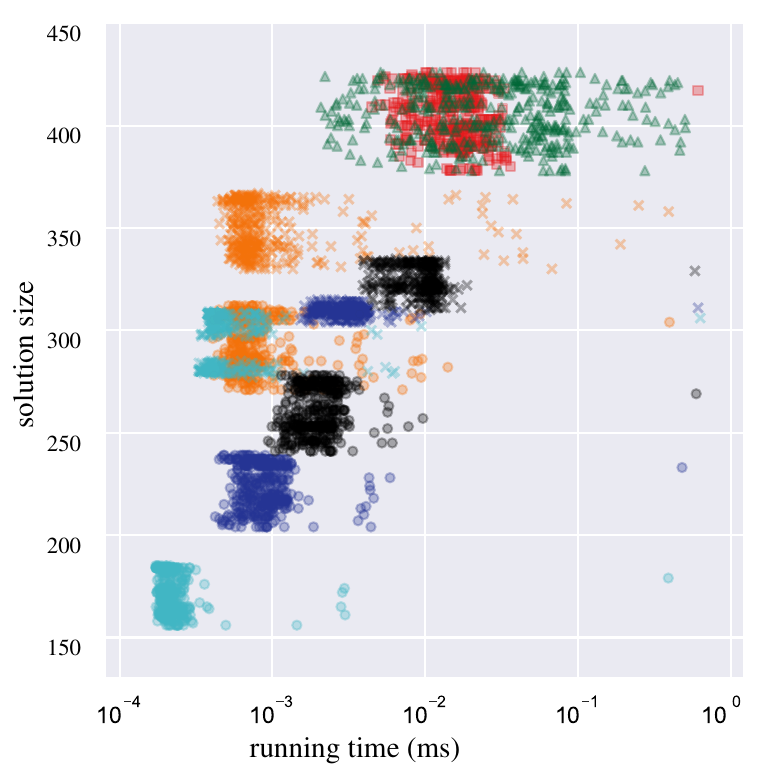}
		\caption{Gaussian, $n=10\,000$, $400$ insertions}\label{fig:gauss-ins:appendix}
	\end{subfigure}%
	\hfill
	\begin{subfigure}[t]{0.5\textwidth}
		\centering
		\includegraphics[width=\textwidth]
		{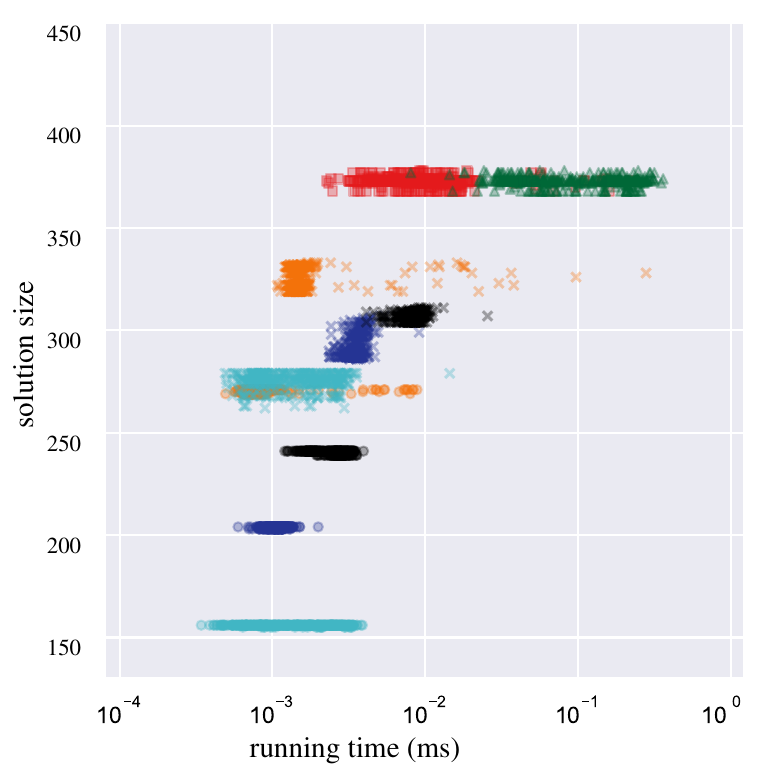}
		\caption{Gaussian, $n=10\,000$, $400$ deletions}\label{fig:gauss-del:appendix}
	\end{subfigure}

\caption{Time-quality scatter plots for synthetic benchmark instances. The x-axis (log-scale) shows runtime, the y-axis shows the solution size. We use semi-transparent markers in the scatter plots.
}
	\label{fig:plot-unifrom}
\end{figure}
Both plots show that the two \indset algorithms compute the best solutions with almost the same size and well ahead of the rest.
The \indset algorithm \misr\ is clearly faster than \misg\ on both insertions and deletions.
The approximation algorithms \grid, \gridk{2}, \gridk{4}, and \linea\ (without the greedy optimizations) show their predicted relative behavior: The better the solution quality, the worse the update times. 
Algorithms \linea\ and \glinea\ show a wide range of update times, spanning almost two orders of magnitude. 
Adding the greedy optimization drastically improves the solution quality in all cases, but typically at the cost of higher runtimes. 
For \ggridk{k} the algorithms get slower by an order of magnitude and increase the solution size by 30--50\%.
For \ggrid, the additional runtime is not as significant (but deletions are slower than insertions), and the solution size almost doubles.
Finally, for \glinea, the additional runtime is not as significant, and reaches the best quality among the approximation algorithms with about 90\% of the \indset solutions, but faster by one order of magnitude.

For the results of the Gaussian instances with $n=10\,000$ squares and $N=400$ updates plotted in Figures~\ref{fig:gauss-ins:appendix} (insertions) and~\ref{fig:gauss-del:appendix} (deletions) we observe the same ranking between the different algorithms.
However, due to the non-uniform distribution of squares, the solution sizes are more varying, especially for the insertions. 
For the deletions it is interesting to see that \grid\ and \misg\ have more strongly varying runtimes, which is in contrast to the deletions in the uniform instance, possibly due to the dependence on the vertex degree.
The best solutions are computed by \misr\ and \misg.
Regarding the runtime, \misr\ has more homogeneous update times ranging between the extrema of \misg, while they are comparably fast for insertions in average.

Algorithm \glinea\ again reaches nearly 90\% of the quality of the \indset algorithms, with a speed-up almost one order of magnitude.

\subparagraph*{\textbf{Optimality Gaps.}} Next, let us look at the results of the real-world instances in Figure~\ref{fig:plot-prac-small} and in Figure~\ref{fig:plot-prac-large}. %
The first four instances in Figure~\ref{fig:plot-prac-small} were small enough so that we could compute each \maxset exactly with MaxHS and compare the solutions of the approximation algorithms with the optimum on the y-axis.
The largest two instances in Figure~\ref{fig:plot-prac-large} plot the solution size on the y-axis.
First, let us consider Figure~\ref{fig:hotels:appendix} as a representative, which is based on a data set of 1\,788 hotels and hostels in Switzerland with mixed updates of 10\%  of the squares ($N=179$). 
Generally speaking, the results of the different algorithms are much more overlapping in terms of quality than for the synthetic instances. 
The plot shows that the \indset algorithms reach consistently between 80\% and 85\% of the optimum, but are sometimes outperformed by the greedy-augmented approaches.
Interestingly, \glinea, the best of the approximation algorithms with greedy augmentation, contributes consistently best solutions.
Regarding the runtime, \misr\ has generally faster update time than \misg\ approach.
The original approximations are well above their respective worst-case ratios, but stay between 45\% and 65\% of the optimum. 
The greedy extensions push this towards larger solutions, at the cost of higher runtimes. 
However, \glinea\ seems to provide a very good balance between quality and speed.

\begin{figure}[tbp]	
	\centering
	\begin{subfigure}[t]{0.5\textwidth}
		\centering
		\includegraphics[width=\textwidth]
		{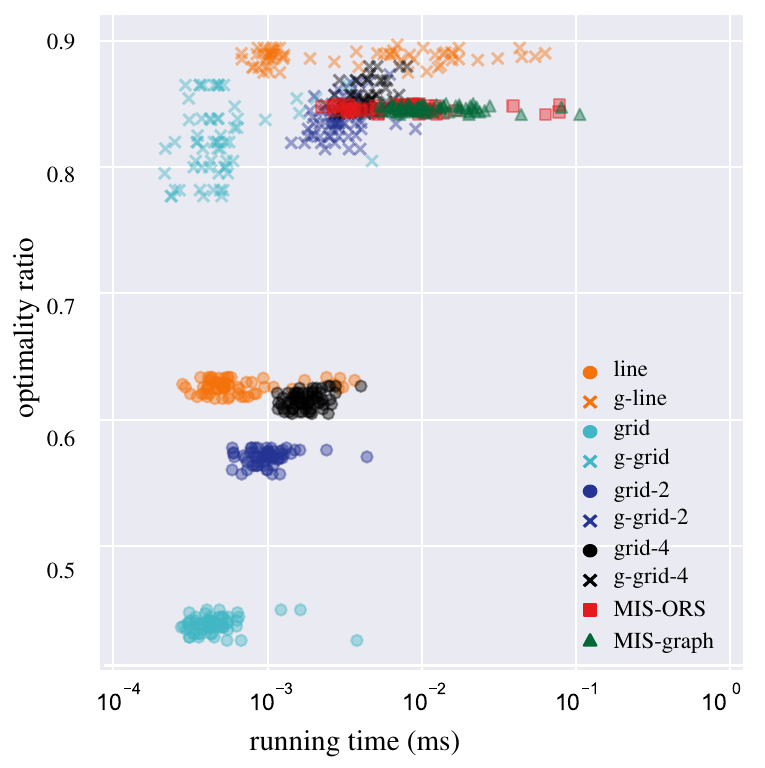}
		\caption{post-CH, 10\% mixed updates}\label{fig:post-ch}
	\end{subfigure}%
	\hfill
	\begin{subfigure}[t]{0.5\textwidth}
		\centering
		\includegraphics[width=\textwidth]
		{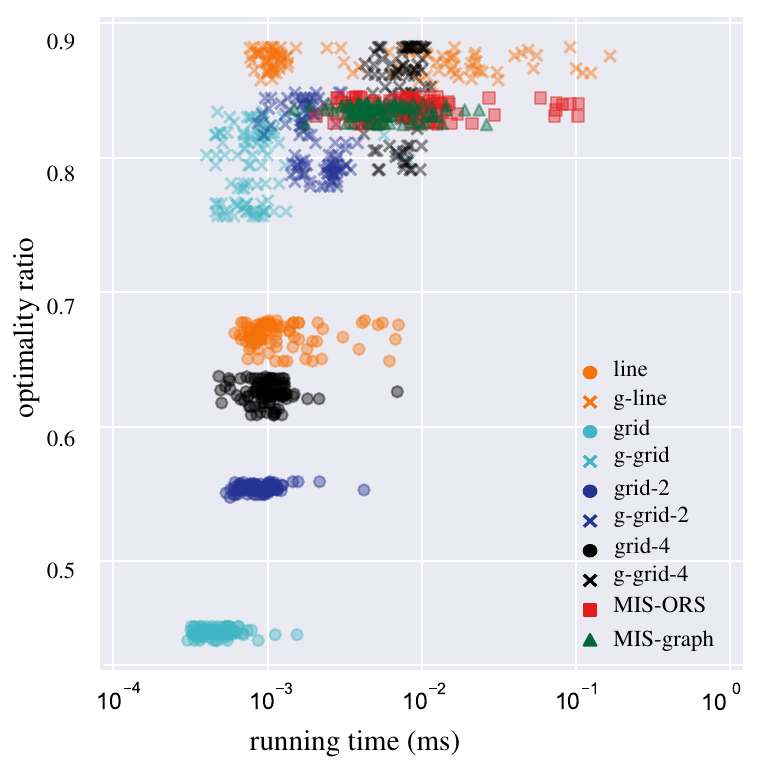}
		\caption{viewpoint-AT, 10\% mixed updates}\label{fig:peaks-at}
	\end{subfigure}	
	\begin{subfigure}[t]{0.5\textwidth}
		\centering
		\includegraphics[width=\textwidth]
		{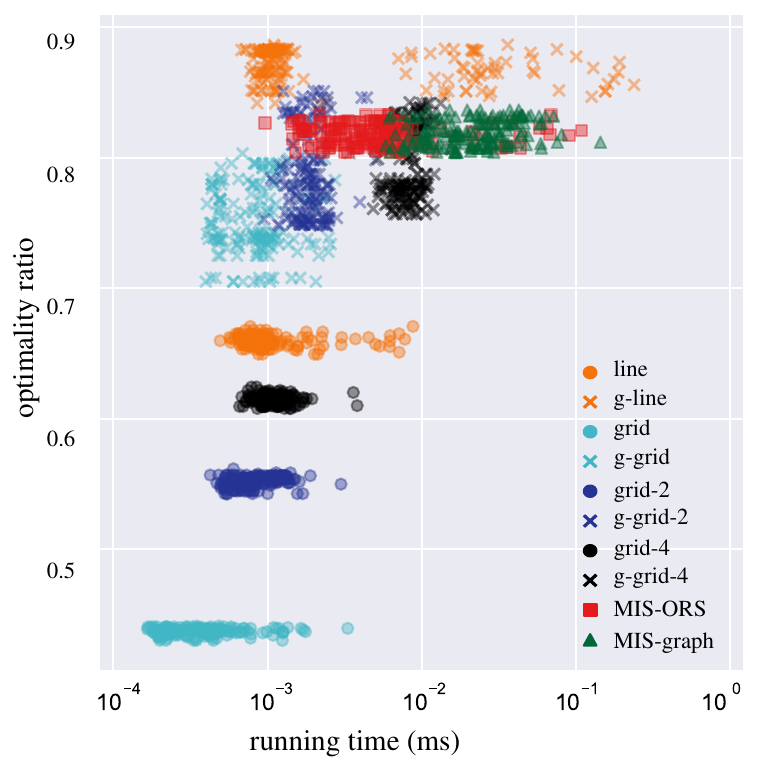}
		\caption{hotels-CH, 10\% mixed updates}
		\label{fig:hotels:appendix}
	\end{subfigure}%
	\hfill
	\begin{subfigure}[t]{0.5\textwidth}
		\centering
		\includegraphics[width=\textwidth]
		{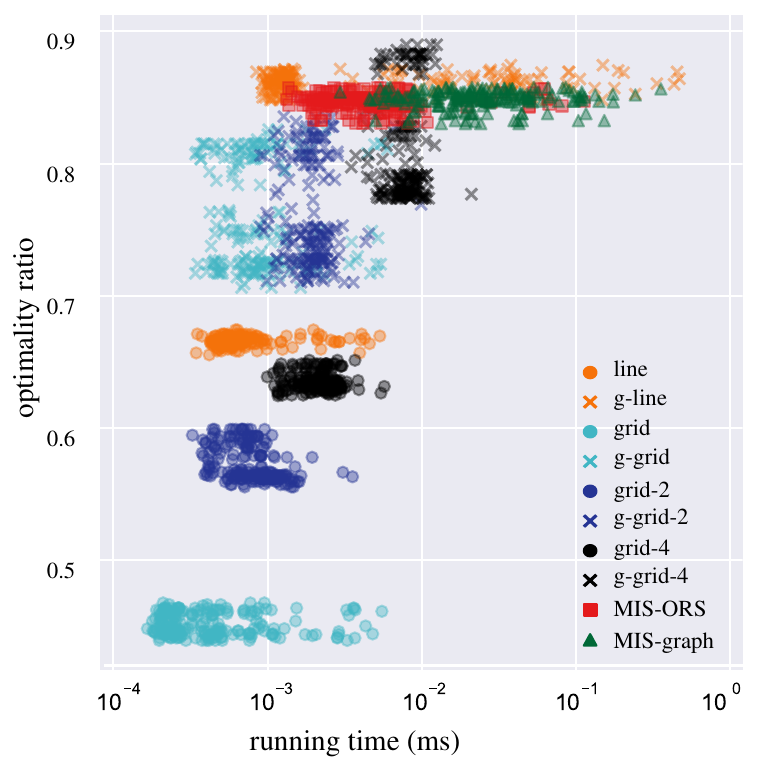}
		\caption{hotels-AT, 10\% mixed updates}\label{fig:hotels-at}
	\end{subfigure}	
	\caption{Time-quality scatter plots for the small OSM instances. The x-axis (log-scale) shows runtime. The y-axis shows the quality ratio compared to an optimal \maxset solution.}
	\label{fig:plot-prac-small}
\end{figure}
\begin{figure}[tbp]	
	\begin{subfigure}[t]{0.5\textwidth}
		\centering
		\includegraphics[width=\textwidth]
		{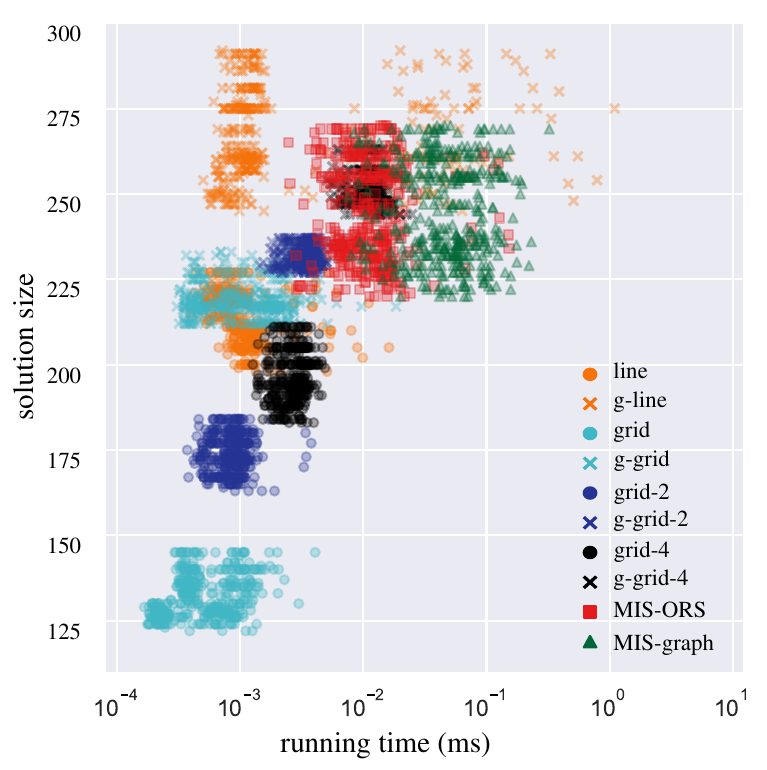}
		\caption{peaks-CH, 10\% mixed updates}\label{fig:peaks-ch}
	\end{subfigure}%
	\hfill
	\begin{subfigure}[t]{0.5\textwidth}
		\centering
		\includegraphics[width=\textwidth]
		{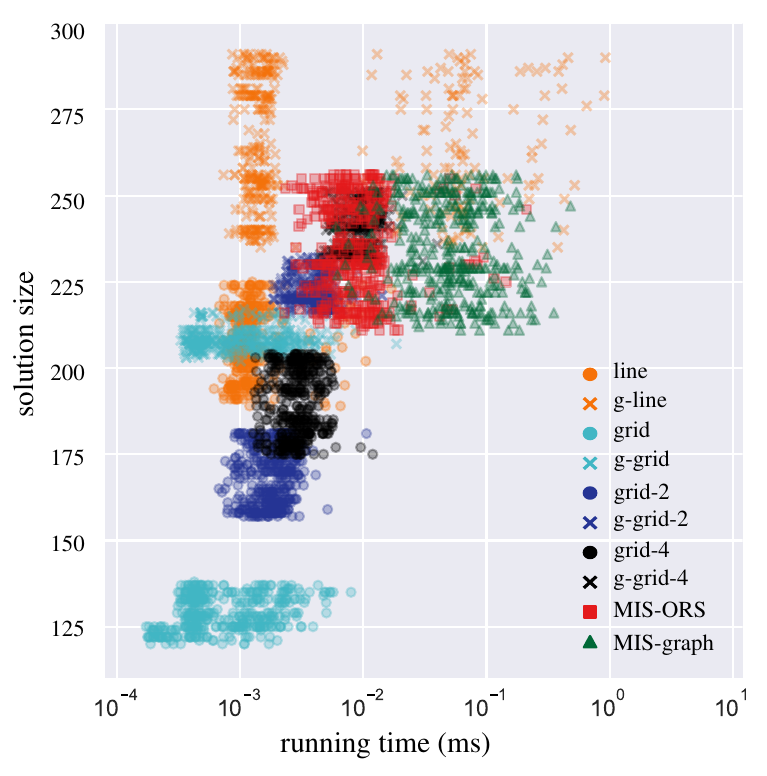}
		\caption{hamlets-CH, 10\% mixed updates}\label{fig:hamlets:appendix}
	\end{subfigure}	
	\caption{Time-quality scatter plots for the large OSM instances. The x-axis (log-scale) shows runtime. The y-axis shows the quality ratio compared to the optimal \maxset solution size.}
	\label{fig:plot-prac-large}
\end{figure}

Let us next consider the largest OSM instance in Figure~\ref{fig:hamlets:appendix}. 
It again reflects the same findings as obtained from the smaller instances. 
The instance consists of $n=4\,326$ hamlets in Switzerland with 10\% mixed updates ($N=433$) and is denser by a factor of about $2.3$ than hotels-CH (see Table~\ref{tab:osm}).
There is quite some overlap of the different algorithms in terms of the solution size, yet the algorithms form the same general ranking pattern as observed before. 
The approach \glinea\ contributes best solutions in most of the rounds.
Moreover, regarding the running time, \glinea\ is again about nearly an order of magnitude faster than the \indset algorithms, except for a few slower outliers.
Comparing the two \indset approaches, \misr\ is significantly faster than \misg.

\begin{figure}[t!]
	\centering
	\begin{subfigure}[t]{0.5\textwidth}
		\centering
		\includegraphics[width=\textwidth]
		{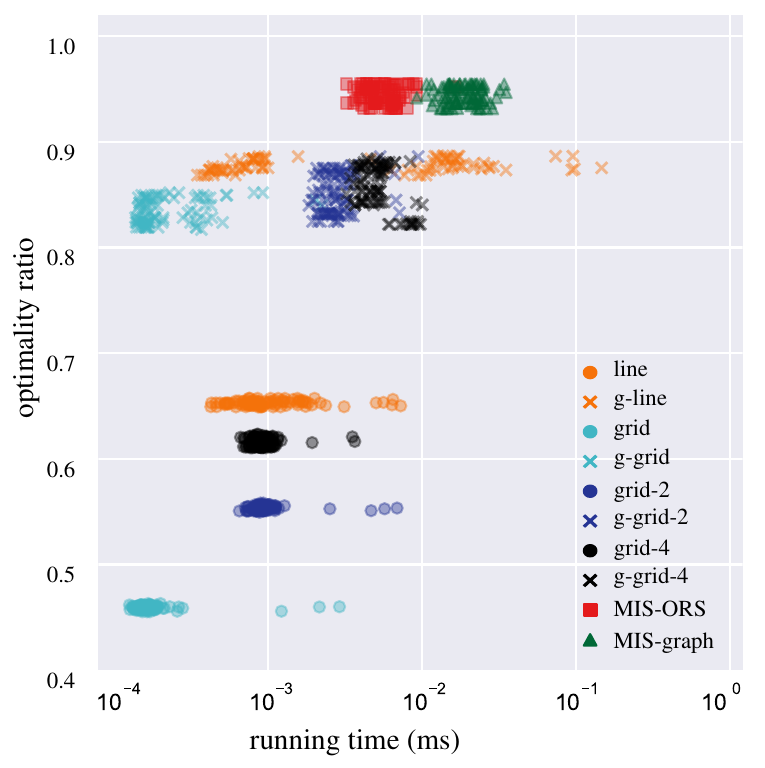}
		\caption{Uniform, $100$ insertions}\label{fig:uni-ins-opt}
	\end{subfigure}%
	\hfill
	\begin{subfigure}[t]{0.5\textwidth}
		\centering
		\includegraphics[width=\textwidth]
		{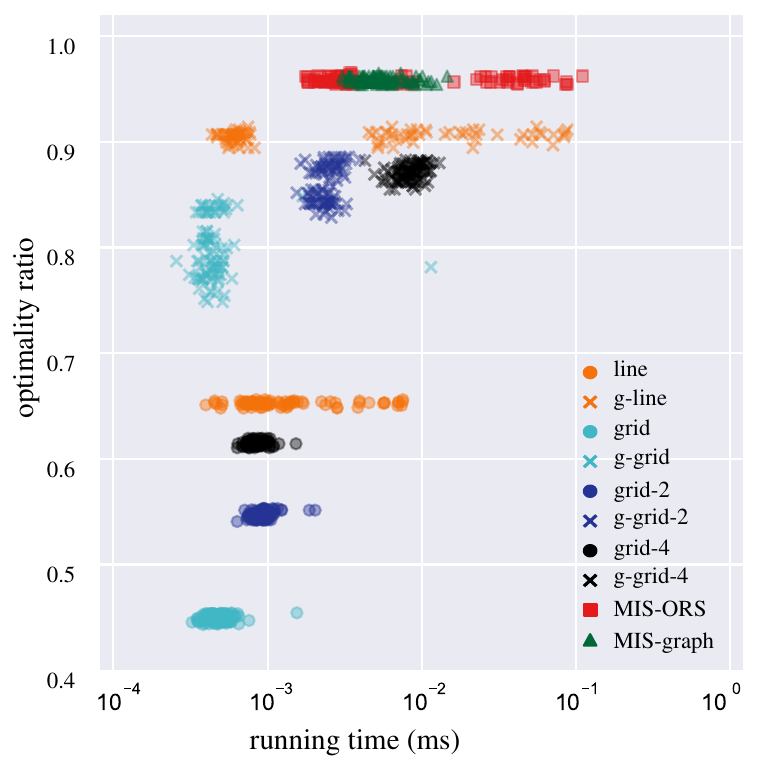}
		\caption{Uniform, $100$ deletions}\label{fig:uni-del-opt}
	\end{subfigure}
	
	\begin{subfigure}[t]{0.5\textwidth}
		\centering
		\includegraphics[width=\textwidth]
		{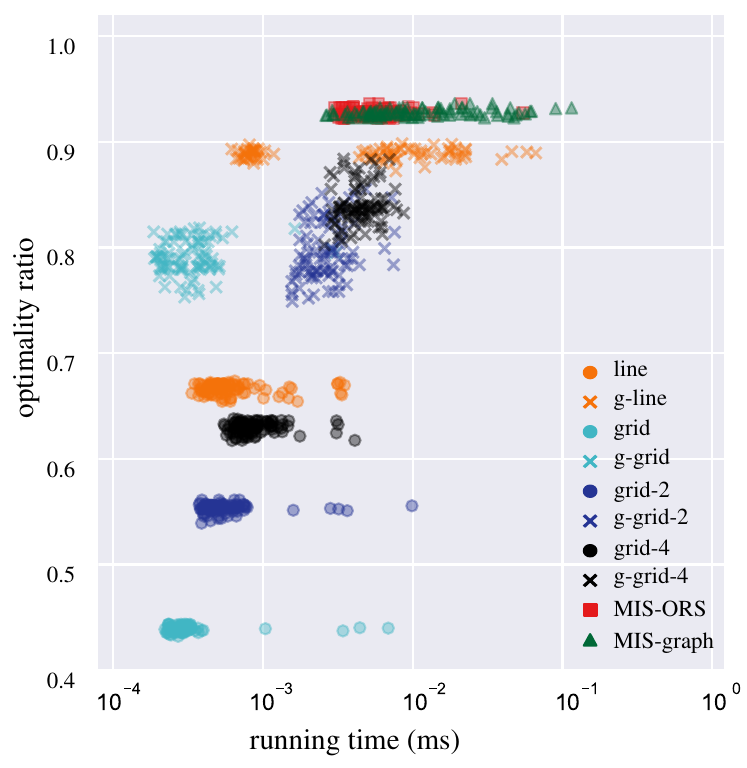}
		\caption{Gaussian, $100$ insertions}\label{fig:gau-ins-opt}
	\end{subfigure}%
	\hfill
	\begin{subfigure}[t]{0.5\textwidth}
		\centering
		\includegraphics[width=\textwidth]
		{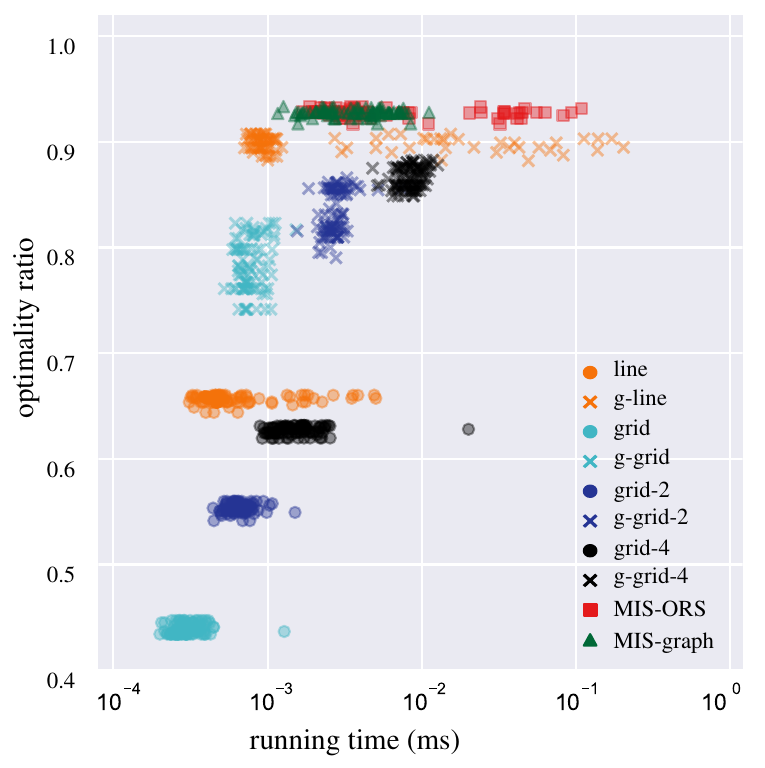}
		\caption{Gaussian, $100$ deletions}\label{fig:gau-del-opt}
	\end{subfigure}

	\caption{Time-quality scatter plots for uniform and Gaussian instances with $n=1\,000$ squares. The x-axis (log-scale) shows runtime. The y-axis shows the quality ratio compared to an optimal \maxset solution.}
	\label{fig:plot-opt-synth}
\end{figure}

Finally, Figure~\ref{fig:plot-opt-synth} shows the optimality ratios of the algorithms for small uniform and Gaussian instances with $n=1\,000$ squares. 
They confirm our earlier observations, but also show that for these small instances, \misg\ and \misr\ are comparable in terms of running time.
This is because the graph size and vertex degrees do not yet influence the running time of \misg\ strongly. Yet, as the next experiment shows, this changes drastically, as the instance size grows.

\FloatBarrier
\subparagraph*{\textbf{Runtimes.}}
In our last experiment, we explore in more detail the scalability of the algorithms for larger instances, both relative to each other and in comparison to the re-computation times of their corresponding static algorithms.
We generated $10$ random instances with $n=1\,000 k$ squares for each $k \in \{1, 2, 4, 8, 16, 32\}$ and measured the average update times over $n/10$ insertions or deletions.
The results for the Gaussian and uniform model are plotted in Figure~\ref{fig:line-plot-all-gaussian} and in Figure~\ref{fig:line-plot-all-uniform} . 
Considering the update times, we confirm the observations from the scatter plots in terms of the performance ranking. 
The running time of most algorithms grows only very slowly as the input size grows larger with the notable exception of \misg, but that was to be expected. 

\begin{figure}[tbp]
	
	\begin{subfigure}[t]{\textwidth}
		\centering
		\includegraphics[width=\textwidth]{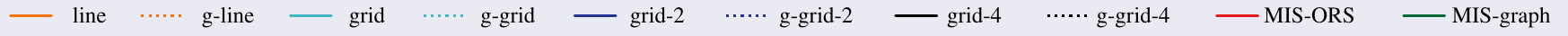}
	\end{subfigure}
	
	\centering
	\begin{subfigure}[t]{0.5\textwidth}
		\centering
		\includegraphics[width=\textwidth]
		{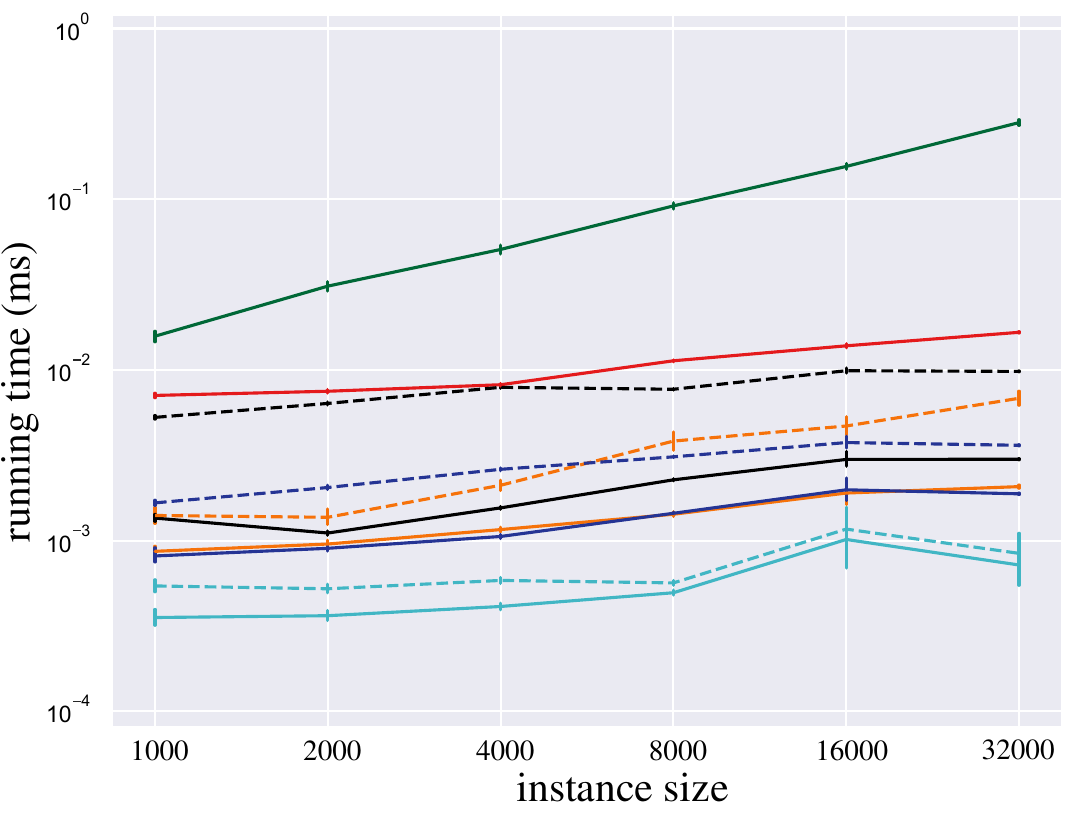}
		\caption{Update times for insertions (Gaussian)}\label{fig:updatetime-ins-gau}
	\end{subfigure}%
	\hfill
	\begin{subfigure}[t]{0.5\textwidth}
		\centering
		\includegraphics[width=\textwidth]
		{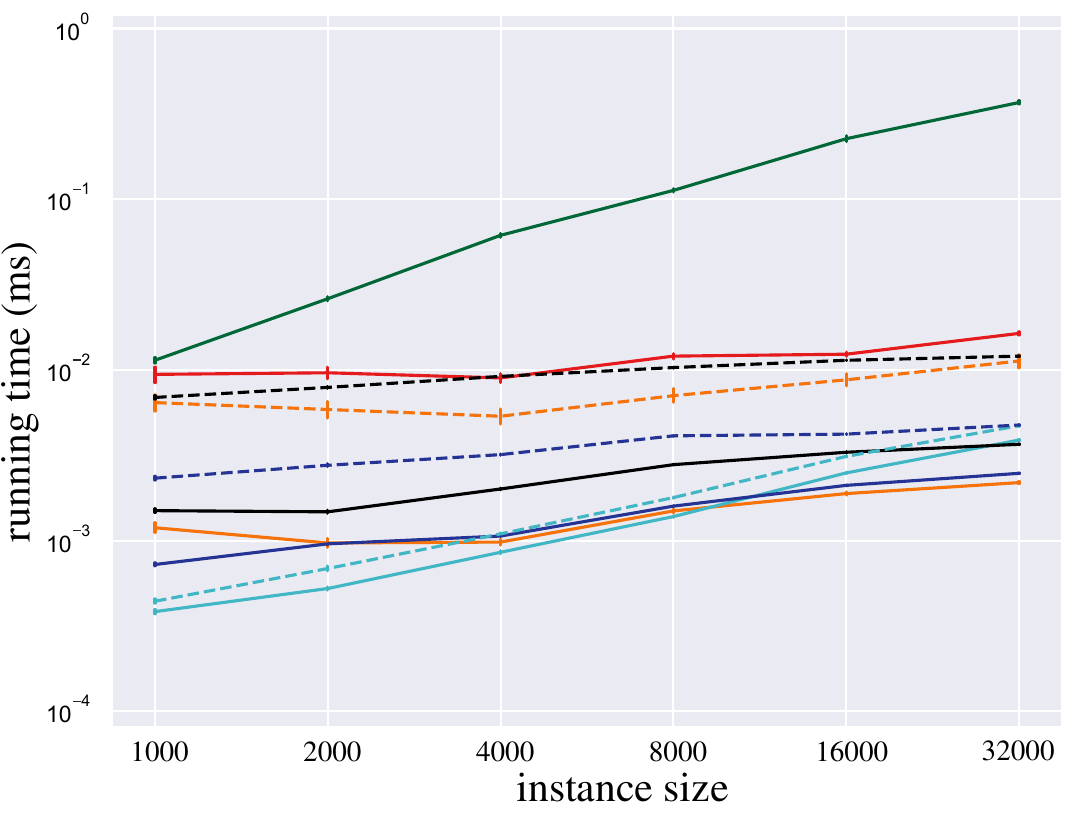}
		\caption{Update times for deletions (Gaussian)}\label{fig:updatetime-del-gau-apx}
	\end{subfigure}
	
	\begin{subfigure}[t]{0.5\textwidth}
		\centering
		\includegraphics[width=\textwidth]
		{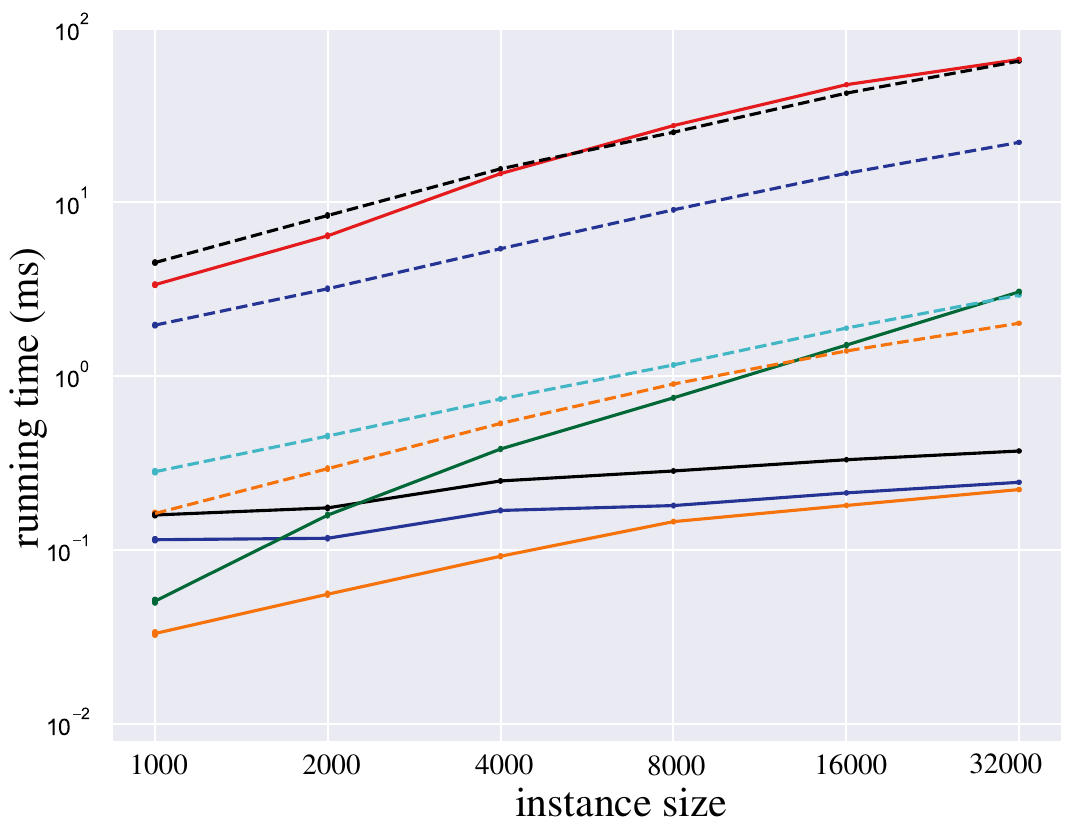}
		\caption{Re-computation times for insertions (Gaussian)}\label{fig:re-time-ins-gau}
	\end{subfigure}%
	\hfill
	\begin{subfigure}[t]{0.5\textwidth}
		\centering
		\includegraphics[width=\textwidth]
		{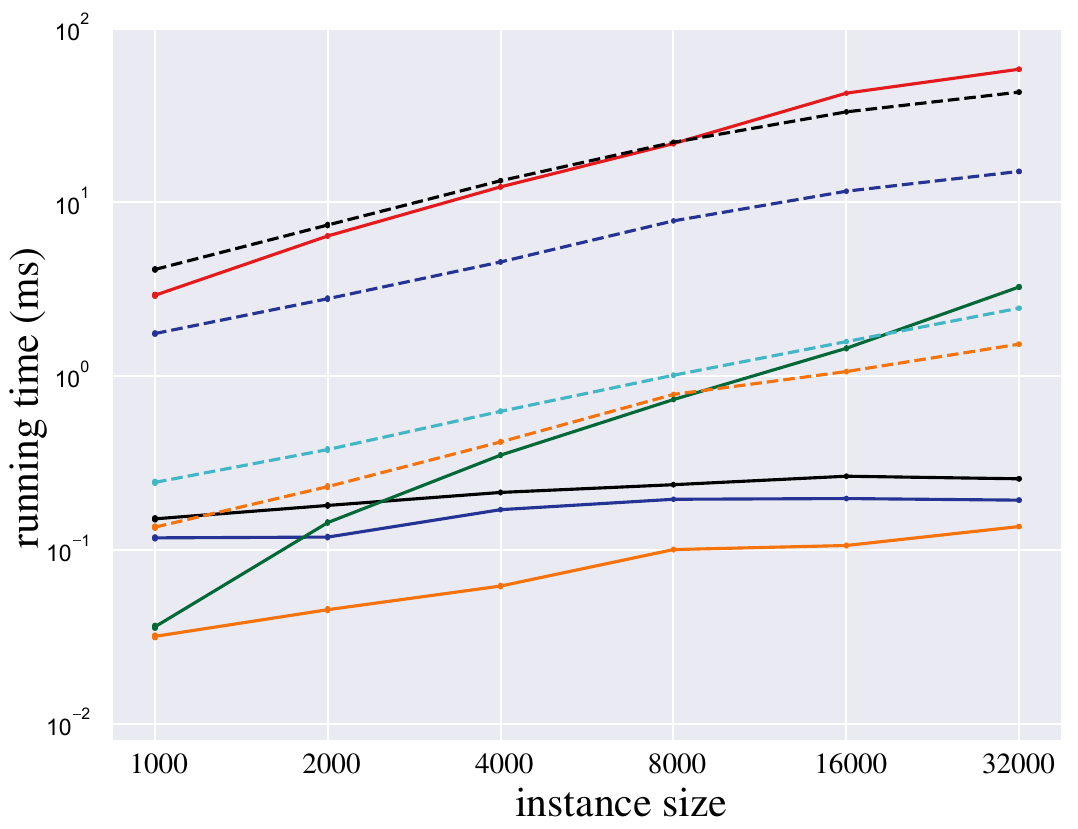}
		\caption{Re-computation times for deletions (Gaussian)}\label{fig:re-time-del-gau-apx}
	\end{subfigure}
	\caption{Log-log runtime plots (notice the different y-offsets) for dynamic updates and re-computation on Gaussian instances of size $n= 1\,000$ to $32\,000$, averaged over $n/10$ updates. Error bars indicate the standard deviation.}
	\label{fig:line-plot-all-gaussian}
\end{figure}

\begin{figure}[tbp]
	
	\begin{subfigure}[t]{\textwidth}
		\centering
		\includegraphics[width=\textwidth]{figs_ESA_dissertation/E5/legend.pdf}
	\end{subfigure}
	
	\centering
	\begin{subfigure}[t]{0.5\textwidth}
		\centering
		\includegraphics[width=\textwidth]
		{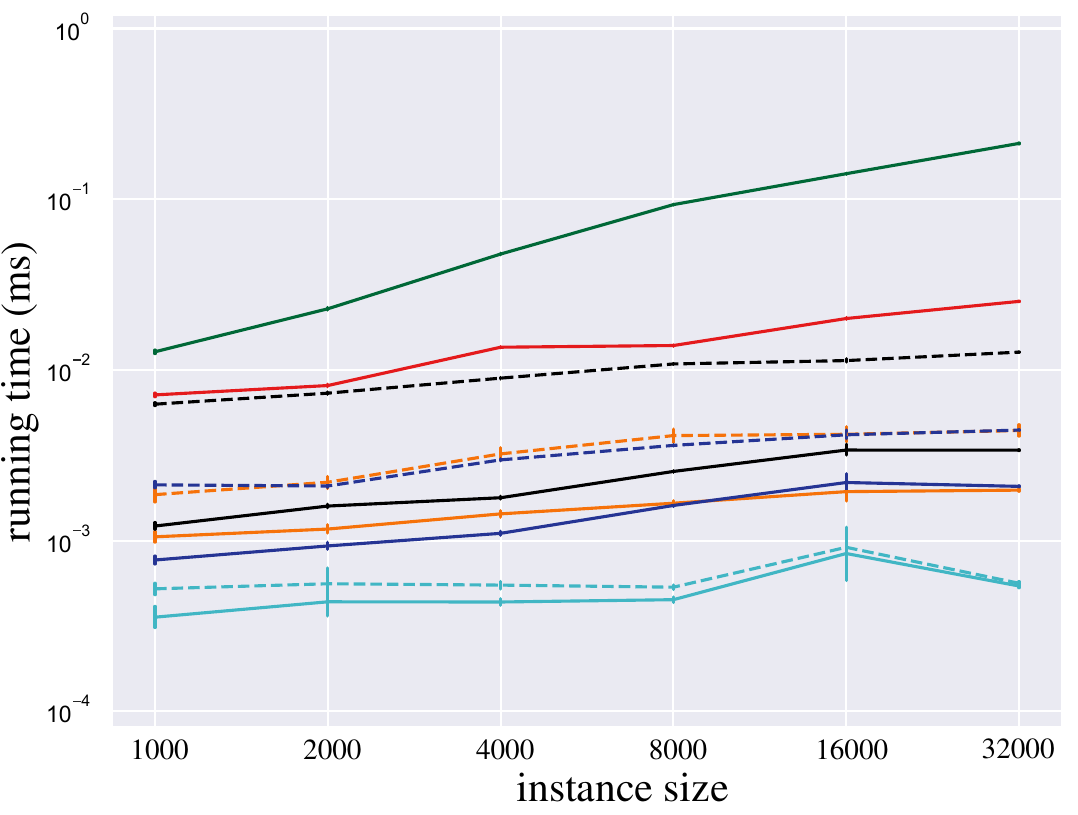}
		\caption{Update times for insertions (uniform)}\label{fig:updatetime-ins}
	\end{subfigure}%
	\hfill
	\begin{subfigure}[t]{0.5\textwidth}
		\centering
		\includegraphics[width=\textwidth]
		{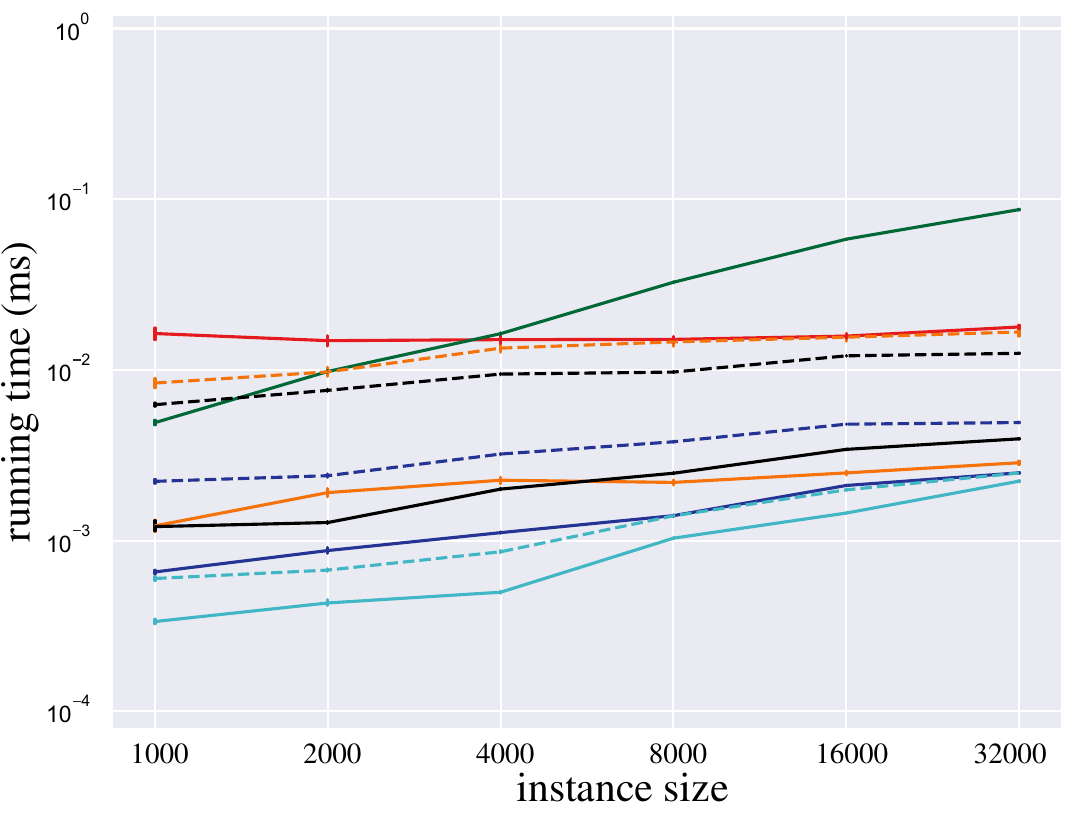}
		\caption{Update times for deletions (uniform)}\label{fig:updatetime-del}
	\end{subfigure}
	
	\begin{subfigure}[t]{0.5\textwidth}
		\centering
		\includegraphics[width=\textwidth]
		{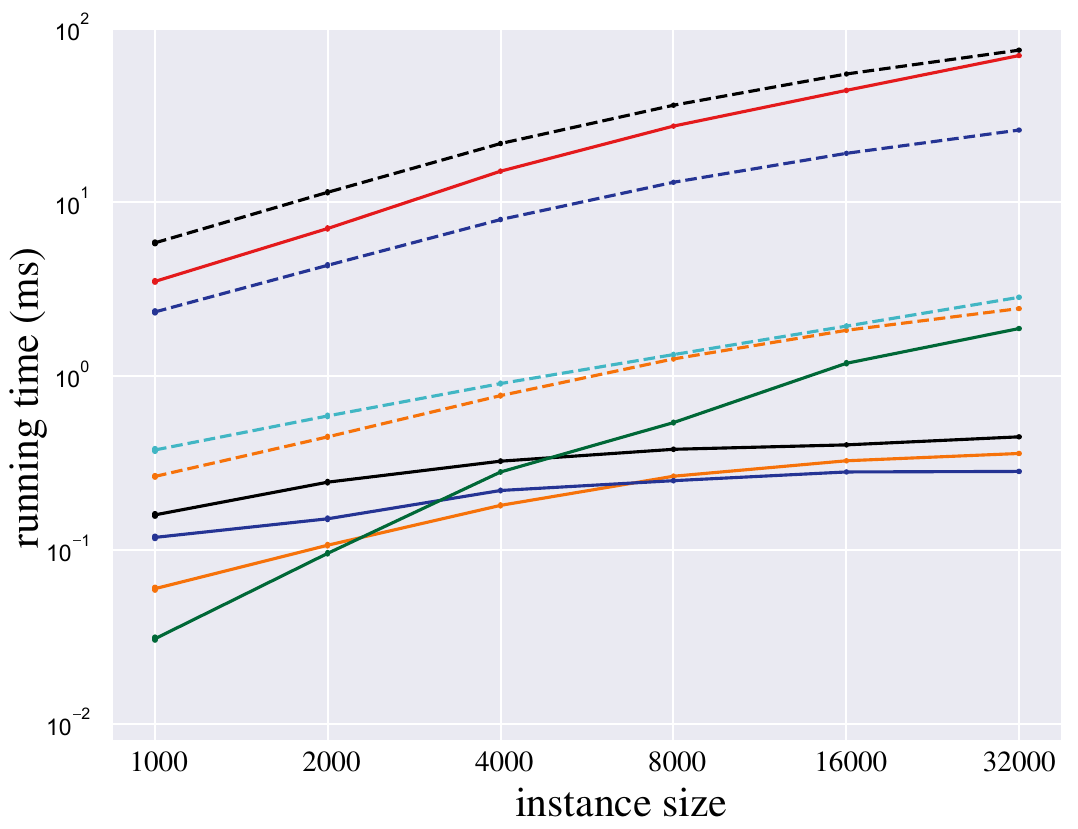}
		\caption{Re-computation times for insertions (uniform)}\label{fig:re-time-ins}
	\end{subfigure}%
	\hfill
	\begin{subfigure}[t]{0.5\textwidth}
		\centering
		\includegraphics[width=\textwidth]
		{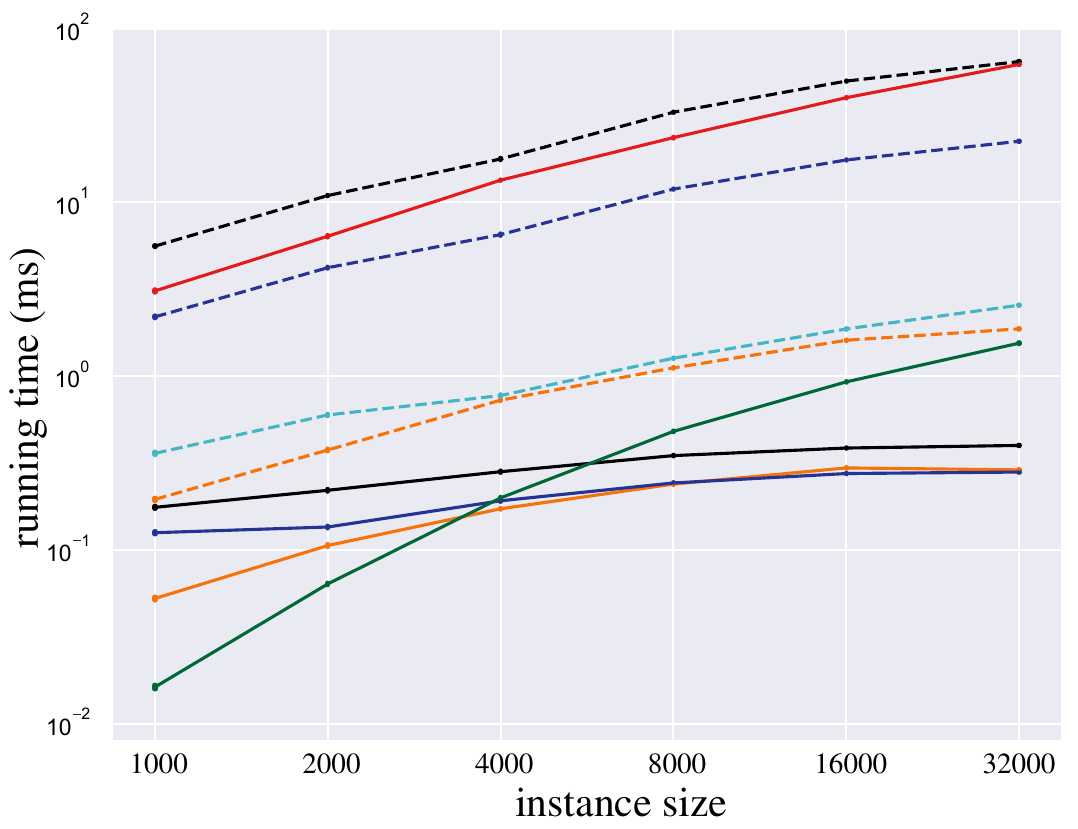}
		\caption{Re-computation times for deletions (uniform)}\label{fig:re-time-del}
	\end{subfigure}
	\caption{Log-log runtime plots (notice the different y-offsets) for dynamic updates and re-computation on uniform instances of size $n= 1\,000$ to $32\,000$, averaged over $n/10$ updates. Error bars indicate the standard deviation.}
	\label{fig:line-plot-all-uniform}
\end{figure}

In the comparison with their non-dynamic versions, i.e., re-computing solutions after each update, the dynamic algorithms indeed show a significant speed-up in practice, already for small instance sizes of $n=1\,000$, and even more so as $n$ grows (notice the different y-offsets). For some algorithms, including \misr\ and \ggridk{4}, this can be as high as 3--4 orders of magnitude for $n=32\,000$. It clearly confirms that the investigation of algorithms for dynamic \indset and \maxset problems for rectangles is well justified also from a practical point of view.
\subparagraph*{\textbf{{Discussion.}}}
Our experimental evaluation provides several interesting insights into the practical performance of the different algorithms.
For the synthetic instances, both \indset-based algorithms \misg\ and \misr\ generally showed the best solution quality in the field, reaching 90\% of the exact \maxset size, where we could compare against optimal solutions.
This is in strong contrast to their factor-4 worst-case approximation guarantee of only 25\%.

Our algorithm \misr\ avoids storing the intersection graph explicitly. Instead, we only store the relevant geometric information in a dynamic data structure and derive edges on demand. Therefore it overcomes the natural barrier of $\Omega(\Delta)$ vertex update in a dynamic graph, where $\Delta$ is the maximum degree in the graph. Instead, it has to find the intersections using the complex range query, which takes $O(\log n \log\log n)$ time.
We observe that \misr\ provides faster update times than \misg\ in general and is more scalable.
Recall that in our implementation, we used the dynamic range searching data structure from CGAL, which does not provide the theoretical worst-case update time of $O(\log n \log \log n)$ 
from Theorem~\ref{thm:mis-rt}.
Exploring how \misr\ can benefit from such a state-of-the-art dynamic data structure in practice remains to be investigated in future work.
Notwithstanding, it remains to state that even with the suboptimal data structure, \misr\  was able to compute its solutions for up to $32\,000$ squares in less than 1ms. %

An expected observation is that while consistently exceeding their theoretical guarantees, the approximation algorithms do not perform too well in practice due to their pigeonhole choice of too strictly separated subinstances.
However, a simple greedy augmentation of the approximate solutions can boost the solution size significantly, and for some algorithms even to almost that of the \indset algorithms. 
Of course, at the same time this increases the runtime of the algorithms. 
We want to point out \glinea, the greedy-augmented version of the 2-approximation algorithm \linea, as it computes very good solutions, even comparable or better than \misr\ and \misg\  for the real-world instances, and at 90\% of the \indset solutions for the synthetic instances. 
At the same time, \glinea\ is still significantly faster than \misr\ and \misg\ and thus turns out to be a well-balanced compromise between time and quality.
It is our recommended method if \misr\ or \misg\ are too slow for an application.

\FloatBarrier
\subsection{Experimental Results for Unit-Height Rectangles}
In this section, we compare our approaches \linea and \misg for unit-height rectangles with the same sets of experiments as in the previous section. 
Recall that the \linea\ approach performs a greedy \maxset approach for interval graphs on each stabbing line where the intervals are sorted with increasing right endpoints.
For fairness, we also provide the ``greedy'' version of \misg\ approach, g-\misg.
In the initialization phase, g-\misg\ sorts the vertices of the conflict graph by vertex degrees incrementally and builds the maximal independent set greedily by iterating the vertices in this new order. 
We expect that this greedy variant would provide larger solutions than \misg\ and the same update time as \misg.
\subparagraph*{\textbf{Time-quality trade-offs.}}
For our first set of experiments, we compare the \linea\ approach with \misg, including their greedy variants, in terms of update time and size of the computed solution. 
Figure~\ref{fig:plot-unifrom-rect} shows scatter plots of runtime vs solution size on uniform and Gaussian rectangle benchmarks. 
For each instance with $n=10\,000$, each algorithm performed $400$ updates, either insertions (Figure~\ref{fig:uni-ins-rect} and Figure~\ref{fig:gauss-ins:appendix-rect}) or deletions (Figure~\ref{fig:uni-del-rect} and  Figure~\ref{fig:gauss-del:appendix-rect}).
All plots show that g-\misg\, the greedy variant of \misg, computes the best solutions and well ahead of the rest. 
Both greedy variants increase the solution size significantly without significant additional runtime.
Note that g-\linea\ is nearly two orders of magnitude faster than graph-based approaches with about $90\%$ of the \indset solution obtained by g-\misg.
It is interesting to see that g-\linea\ gets larger solutions than \misg, which is in contrast with our experimental results for unit squares. 

\begin{figure}[tb]
	\begin{subfigure}[t]{\textwidth}
		\centering
		\includegraphics[width=.5\textwidth]{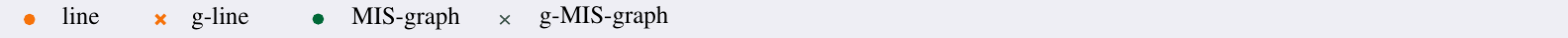}
	\end{subfigure}

	\centering
	\begin{subfigure}[t]{0.5\textwidth}
		\centering
		\includegraphics[width=\textwidth]
		{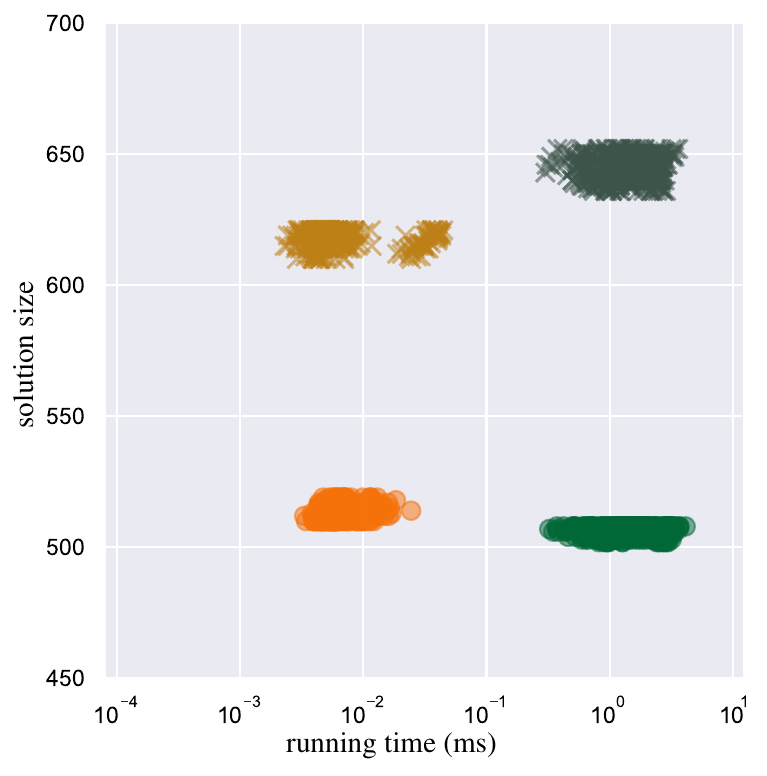}
		\caption{Uniform, $n=10\,000$, $400$ insertions}\label{fig:uni-ins-rect}
	\end{subfigure}%
	\hfill
	\begin{subfigure}[t]{0.5\textwidth}
		\centering
		\includegraphics[width=\textwidth]
		{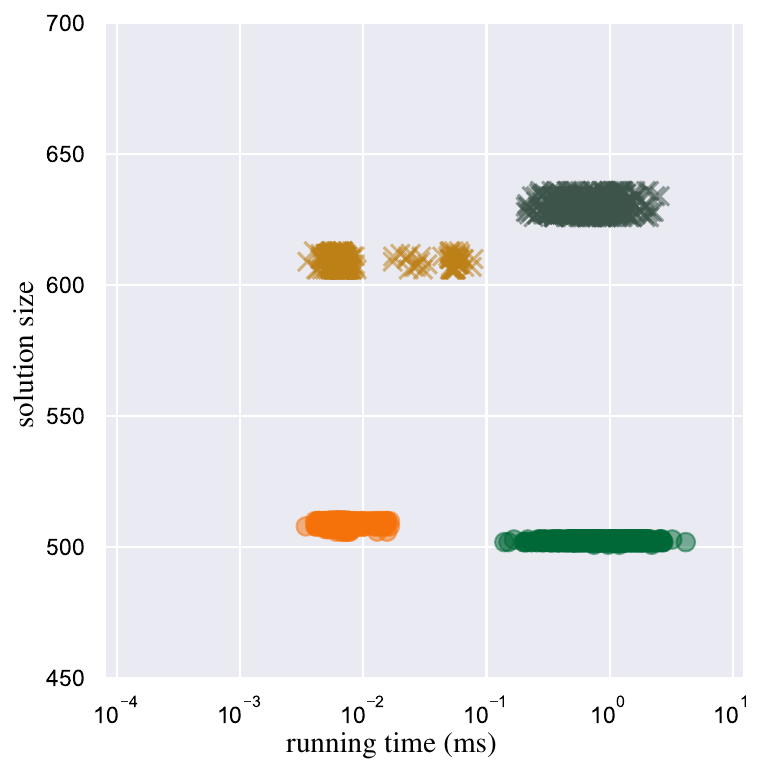}
		\caption{Uniform, $n=10\,000$, $400$ deletions}\label{fig:uni-del-rect}
	\end{subfigure}
	
		\begin{subfigure}[t]{0.5\textwidth}
		\centering
		\includegraphics[width=\textwidth]
		{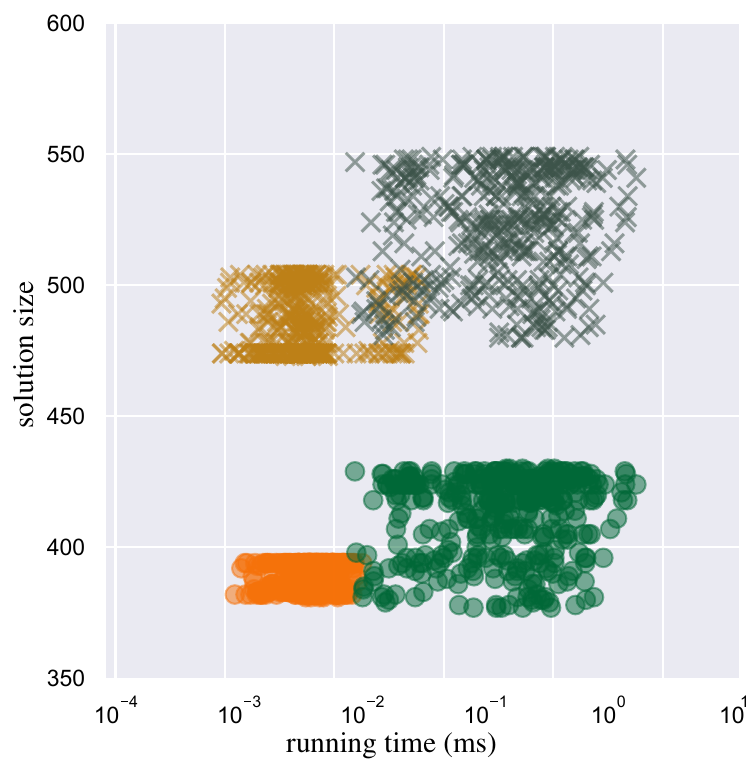}
		\caption{Gaussian, $n=10\,000$, $400$ insertions}\label{fig:gauss-ins:appendix-rect}
	\end{subfigure}%
	\hfill
	\begin{subfigure}[t]{0.5\textwidth}
		\centering
		\includegraphics[width=\textwidth]
		{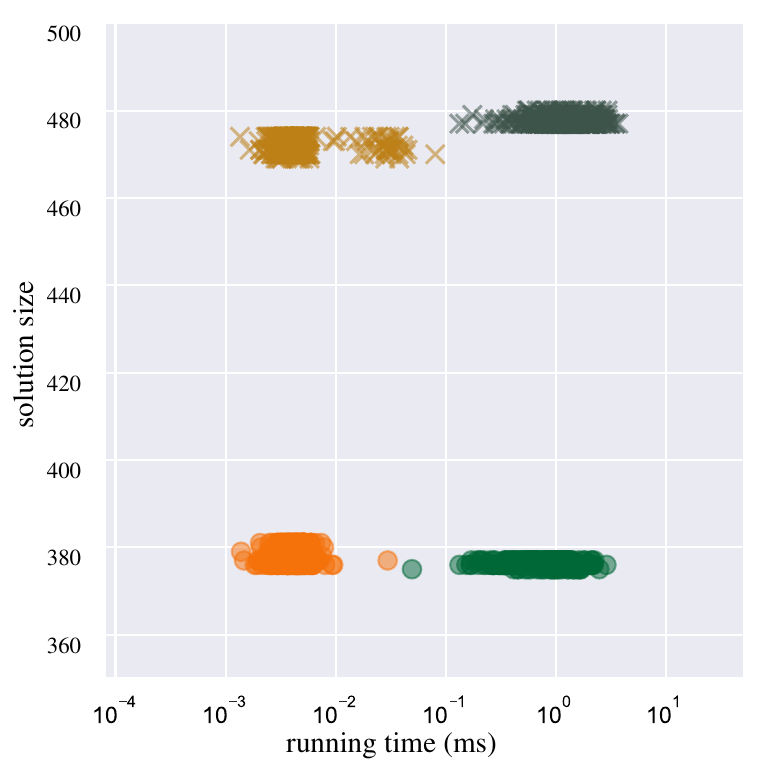}
		\caption{Gaussian, $n=10\,000$, $400$ deletions}\label{fig:gauss-del:appendix-rect}
	\end{subfigure}

\caption{Time-quality scatter plots for synthetic benchmark instances of uniform-height rectangles. The x-axis (log-scale) shows runtime, the y-axis shows the solution size.
}
	\label{fig:plot-unifrom-rect}
\end{figure}

\subparagraph*{\textbf{Optimality Gaps.}}
Next, we explore our approaches on real-world instances and show the results in Figure~\ref{fig:plot-prac-small-rect} and Figure~\ref{fig:plot-prac-large-rect}.
For small instances, we compute \maxset exactly with MaxHS at each round and compare the solution of our presented dynamic approaches with the optimum. 
These plots show that g-\misg\  reaches consistently about $90\%$ of the optimum.
In contrast with the graph-based approaches, algorithms \linea\ and g-\linea\ show a wider range of optimization ratios.
The original approximations are well above their respective worst case ratios, namely around $65\%$ and $80\%$, respectively. The greedy variants of these two approaches push this towards larger solutions with nearly no additional running time.
Note that g-\linea\ reaches between $80\%$ and $85\%$ of the optimum, but faster by one to two orders of magnitude compared to the graph-based approaches. 

Consider the large OSM instances in Figure~\ref{fig:plot-prac-large-rect}.
Here, we also observe a similar pattern as the one from the smaller instance, except that they show larger variances of solution sizes for each algorithm and \misg\ and g-\misg\ get closer to each other in terms of solution size.
One possible reason could be that with the change of input, the ordering of vertices with incremental vertex degrees is not maintained anymore.

\begin{figure}[tbp]	
	\centering
	\begin{subfigure}[t]{0.5\textwidth}
		\centering
		\includegraphics[width=\textwidth]
		{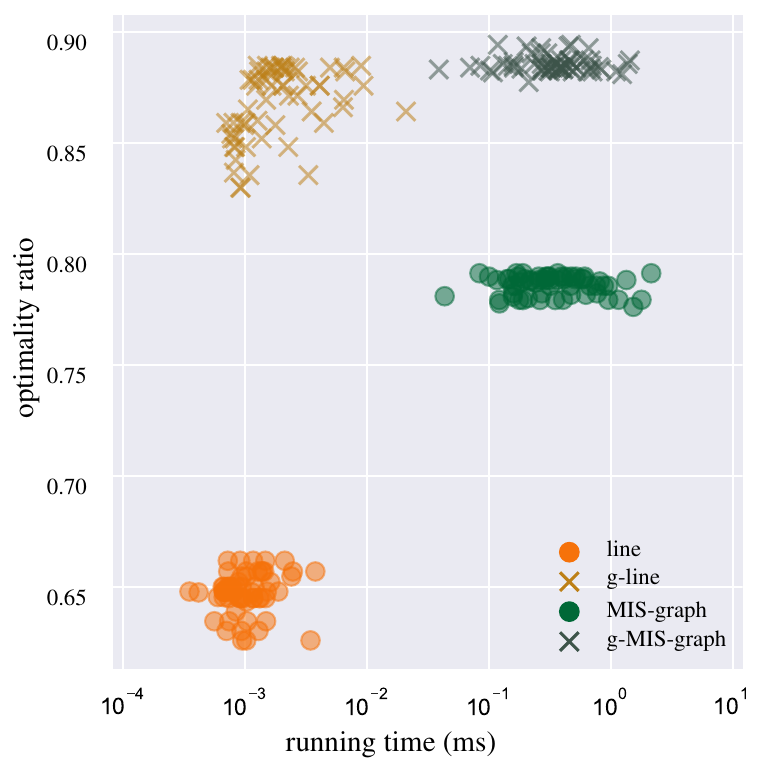}
		\caption{post-CH, 10\% mixed updates}\label{fig:post-ch-rect}
	\end{subfigure}%
	\hfill
	\begin{subfigure}[t]{0.5\textwidth}
		\centering
		\includegraphics[width=\textwidth]
		{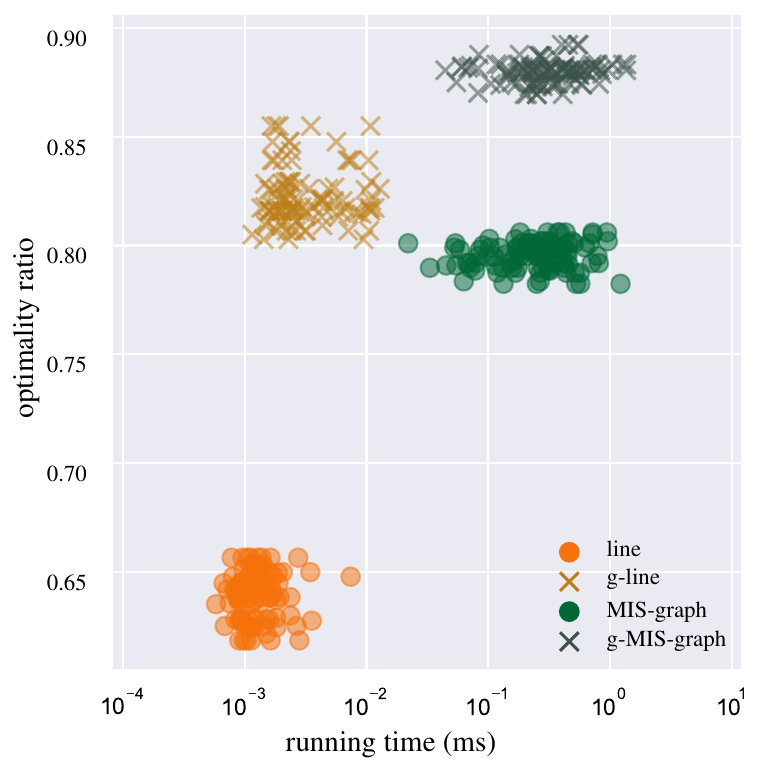}
		\caption{peaks-AT, 10\% mixed updates}\label{fig:peaks-at-rect}
	\end{subfigure}	
	\begin{subfigure}[t]{0.5\textwidth}
		\centering
		\includegraphics[width=\textwidth]
		{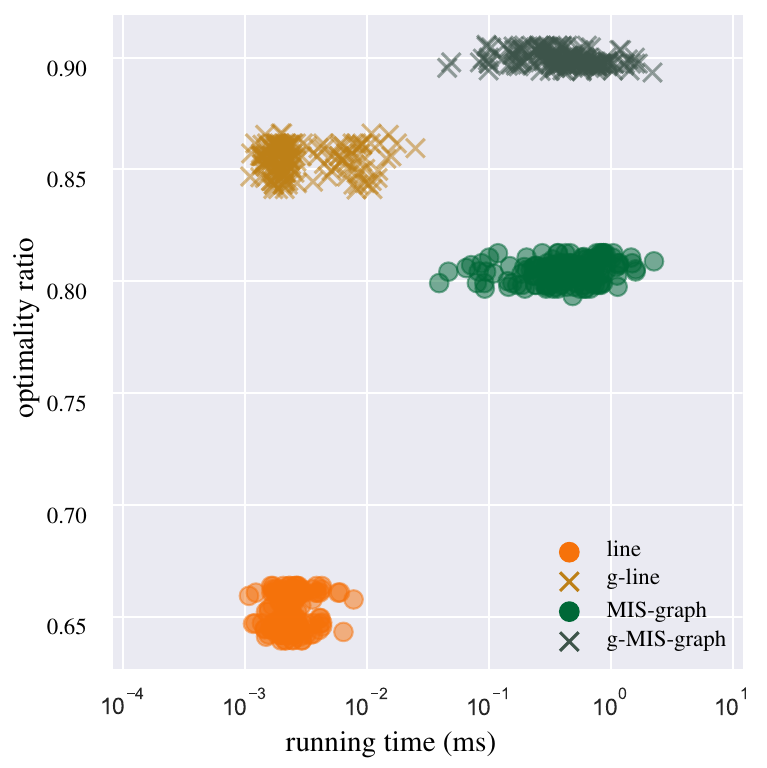}
		\caption{hotels-CH, 10\% mixed updates}
		\label{fig:hotels:appendix-rect}
	\end{subfigure}%
	\hfill
	\begin{subfigure}[t]{0.5\textwidth}
		\centering
		\includegraphics[width=\textwidth]
		{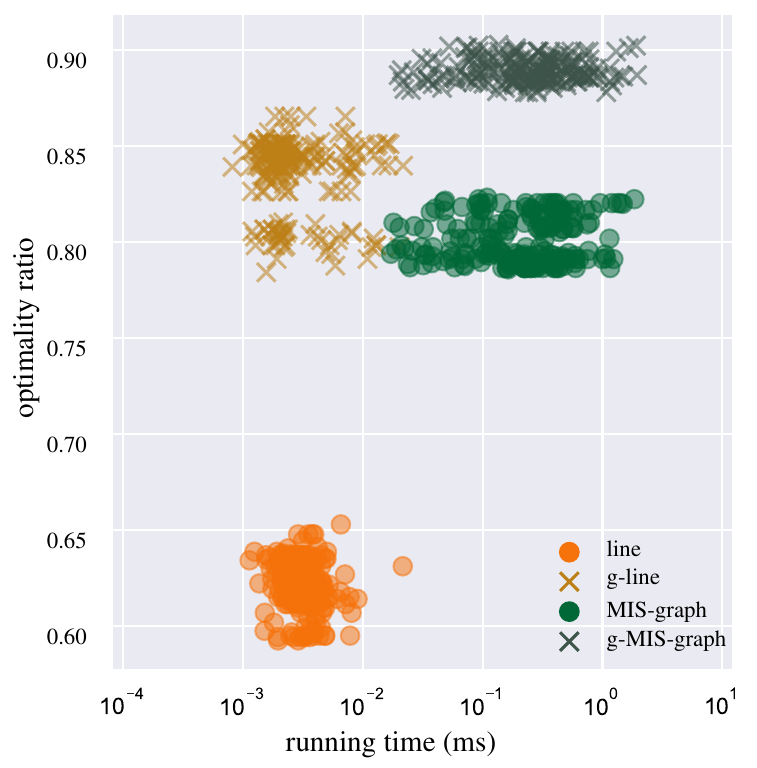}
		\caption{hotels-AT, 10\% mixed updates}\label{fig:hotels-at-rect}
	\end{subfigure}	
	\caption{Time-quality scatter plots for the small OSM uniform-height rectangle instances. The x-axis (log-scale) shows runtime. The y-axis shows the quality ratio compared to an optimal \maxset solution.}
	\label{fig:plot-prac-small-rect}
\end{figure}
\begin{figure}[tbp]	
	\begin{subfigure}[t]{0.5\textwidth}
		\centering
		\includegraphics[width=\textwidth]
		{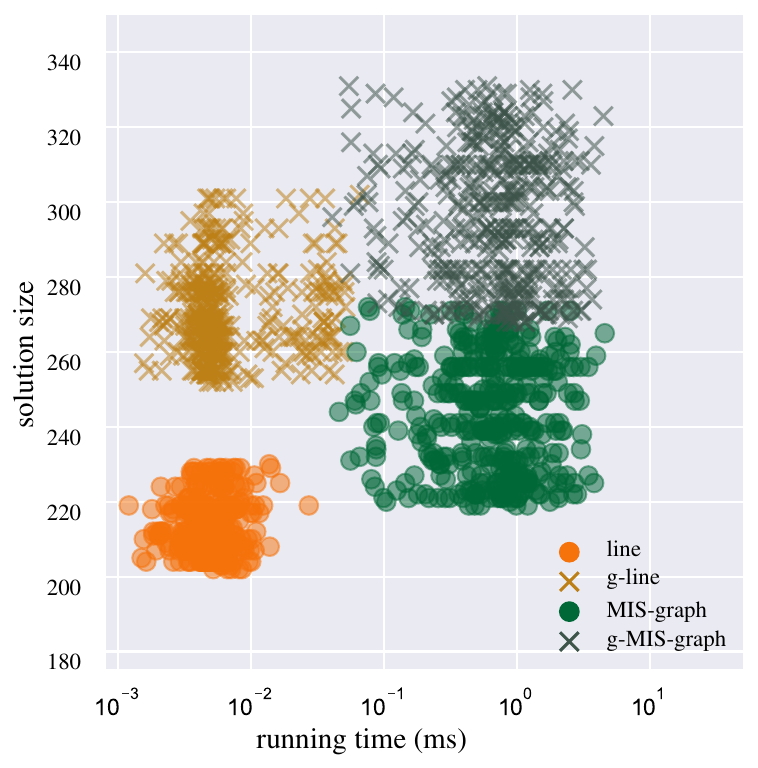}
		\caption{peaks-CH, 10\% mixed updates}\label{fig:peaks-ch-rect}
	\end{subfigure}%
	\hfill
	\begin{subfigure}[t]{0.5\textwidth}
		\centering
		\includegraphics[width=\textwidth]
		{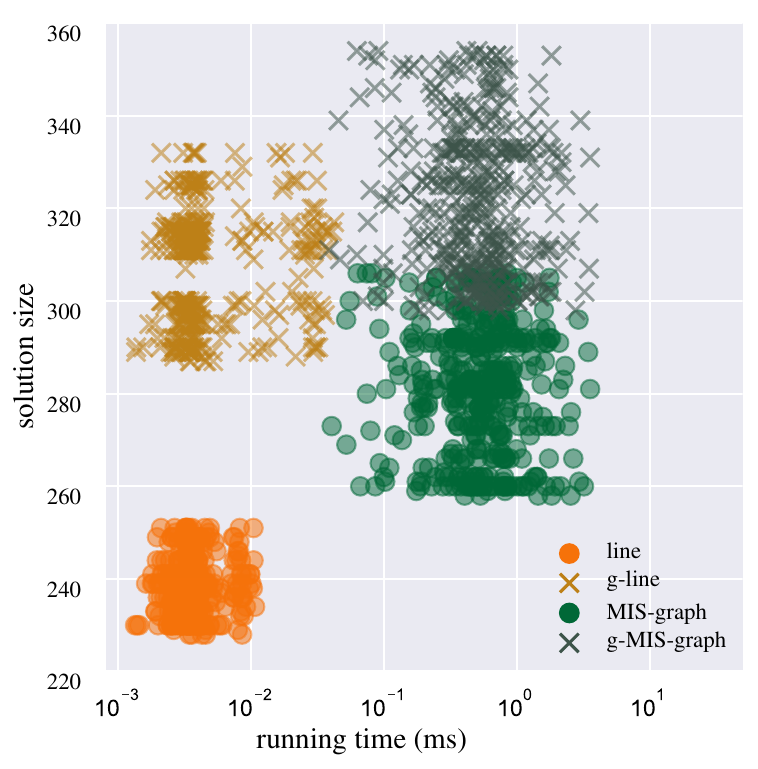}
		\caption{hamlets-CH, 10\% mixed updates}\label{fig:hamlets:appendix-rect}
	\end{subfigure}	
	\caption{Time-quality scatter plots for the large OSM uniform-height rectangle instances. The x-axis (log-scale) shows runtime. The y-axis shows the quality ratio compared to the optimal \maxset solution size.}
	\label{fig:plot-prac-large-rect}
\end{figure}

\begin{figure}[t!]
	\centering
	\begin{subfigure}[t]{0.5\textwidth}
		\centering
		\includegraphics[width=\textwidth]
		{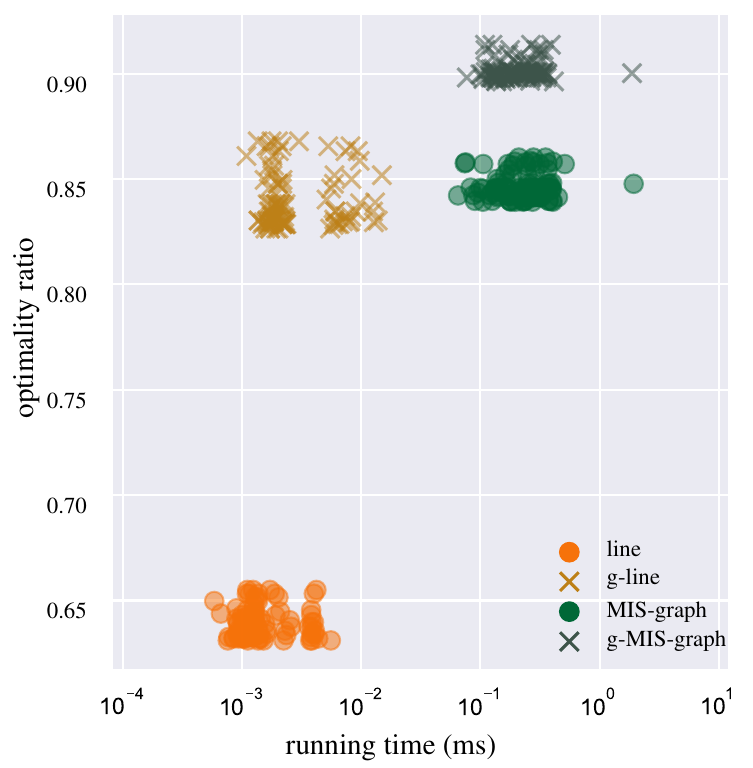}
		\caption{Uniform, $100$ insertions}\label{fig:uni-ins-opt-rect}
	\end{subfigure}%
	\hfill
	\begin{subfigure}[t]{0.5\textwidth}
		\centering
		\includegraphics[width=\textwidth]
		{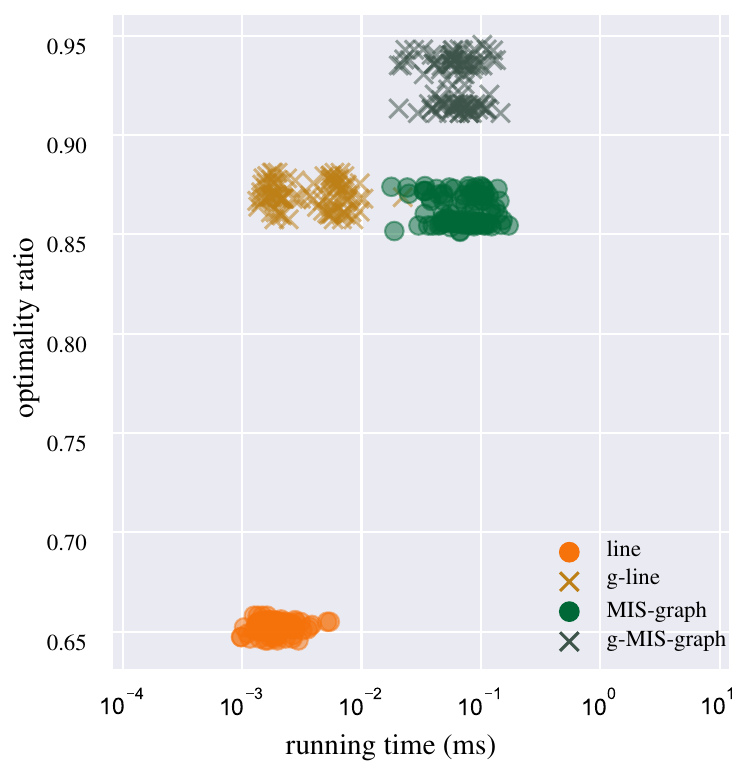}
		\caption{Uniform, $100$ deletions}\label{fig:uni-del-opt-rect}
	\end{subfigure}
	
	\begin{subfigure}[t]{0.5\textwidth}
		\centering
		\includegraphics[width=\textwidth]
		{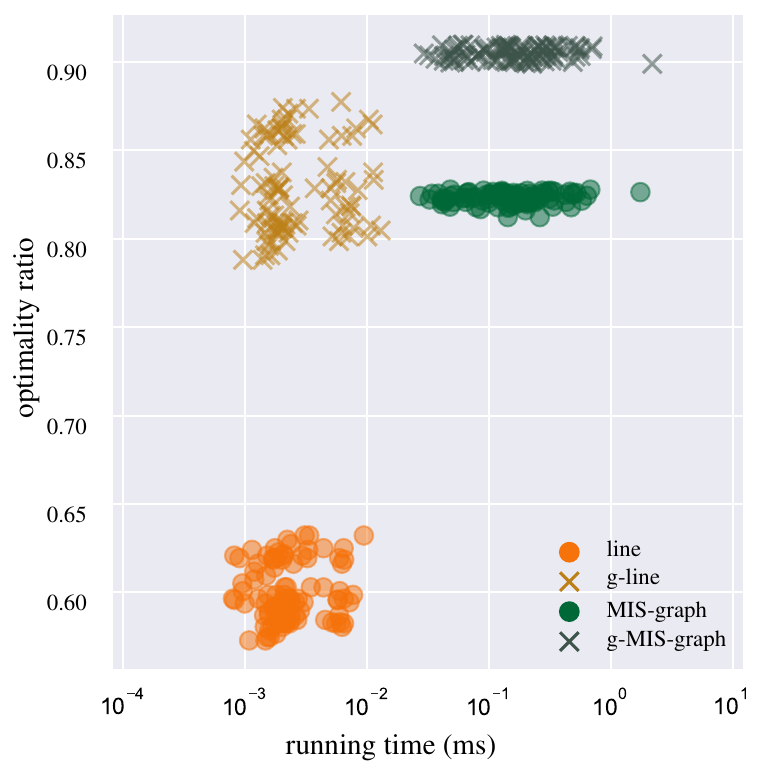}
		\caption{Gaussian, $100$ insertions}\label{fig:gau-ins-opt-rect}
	\end{subfigure}%
	\hfill
	\begin{subfigure}[t]{0.5\textwidth}
		\centering
		\includegraphics[width=\textwidth]
		{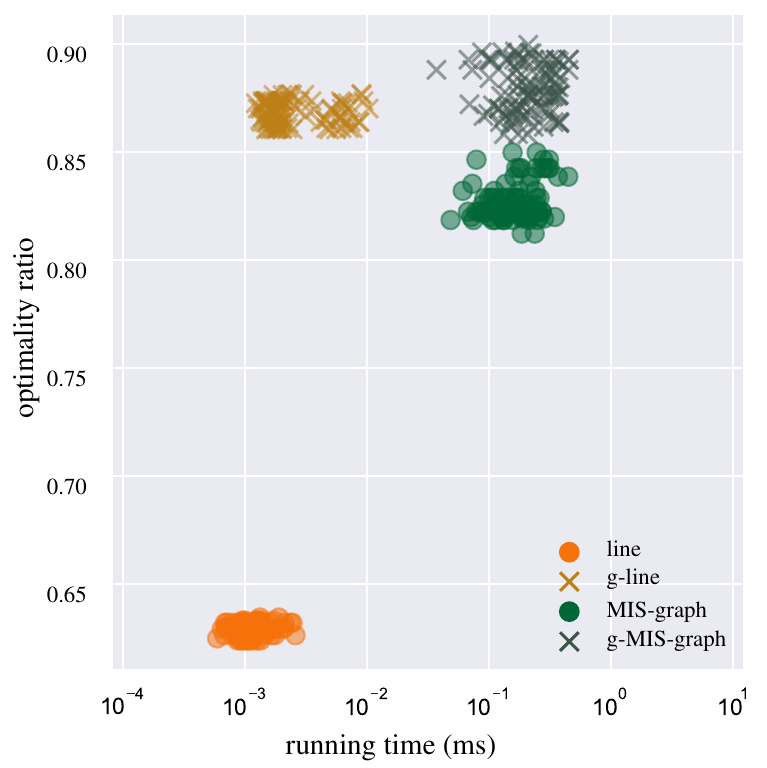}
		\caption{Gaussian, $100$ deletions}\label{fig:gau-del-opt-rect}
	\end{subfigure}

	\caption{Time-quality scatter plots for uniform and Gaussian instances with $n=1\,000$ unit-height rectangles. The x-axis (log-scale) shows runtime. The y-axis shows the quality ratio compared to an optimal \maxset solution.}
	\label{fig:plot-opt-synth-rect}
\end{figure}

Finally, we also compare the solutions of the approximation approaches with the optimum on the small uniform and Gaussian instances with $n=1\,000$ unit-height rectangles; see Figure~\ref{fig:plot-opt-synth-rect}. 
It confirms our observations from the small OSM instances that g-\linea\ provides a very good balance between quality and computation time.

\FloatBarrier

\subparagraph*{\textbf{Runtimes.}}
In the last experiment, we explore the scalability of our presented approaches, both relative to each other and in comparison to the recomputing times.
For each $k\in \{1, 2, 4, 8, 16, 32\}$, we generated 10 random instances with $1000k$ squares and measured the average update time over $100k$ insertion or deletions. 
The results are plotted in Figure~\ref{fig:line-plot-all-gaussian-rect} (for Gaussian instances) and Figure~\ref{fig:line-plot-all-uniform-rect} (for uniform instances).
Note here, since the update procedure of g-\misg\ is exactly the same as the \misg\ approach, so we do not include g-\misg\ explicitly in this set of runtime experiments.

Considering the update time, the plots confirm the observations we had before. 
The g-\linea\ approach is nearly as fast as \linea.
The runtime of  \misg\ shows steeper increase as the instance size increases than the runtime of \linea\ and g-\linea. 

In the comparison with non-dynamic versions, i.e., recomputing the solution after each update, the dynamic approaches show a speed-up by at least one order of magnitude. 

\FloatBarrier
\subparagraph*{\textbf{{Discussion.}}}
Our experimental results for unit-height and arbitrary width rectangles confirm several of our findings obtained in the experiment for unit squares. 
Moreover, all approaches are well above their respective approximation ratios.
A simple greedy variant of the \misg\ approach can significantly boost the solution size and provide the best solution quality. 
Therefore, if the solution size is then priority, then the g-\misg\ approach can be chosen. 
Our greedy augmented version of \linea\ reaches a good balance of solution quality and update time since it is significantly faster than graph-based approaches and computes very good solutions in practice.

\begin{figure}[tbp]
	
	\begin{subfigure}[t]{\textwidth}
		\centering
		\includegraphics[width=.35\textwidth]{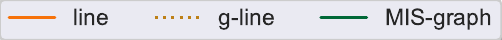}
	\end{subfigure}
	
	\centering
	\begin{subfigure}[t]{0.5\textwidth}
		\centering
		\includegraphics[width=\textwidth]
		{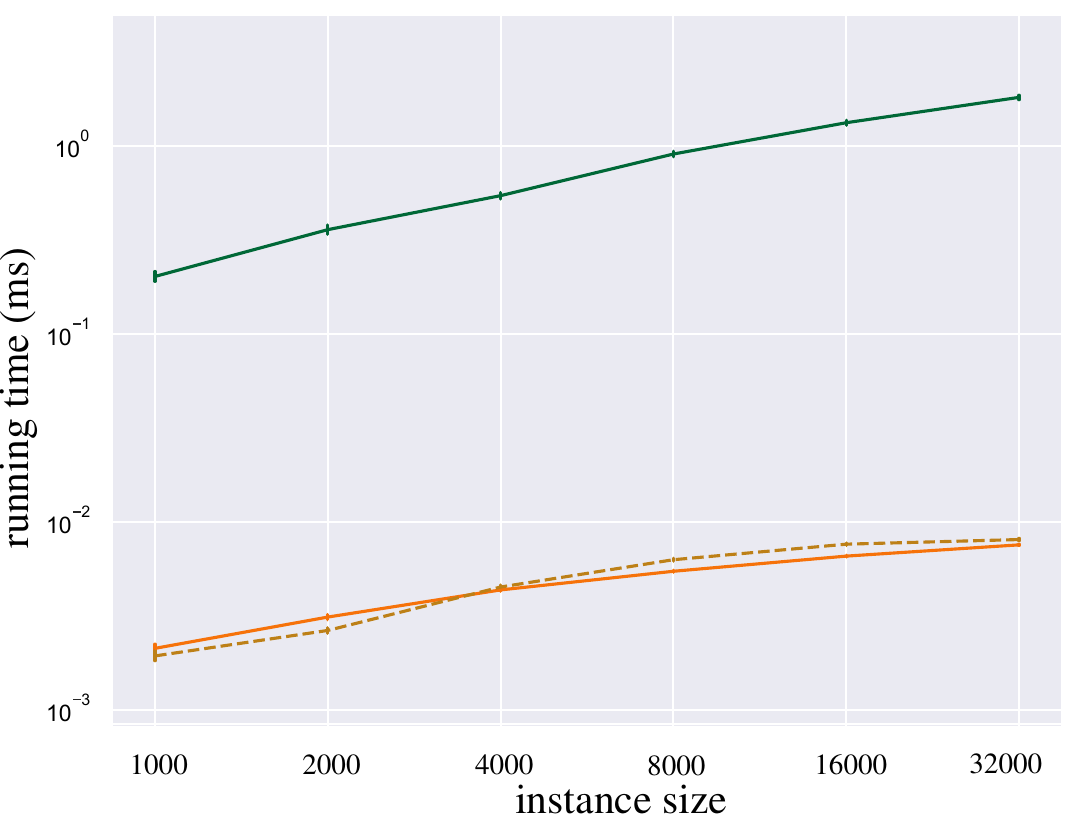}
		\caption{Update times for insertions (Gaussian)}\label{fig:updatetime-ins-gau-rect}
	\end{subfigure}%
	\hfill
	\begin{subfigure}[t]{0.5\textwidth}
		\centering
		\includegraphics[width=\textwidth]
		{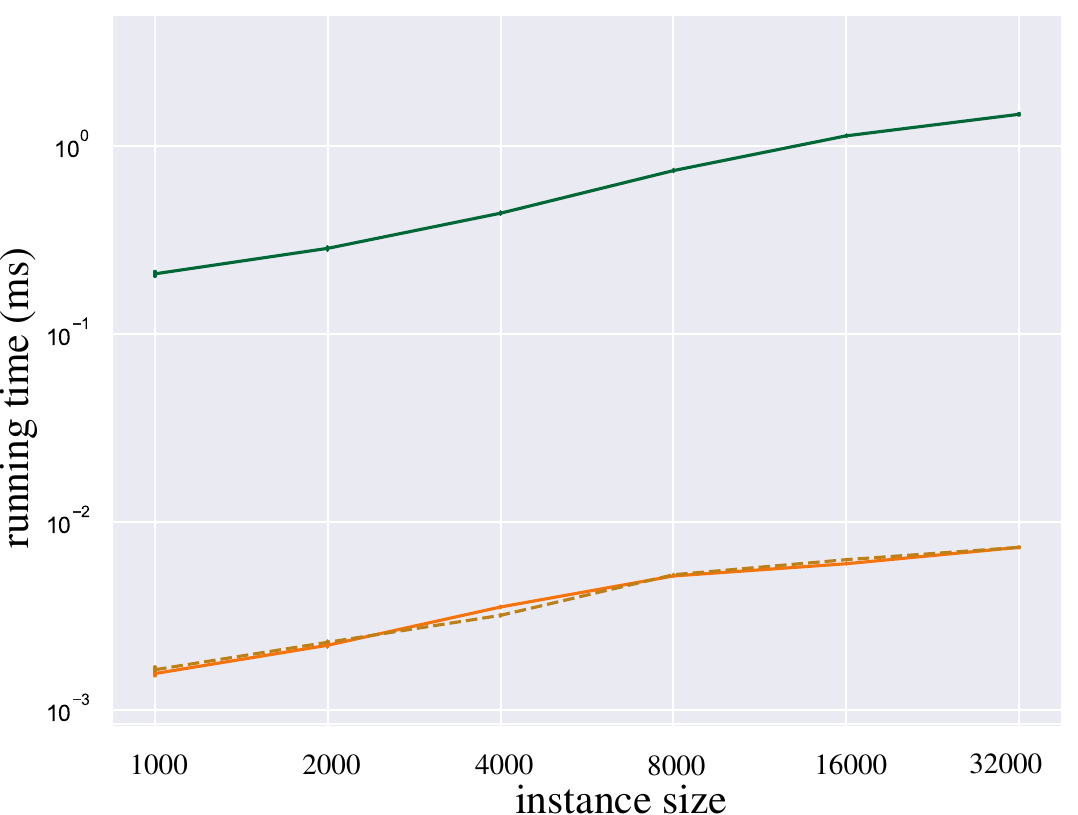}
		\caption{Update times for deletions (Gaussian)}\label{fig:updatetime-del-gau-apx-rect}
	\end{subfigure}
	
	\begin{subfigure}[t]{0.5\textwidth}
		\centering
		\includegraphics[width=\textwidth]
		{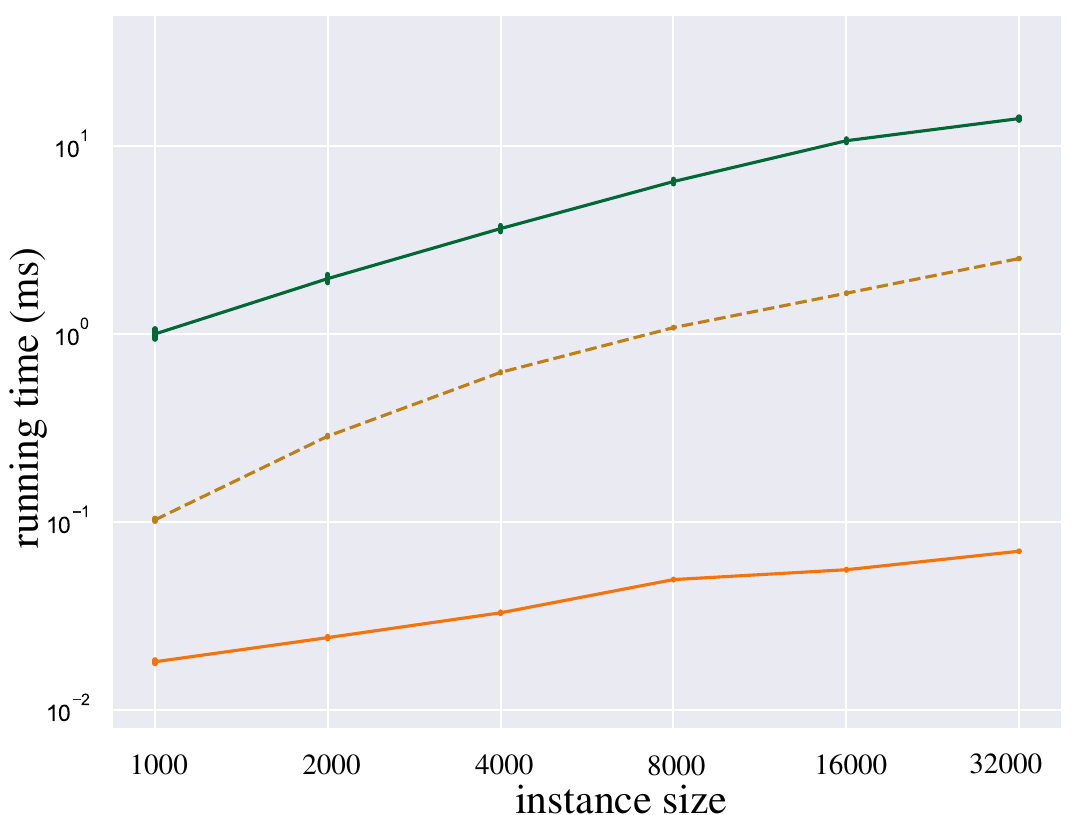}
		\caption{Re-computation times for insertions (Gaussian)}\label{fig:re-time-ins-gau-rect}
	\end{subfigure}%
	\hfill
	\begin{subfigure}[t]{0.5\textwidth}
		\centering
		\includegraphics[width=\textwidth]
		{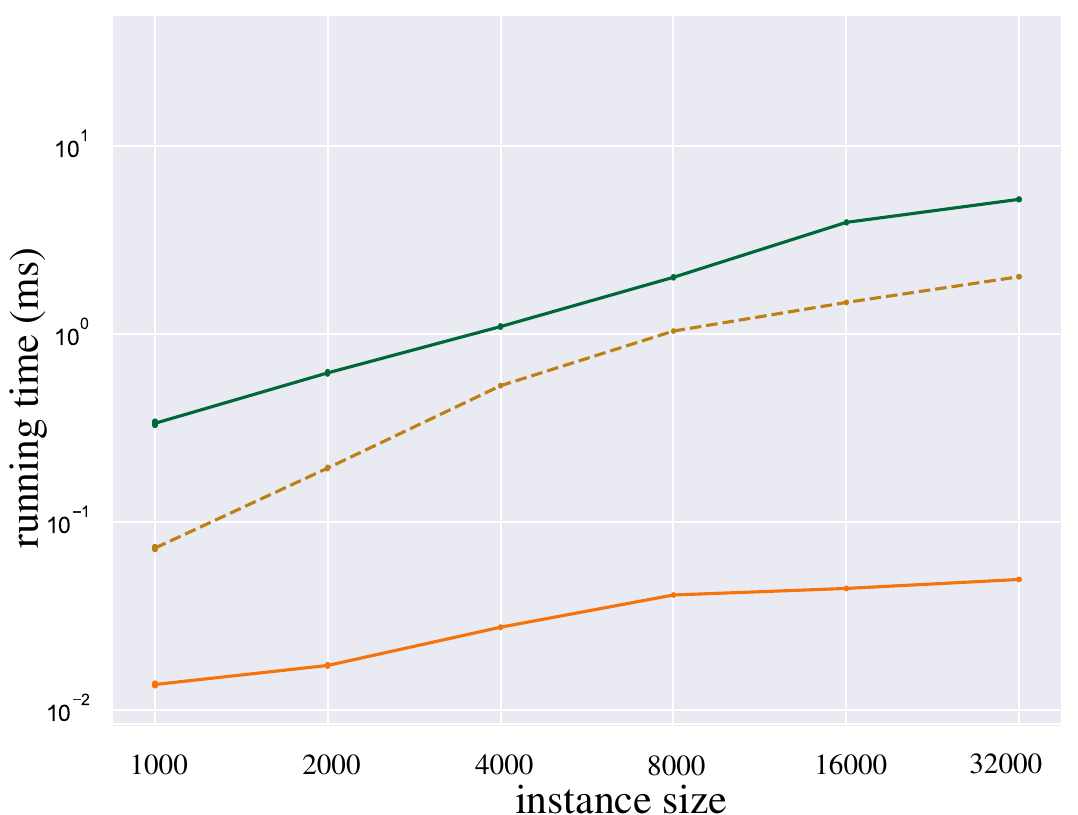}
		\caption{Re-computation times for deletions (Gaussian)}\label{fig:re-time-del-gau-apx-rect}
	\end{subfigure}
	\caption{Log-log runtime plots (notice the different y-offsets) for dynamic updates and re-computation on Gaussian uniform-height rectangle instances of size $n= 1\,000$ to $32\,000$, averaged over $n/10$ updates. Error bars indicate the standard deviation.}
	\label{fig:line-plot-all-gaussian-rect}
\end{figure}

\begin{figure}[tbp]
	
	\begin{subfigure}[t]{\textwidth}
		\centering
		\includegraphics[width=.35\textwidth]{figs_ESA_dissertation/E5-Rect/legend.pdf}
	\end{subfigure}
	
	\centering
	\begin{subfigure}[t]{0.5\textwidth}
		\centering
		\includegraphics[width=\textwidth]
		{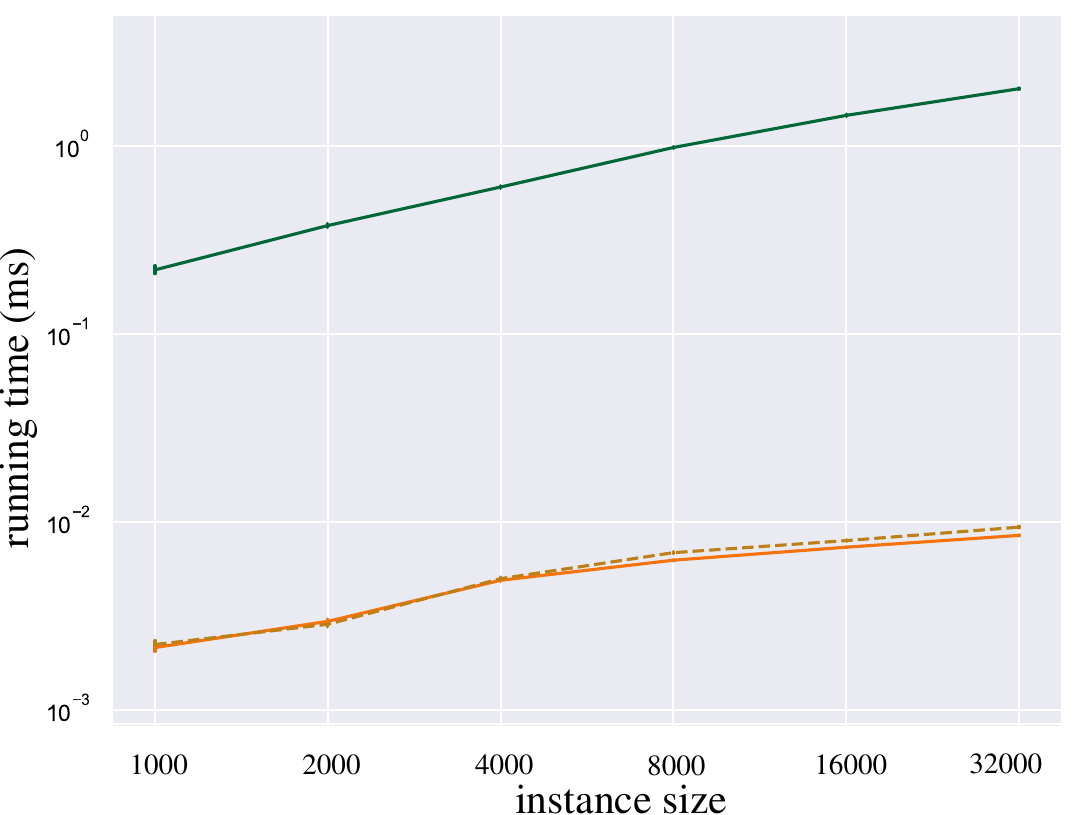}
		\caption{Update times for insertions (uniform)}\label{fig:updatetime-ins-rect}
	\end{subfigure}%
	\hfill
	\begin{subfigure}[t]{0.5\textwidth}
		\centering
		\includegraphics[width=\textwidth]
		{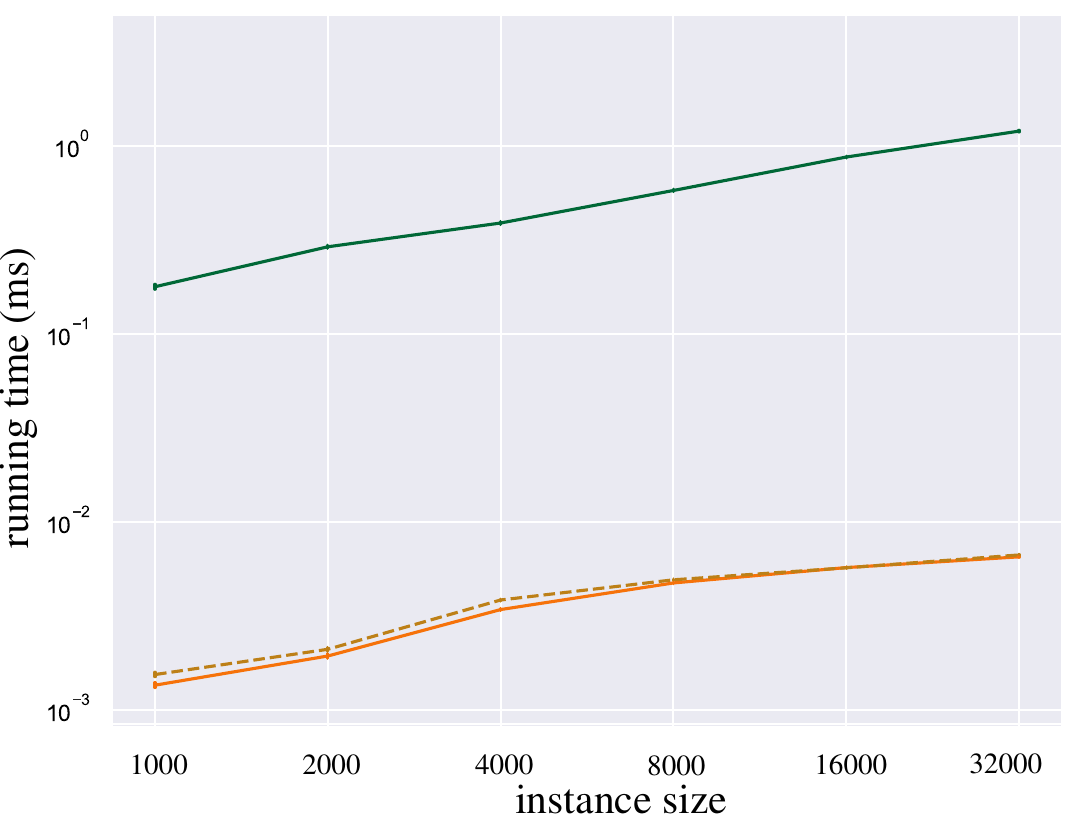}
		\caption{Update times for deletions (uniform)}\label{fig:updatetime-del-rect}
	\end{subfigure}
	
	\begin{subfigure}[t]{0.5\textwidth}
		\centering
		\includegraphics[width=\textwidth]
		{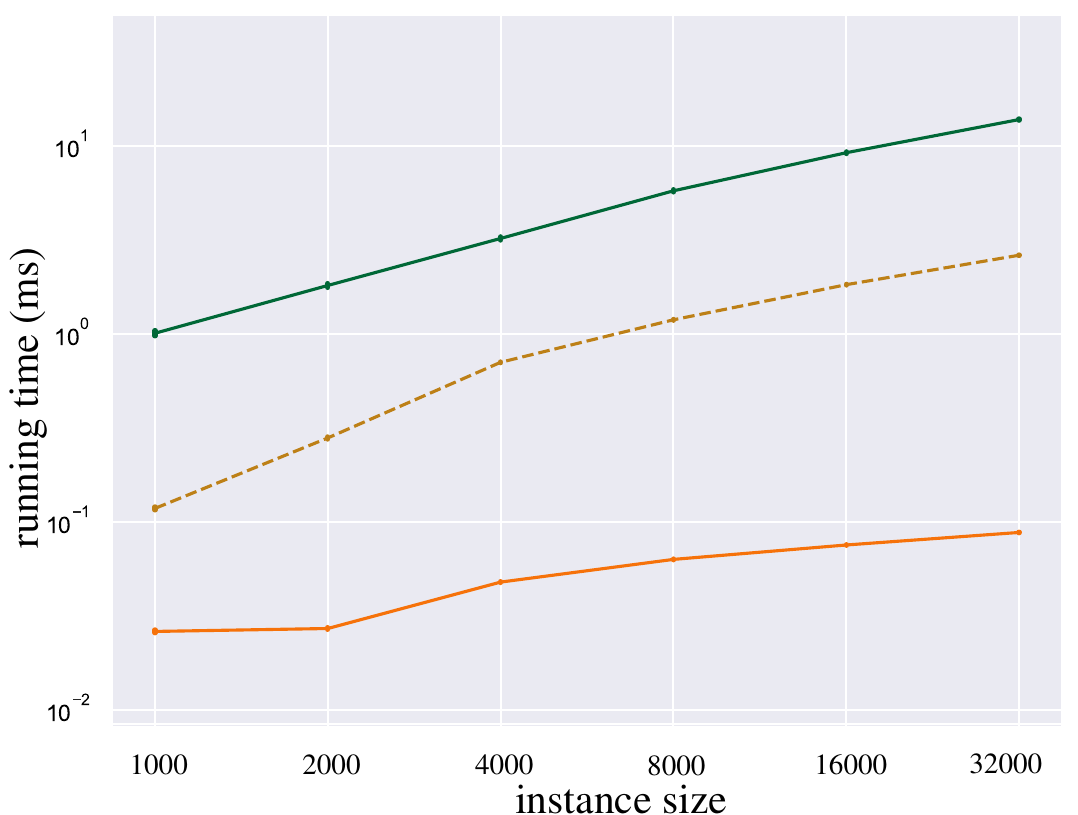}
		\caption{Re-computation times for insertions (uniform)}\label{fig:re-time-ins-rect}
	\end{subfigure}%
	\hfill
	\begin{subfigure}[t]{0.5\textwidth}
		\centering
		\includegraphics[width=\textwidth]
		{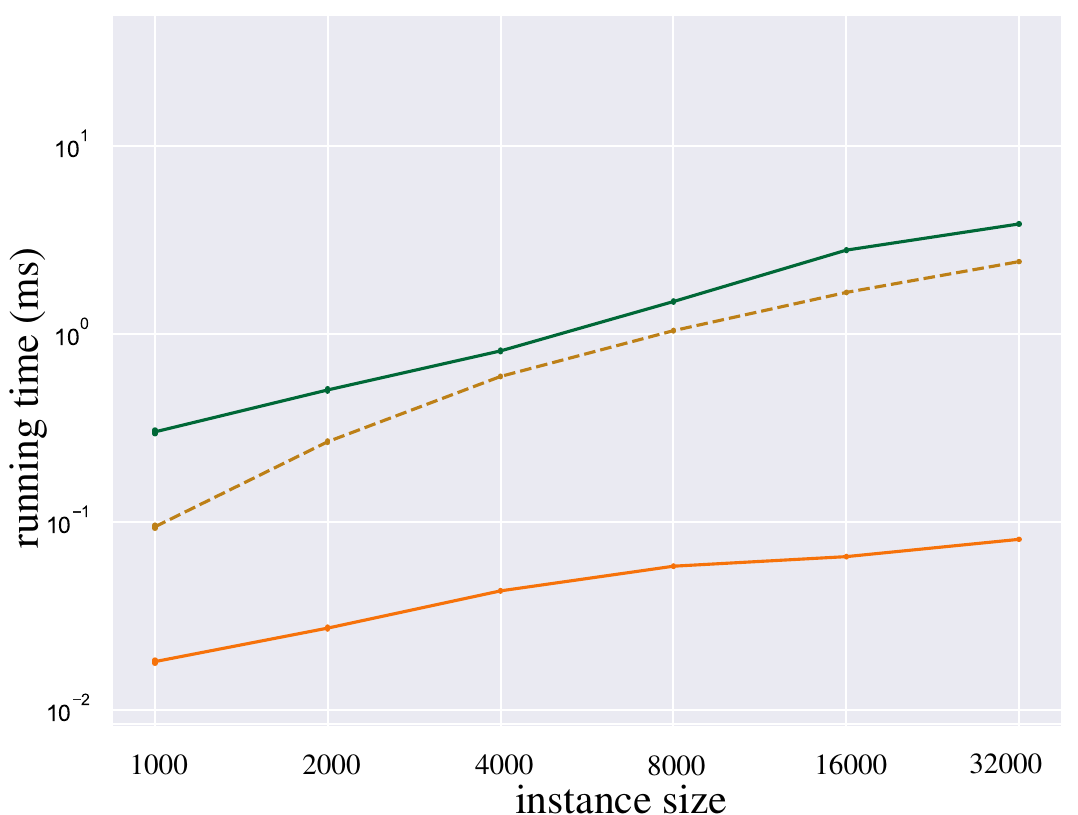}
		\caption{Re-computation times for deletions (uniform)}\label{fig:re-time-del-rect}
	\end{subfigure}
	\caption{Log-log runtime plots (notice the different y-offsets) for dynamic updates and re-computation on uniform-height rectangle instances of size $n= 1\,000$ to $32\,000$, averaged over $n/10$ updates. Error bars indicate the standard deviation.}
	\label{fig:line-plot-all-uniform-rect}
\end{figure}

\section{Conclusions}
We investigated the \indset and \maxset problems on dynamic sets of uniform rectangles and uniform-height rectangles from an algorithm engineering perspective, providing both theoretical results for maintaining a \indset or an approximate \maxset and reporting insights from an experimental study. 
Open problems for future work include (i) finding %
 \maxset sublinear-update-time approximation algorithms for dynamic unit squares with approximation ratio better than $2$, (ii) studying similar questions for dynamic disk graphs, and (iii) implementing improvements such as a faster dynamic range searching data structure to speed-up our algorithm \misr.
 Moreover, it would be interesting to design dynamic approximation schemes for \maxset that maintain 
 stability in a solution.
\clearpage

\bibliographystyle{plainurl}%
\bibliography{MIS.bib}

\end{document}